\begin{document}
\bstctlcite{IEEEexample:BSTcontrol}
\renewcommand{\thefootnote}{\normalsize \arabic{footnote}} 	
\newtheorem{theo}{Theorem}
\newtheorem{conj}{Conjecture}
\newtheorem{definition}{Definition}
\newtheorem{prop}{Proposition}
\newtheorem{lem}{Lemma}
\newtheorem{cor}{Corollary}
\newtheorem{remark}{Remark}
\newtheorem{example}{Example}

\newcommand{\pw}{PW_{\Omega}}
\newcommand{\kz}{k\in\mathbb{Z}}
\newcommand{\akz}{\forall k \in \mathbb{Z}}
\newcommand{\tu}{\mathcal{T}_u}
\newcommand{\db}{\bar{\delta}}
\newcommand{\Z}{\mathbb{Z}}
\newcommand{\N}{\mathbb{N}}
\newcommand{\R}{\mathbb{R}}
\newcommand{\LR}{L^2(\mathbb{R})}
\newcommand{\LO}{L^2([-\Omega,\Omega])}
\newcommand{\re}{\mathbb{R}}
\newcommand{\co}{\mathbb{C}}
\newcommand{\cH}{\mathcal{H}}
\newcommand{\Tu}{\mathcal{T}_u}
\newcommand{\F}{\mathcal{F}}
\newcommand{\bs}{\boldsymbol}
\newcommand{\eq}{\triangleq}
\newcommand{\subSz}{\mathcal{K}_N}
\newcommand{\Sz}{\mathbb{S}_N}
\newcommand{\s}{ISI}
\newcommand{\MO}[0]{\mathscr{M}_\lambda}
\newcommand{\fig}[1]{Fig.~\ref{#1}}
\newcommand{\MOh}[0]{\mathscr{M}_{\boldsymbol{\mathsf{H}}}}

\newcommand{\fe}[1]{\left[\kern-0.30em\left[#1  \right]\kern-0.30em\right]}

\newcommand{\vb}[1]{\left\lvert #1 \right\rvert}
\newcommand{\rb}[1]{\left( #1 \right)}
\newcommand{\sqb}[1]{\left[ #1 \right]}
\newcommand{\cb}[1]{\left\lbrace #1 \right\rbrace}
\newcommand{\floor}[1]{\left\lfloor #1 \right\rfloor}
\newcommand{\ceil}[1]{\left\lceil #1 \right\rceil}

\newcommand\ab[1]			{{\color{red}#1}}
\newcommand\bb[1]			{{\color{blue}#1}}
\newcommand\ETP[1]         {\mathrm{E}_{p}\left(#1\right)}

\newcommand\ETPP[2]         {\mathrm{E}_{#1}\left(#2\right)}

\renewcommand\tilde{\widetilde}

\DeclarePairedDelimiter{\norm}{\Vert}{\Vert}
\DeclarePairedDelimiter{\abs}{\left|}{\right|}
\DeclarePairedDelimiter{\Prod}{\langle}{\rangle}

\newcommand{\PW}[1]{\mathsf{PW}_{#1}}
\newcommand{\iPW}[2]{#1 \in \mathsf{PW}_{#2}}

\renewcommand\geq\geqslant \renewcommand\leq\leqslant
\newcommand{\const}[1]{#1}

\def\th{\theta}

\def\thbar{\theta_\beta}

\def\ind{{\color{red}{\pmb{1}}}}

\def\ind{\mathbbmtt{1}}

\def\figmode    {1}         
\def\PH         {0}         
\def\stabilization	{0}
\def\True		{1}

\title{The Surprising Benefits of Hysteresis in Unlimited Sampling: Theory, Algorithms and Experiments}

\author{Dorian~Florescu, 
	Felix~Krahmer
	and~Ayush~Bhandari
	\thanks{This work is supported by the UK Research and Innovation council's \emph{Future Leaders Fellowship} program ``Sensing Beyond Barriers'' (MRC Fellowship award no.~MR/S034897/1).}
	\thanks{D.~Florescu and A.~Bhandari are with the Dept. of Electrical and Electronic Engineering, Imperial College London, South Kensington, London SW7 2AZ, UK. (Emails: \texttt{d.florescu@imperial.ac.uk} and \texttt{ayush@alum.mit.edu})}

\thanks{F.~Krahmer is with the Dept. of Mathematics, TU Munich, Boltzmannstra{\ss}e 3, 85748 Garching, Germany. (Email: \texttt{felix.krahmer@tum.de})}
}

\markboth{Manuscript}
{Shell \MakeLowercase{\textit{et al.}}: Bare Demo of IEEEtran.cls for IEEE Journals}

\maketitle

\setstretch{1.35}
\begin{abstract}
The Unlimited Sensing Framework (USF) was recently introduced to overcome the sensor saturation bottleneck in conventional digital acquisition systems.
At its core, the USF allows for high-dynamic-range (HDR) signal reconstruction by converting a continuous-time signal into folded, low-dynamic-range (LDR), modulo samples.
HDR reconstruction is then carried out by algorithmic unfolding of the folded samples.
In hardware, however, implementing an ideal modulo folding requires careful calibration, analog design and high precision. 
At the interface of theory and practice, this paper explores a computational sampling strategy that relaxes strict hardware requirements by compensating them via a novel, mathematically guaranteed recovery method.
Our starting point is a generalized model for USF. 
The generalization relies on two new parameters modeling \emph{hysteresis} and \emph{folding transients} in addition to the modulo threshold.
\emph{Hysteresis} accounts for the mismatch between the reset threshold and the amplitude displacement at the folding time and we refer to a continuous transition period in the implementation of a reset as \emph{folding transient}.
Both these effects are motivated by our hardware experiments and also occur in previous, domain-specific applications. 
We show that the effect of hysteresis is beneficial for the USF and we leverage it to derive the first recovery guarantees in the context of our generalized USF model. Additionally, we show how the proposed recovery can be directly generalized for the case of lower sampling rates. Our theoretical work is corroborated by hardware experiments that are based on a hysteresis enabled, modulo ADC testbed comprising off-the-shelf electronic components.
Thus, by capitalizing on a collaboration between hardware and algorithms, our paper enables an end-to-end pipeline for HDR sampling allowing more flexible hardware implementations. 
\end{abstract}

\begin{IEEEkeywords}
Analog-to-digital conversion (ADC), modulo sampling, HDR sensing, Shannon sampling theory, thresholding.
\end{IEEEkeywords}

\IEEEpeerreviewmaketitle

\newpage
\setstretch{1}
\tableofcontents

\newpage
\setstretch{1.35}

\section{Introduction}

The well-known Shannon sampling theory, governing the current analog-to-digital converters (ADCs), shows how a signal of any bandwidth can be acquired by sampling it uniformly at a sufficiently high rate. If the sampling criterion is not satisfied, the input is distorted by a phenomenon known as aliasing. At the same time, ADCs have a limited dynamic range, beyond which the circuit exhibits distortions leading to saturated samples. {Since the ADCs are at front end of data acquisition systems, signal saturation or clipping poses a fundamental bottleneck in the sampling pipeline. To overcome this barrier,} recently, the Unlimited Sensing Framework (USF) was conceptualized \cite{Bhandari:2017,Bhandari:2020:P,Bhandari:2020a}. This alternative procedure based on a modulo architecture allows for  recovery of signals with amplitudes beyond the dynamic range. A modulo-ADC based hardware was recently reported in \cite{Bhandari:2021:J} and experiments based on this hardware indeed validated the practical capability of the USF approach. 

\subsection{Overview of the Unlimited Sensing Framework}

The USF enables the co-design of hardware and algorithms for high-dynamic-range (HDR) signal recovery.
This is done via an encoding which is consistent with `folding', known as \emph{modulo folding} that resets the output $z\rb{t}$ in the analog domain every time it crosses one of the thresholds $\pm\lambda$, thus ensuring that $z\rb{t}\in\sqb{-\lambda,\lambda}$. Subsequently the output is sampled uniformly with period $T$.

To this end, the Unlimited Sampling Theorem and its variants for different signal classes \cite{Bhandari:2018,Bhandari:2018a,Bhandari:2020:C}, were the first recovery guarantees akin to Shannon's sampling theorem adapted to modulo samples. The key takeaway is that different signal classes can be recovered from modulo samples without requiring any side-information such as residuals, or knowledge of a finite set of samples unaffected by the modulo non-linearity. Among the previous approaches an attempt is made in \cite{Rieger:2008:J} for modest folding using a variant of Itoh's method.
In particular, this is in stark contrast to self-reset ADC based CMOS imagers \cite{Sasagawa:2015:J}, where the authors proceed by separating the input into folded data and the residual signal (typically referred to as the reset count), but require that both of them are being recorded in order to allow for reconstruction. 

A central result in USF is that an input signal $g\rb{t}$ of arbitrary amplitude, bandlimited to $\Omega\ \mathrm{rad/s}$, can be recovered from modulo measurements $z\rb{kT}$ if $T<{1}/{(2\Omega e)}$. {In the noisy scenario the recovery is still possible but the amplitude of $g\rb{t}$ is restricted \cite{Bhandari:2020a}.} The original recovery approach, which we will refer to as Unlimited Sampling Algorithm ($\mathsf{USAlg}$), convolves the data with a finite difference filter, and assumes that the modulo folding duration is $0$. In this way, the input can be written as $g\rb{t}=z\rb{t}+\varepsilon_g\rb{t}$, where $\varepsilon_g\rb{t}$ is a residual function spanning a discrete set of values $\varepsilon_g\rb{t}\in 2\lambda \mathbb{Z}$. Given that the finite differences of the filter lie on a uniform grid, $\mathsf{USAlg}$ annihilates the effect of $\varepsilon_g\rb{t}$ via an additional modulo operation. 
For a hardware validation of $\mathsf{USAlg}$, we refer to \cite{Bhandari:2021:J}.

\subsection{The Need for a Generalized Model for Modulo Sampling}
\label{subsect:motivation}
 
In this paper, we advocate a generalized model for a modulo sampling architecture. The generalization arises by introducing two new degrees-of-freedom in addition to the modulo threshold $\rb{\lambda}$, namely, \emph{hysteresis} $\rb{h}$ and \emph{folding transient} $\rb{\alpha}$. While the mathematical model will be made explicit in the sequel, here we clarify the role of these model parameters and the underlying motivation. 

Electronic hardware is typically designed to a technical specification. The more stringent the design constraints, the more expensive is the resulting hardware. This is because the electronic hardware is the front-end of all acquisition systems and {any loss of information} during data acquisition would compromise the algorithmic capability. Our work is guided by the \emph{computational sensing} philosophy where the core idea is to compensate imprecise hardware by sophisticated algorithms that can account for the model mismatch in cost-effective hardware implementations. In the context of our work on unlimited sampling, this is  achieved by allowing for non-negligible \emph{hysteresis} and \emph{folding transients} and incorporating them in a refined model.
Our approach is motivated by phenomenological observations both in our hardware implementation and in certain related works. In the following we explain those concepts in detail, and then discuss the phenomenological observations. 

\begin{figure}[!t]
	\centerline{\includegraphics[width=.55\textwidth]{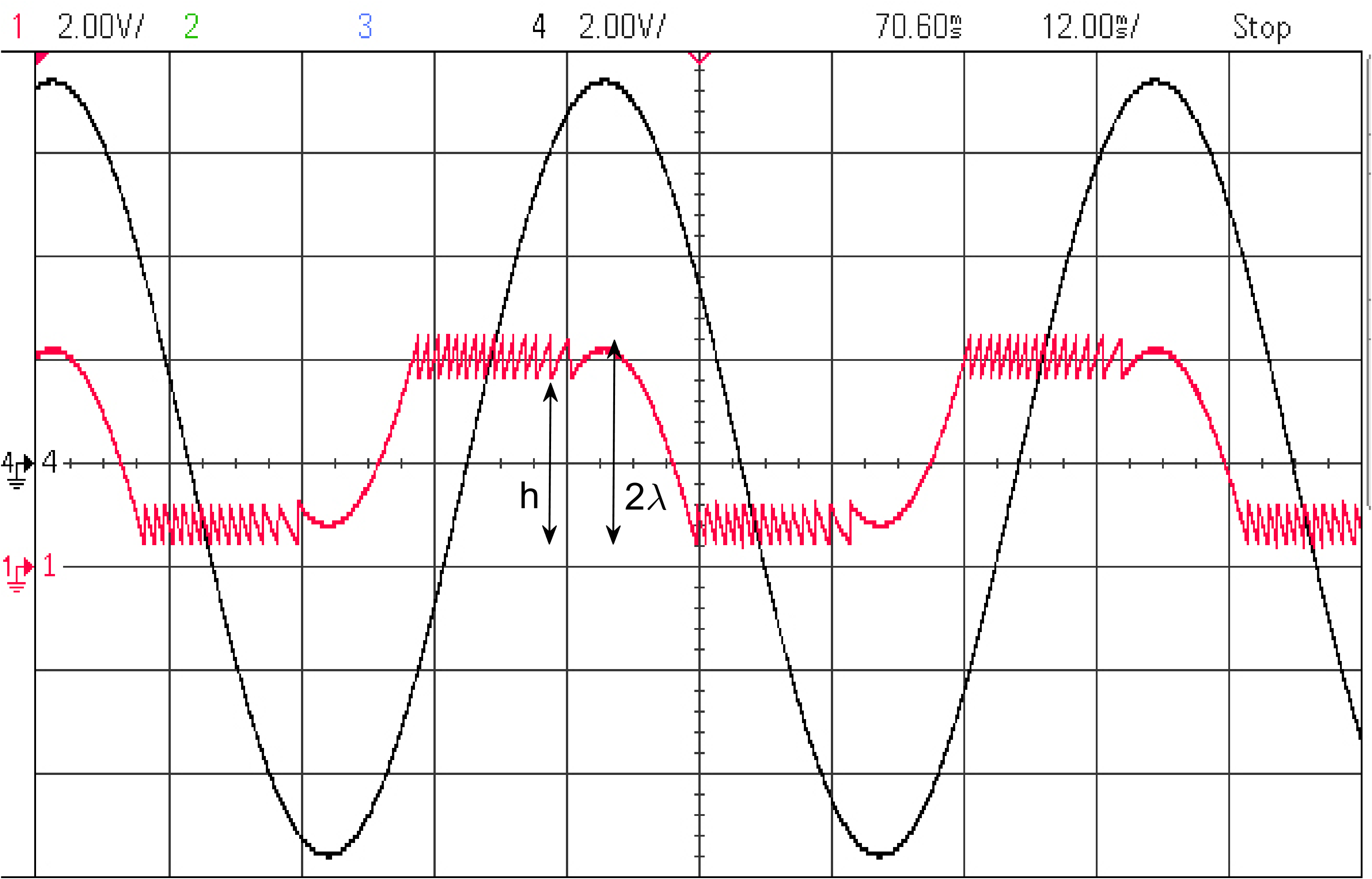}}
	\caption{Oscilloscope screenshot of modulo samples (red) generated by our hardware prototype with threshold $\lambda$ showing the effect of hysteresis quantified by parameter $h$, which we will exploit in this paper.
	A live {YouTube} demonstration with a different values of $h$ is available at \href{https://youtu.be/R4rji5jOjD8}{\texttt{https://youtu.be/R4rji5jOjD8}}.}
	\label{fig:measured_data}
\end{figure}
\begin{figure}[t!]
	\if\figmode\PH
	\centerline{\includegraphics[width=.5\columnwidth]{PH_c.jpg}}
	\else
	\centerline{\includegraphics[width=0.5\columnwidth]{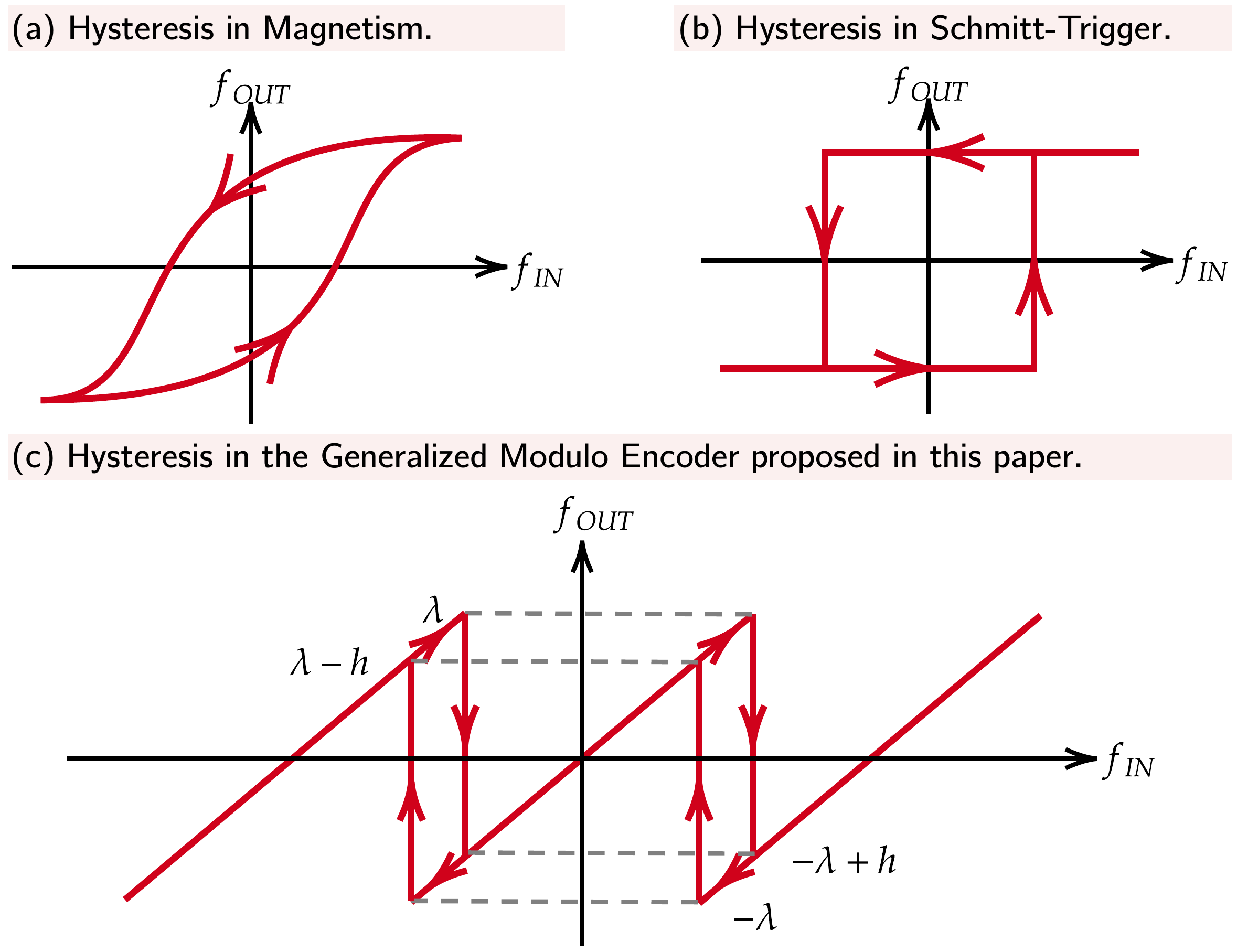}}
	\fi
	\caption{ {Plots of the hysteresis transfer functions observed in a range of applications. (a) The hysteresis curve in magnetism. (b) A Schmitt trigger exhibiting hysteresis. (c) Hysteresis in the context of our \emph{generalized model} for modulo sampling.}}
	\label{fig:hysteresis}
\end{figure}

\medskip

\begin{enumerate}[label = $\bullet$]
\itemsep 10pt
\item {\bf Hysteresis.} For approaches in \cite{Bhandari:2020a,Bhandari:2017} to work, one needs to ensure that the reset is hardwired so that folding kicks-in at $\pm \lambda$ and shifts the signal to the opposite threshold $\mp \lambda$. For this to be accurately implemented in hardware, one requires precise calibration and prudent choice of components. 
If the reset threshold and post-reset value are not perfectly aligned, the measurements will no longer follow an exact modulo model. As we will argue, the model mismatch endows the encoder with memory, where the output at a given time depends on the history of the input. We will call this memory effect \emph{hysteresis}\footnote{The name is inspired from the duality with the hysteresis phenomenon in ADC conversion, see below for more details.} and quantify it by a parameter $h$ (see \fig{fig:measured_data} and \fig{fig:hysteresis}).  The algorithms presented in this paper adapt to this effect, and hence allow to relax the calibration step 
giving rise to more flexible hardware design options.

\item {\bf Folding Transients.}
When implementing modulo non-linearity using electronic components such as comparators, physical non-idealities due to \emph{slew-rate} play a role; the implication is that exact transitions can not be realized, resulting in a non-negligible transition slope. Minimizing the slope duration requires strict design specifications. The implementation cost can be reduced by allowing for a longer folding transient. To put this idea in perspective, we draw an analogy to the notion of \emph{transition bandwidth} of a filter or the maximal \emph{slew rate} of an electronic component. The smaller the transition bandwidth, or the higher the slew rate, the closer one can get to the ideal model, but the costlier the implementation. Very much analogously, we observed during the evolution of our hardware and experiments, that there is a trade-off between the demanding circuit design necessary to approximate ideal folds and the model mismatch resulting from folding transients in a simpler circuit design. This motivated us to investigate recovery approaches where transients are not just considered as artefacts, but rather incorporated into the model for high performance even for low-complexity implementations.

\item {\bf Hysteresis and Folding Transients in other Works.} In a broader context, specific instances of our generalised model match the observations in a number of other works in different fields. In particular, hardware implementations for both ECG recordings \cite{Rieger:2008:J} and neural recordings \cite{Chen:2014} fold the amplitude back to zero when reaching a certain threshold, which directly corresponds to a hysteresis parameter of $h=\lambda$ in our model. At the same time, both implementations also give rise to folding transients. These are, however, not exploited in a refined model, but treated as artefacts (for example, \cite{Chen:2014} takes a heuristic route of identifying them and removing the corresponding samples from the data set). Hysteresis is a well-known non-ideality in ADC design \cite{Pereira:1999:C,IEEEstd:2017}, particularly acting on the quantization process, which was then exploited in circuits such as the Schmitt trigger, where it is voluntarily added for stabilization purposes. Given that the modulo ADC output can be interpreted as quantization noise \cite{Bhandari:2020:C}, it is affected by a hysteresis phenomenon that is inverse to the hysteresis observed in the quantized signal. That is why hysteresis and folding transients do not represent a significant problem in the case of the self-reset ADCs, because in this case both residuals and the "folded" signal are acquired \cite{Bhandari:2020a} and the deviations from the ideal model, present in both the residual and folded signal, get canceled out when added together. For example the work in \cite{Park:2010:C} uses hysteresis to stabilize the folding mechanism, but it assumes that the residual is known. Generally, by storing the full dynamic range of the residual, self-reset ADCs do not employ full dynamic range recovery.

\end{enumerate}

\subsection{Contributions}
 {In this paper, we formulate and investigate a generalized model for modulo sampling as motivated in the previous subsection. Beyond theoretical developments, we use our customized hardware as a testbed for experimental validation, thus demonstrating that our findings are key for a practical realization of a flexible, \emph{end-to-end} modulo sensing pipeline. More concretely, we
\begin{enumerate}[label = $\bullet$]
    \item precisely formulate an acquisition model accounting for hysteresis and folding transients.
    \item design a reconstruction algorithm adapted to this generalized sampling model.
    \item rigorously derive recovery guarantees for this algorithm under the refined model, including robustness guarantees.
    \item validate the algorithmic performance numerically.
    \item experimentally demonstrate that our method accurately recovers signals for a variety of hysteresis parameters via a modulo hardware testbed based on off-the-shelf components that can implement variable hysteresis.
\end{enumerate}}

\subsection{Related Work}

The conceptualization of the USF triggered a number of follow-up studies. A wider class of measurement scenarios were covered including computed tomography \cite{Bhandari:2020:Ca,Beckmann:2021:J}, sensor array processing \cite{FernandezMenduina:2020:C,FernandezMenduina:2021:J}, and event-driven sampling \cite{Florescu:2021:C}. For bandlimited signals, which are the main focus in our work, the folded measurements are proven to be injective: a bandlimited function is mapped one-to-one to its modulo samples for sampling rates just above the critical Nyquist rate \cite{Bhandari:2019a, Romanov:2019:J}. The USF methodology was extended to specific classes of inputs, such as sparse signals or sinusoidal mixtures \cite{Bhandari:2018,Bhandari:2018a}.  {Reconstruction in the context of compressed sensing from  modulo measurements was studied in\cite{Musa:2018}; identifiability results were presented recently in \cite{Prasanna:2020}. Sparse recovery from modulo samples limited to two modulo folds was discussed in \cite{Shah:2021:J}.}
Further on, Unlimited Sampling was extended to one-bit uniform samples \cite{Graf:2019}. The USF reconstruction was demonstrated theoretically and numerically for samples corrupted by bounded noise \cite{Bhandari:2020a}, and also for imaging data \cite{Bhandari:2020:C}. A different approach performs denoising directly on modulo data prior to recovery \cite{Cucuringu:2018,Cucuringu:2020:J}. The phase unwrapping problem can recover modulo folded data, but it represents a simplified case of USF recovery \cite{Itoh:1982:J,Ghiglia:1998:J,Choi:2007:J}. {A Fourier domain approach for recovering modulo data with arbitrarily close folding times was recently introduced in \cite{Bhandari:2021:J}. The new approach allows for recovery when the modulo threshold is unknown. However, the associated recovery guarantees assume input periodicity and that the number of folds is known \emph{a priori}. Therefore the work in \cite{Bhandari:2021:J} is complementary to the one in current manuscript. In \fig{fig:hystfig}, we show the positioning of the proposed method among related works.}

\begin{figure}[!t]
\centerline{\includegraphics[width=0.65\columnwidth]{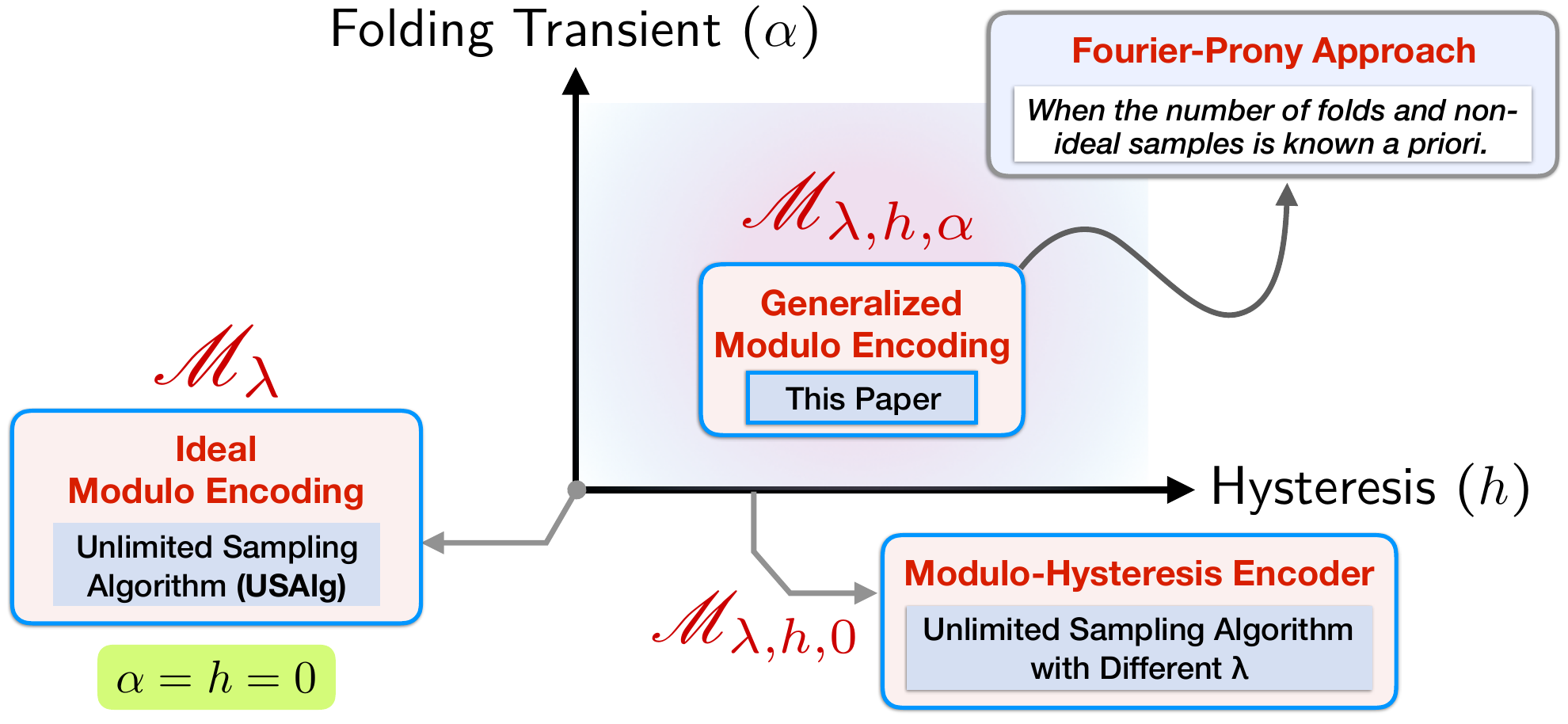}}
\caption{The positioning of the proposed model and recovery approach among related methods.}
\label{fig:hystfig}
\end{figure}

Thresholding is a well-known approach for solving inverse problems \cite{Donoho:1995,Tibshirani:1996, Daubechies:2004:J}. Its practical utility in USF was shown by filtering the data with a kernel such as a B-spline \cite{Graf:2019} or a wavelet \cite{Rudresh:2018}. However, given that the folding times of the modulo encoder can get arbitrarily close for any input, no work so far has been able to provide guarantees in this context (see Section \ref{sect:USF}).

\subsection{Notation}
We use $\mathbb{Z}$ and $\mathbb{R}$ to denote the set of integers and real numbers, respectively, and $\mathbb{N}$ to denote the set of positive integers. Continuous functions are denoted as $f\rb{t}$ and discrete sequences as $f\sqb{k}$. The Hilbert space of square-integrable functions is denoted as $L^2\rb{\mathbb{R}}$, and the corresponding sequence space in discrete time is $\ell^2$. The norm in any space $H$ is denoted as $\norm{f}_H$. For invertible continuous functions $f\rb{t}$, the inverse is denoted as $f^{-1}$ s.t. $f^{-1}\rb{f\rb{t}}=t$. The derivative of order $\const{N}$ is denoted as $f^{\rb{\const{N}}}\rb{t}$ and, for sequences, the finite difference of order $\const{N}$ is $\Delta^\const{N} f \sqb{k}$, which is computed by applying recursively $\Delta^{N+1} f \sqb{k}= \Delta^N f\sqb{k+1}- \Delta^N f\sqb{k}$, where $\Delta^{0} f = f$.
A right-inverse operation of $\Delta^1$ is the cumulative summation 
$\mathsf{S}$ defined as $\mathsf{S}f\sqb{k}=\sum\nolimits_{i=-\infty}^k f\sqb{i}$. In this paper, we will only apply this operation to finitely supported sequences, so summability is not an issue.
The space of square integrable functions bandlimited to $\Omega$ is the Paley-Wiener space denoted $\PW{\Omega}$. The indicator function $\ind_{S}\sqb{k}$ is $1$ for $k\in S$ and $0$ otherwise. The floor and ceiling functions are $\floor{\cdot}$ and $\ceil{\cdot}$, respectively, and $[\![ x ]\!]=x- \lfloor x \rfloor$ denotes the fractional part of $x$. The set $\sqb{a_1,a_2}$ consists of all real numbers between $a_1$ and $a_2$ including $a_1, a_2$. Similarly, $\rb{a_1,a_2}$ is the set $\sqb{a_1,a_2}$ excluding values $a_1$ and $a_2$. Furthermore, let $ \mathrm{supp} \rb{f}$ denote the support of sequence $f\sqb{k}$, let $\vb{S}$ denote the cardinality of a set $S$, and let $\emptyset$ denote the empty set. Whenever possible we use the same notation for a ground truth quantity and the corresponding reconstruction, with a tilde $\sim$ added to the latter, e.g., $\widetilde{f}$ denotes the reconstruction corresponding to a function or sequence $f$. By $f_\infty$ we denote the absolute sequence norm $\norm{f}_{\ell^\infty}$ or the absolute function norm $\norm{f}_{L^\infty}$, respectively. Let $\textsf{MSE}\rb{f_1,f_2}$ denote the mean square error between finite sequences $f_1\sqb{k},f_2\sqb{k}$ of length $\const{N}_f$ defined as, 
$$\textsf{MSE}\rb{f_1,f_2}=\frac{1}{\const{N}_f} \sum\limits_{k=1}^{\const{N}_f} \rb{f_1\sqb{k}-f_2\sqb{k}}^2.$$

\subsection{Organization of the Paper}
 {
The proposed modulo encoder and associated reconstruction framework are given in Section \ref{sect:New_Modulo}. Section \ref{sect:numerical_demonstration} presents numerical results with synthetic data as well as experimental data based on our modulo sampling hardware prototype. The proofs to our main results are in Section \ref{sect:Proofs}. We conclude with a discussion in Section \ref{sect:conclusion}.
}

\section{Modulo with Hysteresis and Transients}
\label{sect:New_Modulo}
Here we start by discussing the ideal modulo encoder and its shortcomings in Section~\ref{sect:USF}.
Section~\ref{sect:prop_mod_desc} then precisely formulates  our new model that incorporates two additional degrees of freedom to quantify hysteresis and folding transients, motivated in Section~\ref{subsect:motivation}.
We discuss the limitations of using $\mathsf{USAlg}$ for this model in Section \ref{sect:hyst_USFrec}. In Section \ref{sect:thresholding}, we present our new recovery method with finite difference filters.

\subsection{The Ideal Modulo Encoder}
\label{sect:USF}
For an input function $g\rb{t}$, the ideal modulo encoder with threshold $\lambda$ is denoted by $\MO$, and defined as
\cite{Bhandari:2017}, \cite{Bhandari:2018}
\begin{equation}
	\MO \rb{g\rb{t}} = 2\lambda \left( {\fe{ {\frac{g\rb{t}}{{2\lambda }} + \frac{1}{2} } } - \frac{1}{2} } \right).
	\label{eq:fold}
\end{equation}
The input is typically assumed to be bandlimited $g \in \PW{\Omega} $. In the noiseless scenario the USF guarantees that the input of the ideal modulo encoder can be recovered from the output samples provided that the sampling period satisfies $T<\frac{1}{2\Omega e}$. Furthermore, recovery guarantees have also been provided in the case of data corrupted by bounded noise \cite{Bhandari:2020a}. The USF approach relies on the observation that $\varepsilon_g \rb{t}\triangleq g\rb{t}-\MO \rb{g\rb{t}}\in 2\lambda\cdot \mathbb{Z}$, where $\varepsilon_g \rb{t}$ is known as the residual function. In other words, for the ideal modulo encoder the values of $\varepsilon_g \rb{t}$ lie on an equally spaced grid with step $2\lambda$. However, this is not true for non-ideal modulo encoders exhibiting phenomena such as folding transients, leading to reconstruction distortions.

 {
A different reconstruction approach for the ideal modulo encoder consists of filtering the modulo output with a high-pass kernel and \emph{thresholding} the filtered signal to detect the folding times. The filtering operation cancels out the bandlimited input and results in high amplitude pulses centered in the folding times \cite{Graf:2019}. Given that the pulses have a non-zero support size, this approach introduces the challenge of ensuring a minimal separation between folding times.} However, for the ideal modulo encoder the input bandwidth $\Omega$ cannot be used directly to infer a minimal separation. For example, $g\rb{t}$ may have a local minimum right below the threshold $\lambda$, which can render two arbitrarily close folding times (\fig{fig:modulo_reset_times1}).

\begin{figure}[!t]
\centerline{\includegraphics[width=.6\textwidth]{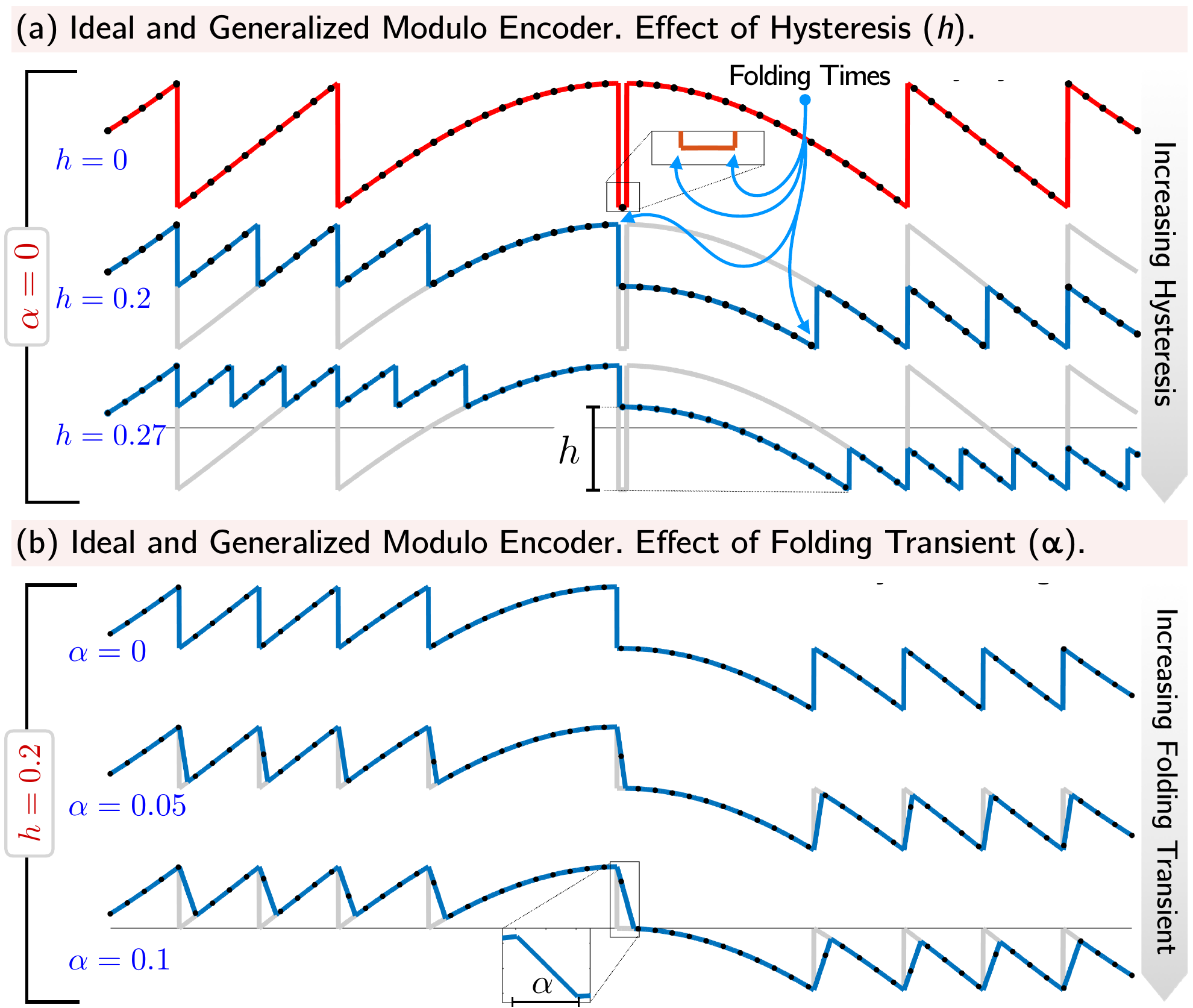}}
\caption{ {Encoding a $\mathrm{sinc}$ function with the ideal and generalized modulo encoder for different values of the hysteresis parameter $h$ (up) and of the transient duration $\alpha$ (down). Observe that the ideal modulo encoder can lead to very close folding times, while the generalized modulo encoder induces a significantly larger separation between folding times.}}
\label{fig:modulo_reset_times1}
\end{figure}

The work in \cite{Rudresh:2018} uses Lasso to detect the folding times, which is closely linked to thresholding, but does not introduce theoretical guarantees. In \cite{Graf:2019} the authors denoted the shortest interval in which a bandlimited function can cross a threshold twice as \emph{superoscillation parameter}. However, this is not possible to compute in the general case where only the bandwidth and maximum amplitude are known. None of the thresholding based reconstructions for the ideal modulo encoder were tested on examples where consecutive folding times are very close, as in \fig{fig:modulo_reset_times1}.

\subsection{The Proposed Generalized Modulo Encoder}
\label{sect:prop_mod_desc}
We now formally define the generalized modulo encoder.
\begin{definition}[Generalized Modulo Encoder]The analog modulo encoder with threshold $\lambda$, hysteresis parameter $h\in\left[0,2\lambda\right)$ and transient parameter $\alpha$, is an operator $\MOh:\PW{\Omega} \rightarrow L^2\rb{\mathbb{R}}$, $\boldsymbol{\mathsf{H}}=\sqb{\lambda,h,\alpha}$, that generates an asynchronous sequence $\cb{\tau_p}$ and an analog function $z \rb{t}=\MOh g \rb{t},$ $ t\geq \tau_0$ in response to input $g$. The sequence $\cb{\tau_p}$ is computed recursively as
	\begin{align*}
		\tau_1&=\min\cb{t > \tau_0 \vert \MO\rb{g\rb{t}+\lambda}= 0}\\
		\tau_{p+1}&=\min\cb{t> \tau_p \vert \MO\rb{g\rb{t}-g\rb{\tau_p}+h s_p}=0},
	\end{align*}
	where $s_p= \mathrm{sign} \rb{g\rb{\tau_p}-g\rb{\tau_{p-1}}}, p\geq1$.
	The output function $z \rb{t}$ is then given by 
	\begin{equation}
		\label{eq:zt}
		z \rb{t}=g\rb{t}-\varepsilon_g\rb{t},
	\end{equation}
	where 
$$\varepsilon_g\rb{t}=\sum\limits_{p\in\mathbb{Z}} s_p \varepsilon_0\rb{t-\tau_p},\ t\in\mathbb{R}$$ with $$\varepsilon_0\rb{t}=2\lambda_h \sqb{ \rb{1/\alpha} t\cdot \ind_{\left[0,\alpha\right)}\rb{t}+ \ind_{\left[\alpha,\infty\right)}\rb{t}} \quad \mbox{and} \quad \lambda_h=\lambda-h/2.$$
	\label{def:cont_modulo}
\end{definition}

The output of the proposed encoder is depicted in \fig{fig:modulo_reset_times1} for a bandlimited input. The hysteresis parameter $h$, defining the post-reset values $\cb{\lambda-h, -\lambda+h}$, ensures a minimal amplitude change is required to trigger a new reset.  As we will establish in Lemma \ref{lem:separation} below, this determines a minimal separation between the folding times that is not satisfied by the ideal modulo encoder (\fig{fig:modulo_reset_times1}). To ensure a sufficient separation, we thus suggest a component/parameter choice that avoids the unlikely scenario that $h$ is very small. Given the wide range of feasible parameter choices, this will be a rather simple task, in  contrast to the fine calibration to predetermined values required in typical self-reset or modulo ADCs.

In Definition \ref{def:cont_modulo}, $\varepsilon_0\rb{t}$ models the folding transient as a line with gradient $1/\alpha$. This ensures that $\varepsilon_g\rb{t}$ is mathematically continuous, unlike for the ideal modulo, which is in accordance with the maximal slew rate requirement. Furthermore, the hysteresis parameter $h$ controls the distance between each pair of modulo thresholds $\cb{\lambda, \lambda-h}, \cb{-\lambda, -\lambda+h}$. We note that $\MOh$ is not a  {static nonlinear system} like is the case for the ideal modulo $\MO$. Therefore the value of $\MOh g\rb{t}$ does not depend only on $g\rb{t}$, but on its whole history $g\rb{s}, s\leq t$.  {Moreover, for the purposes of avoiding clipping and saturation, the input $g\rb{t}$ to the encoder is an analog continuous signal rather than a discrete sequence of samples.}

The output of $\MOh g\rb{t}$ is subsequently passed through an ADC with sampling period $T$  {whose dynamic range is $\sqb{-\lambda,\lambda}$.} We denote the input samples by $\gamma\sqb{k}=g\rb{kT}$ and the output samples by $y\sqb{k}=\MOh g\rb{kT} $. Furthermore, we denote $\varepsilon_\gamma\sqb{k}=\varepsilon_g\rb{kT}$. We assume that $T\geq \alpha$, which guarantees a maximum of one sample located on the transient. This is satisfied in experimental settings, even for sampling periods much smaller than the USF requirement $T\ll\frac{1}{2\Omega e}$.
Here we consider that the output samples are corrupted by bounded noise, i.e., $y_\eta\sqb{k}=y\sqb{k}+\eta\sqb{k}$, where $\eta\sqb{k}$ is the noise sequence satisfying $|\eta\sqb{k}|\leq \eta_\infty$. It is easy to show that, if $h, \alpha \rightarrow 0$ we have $y_\eta\sqb{k}=\MO\rb{\gamma\sqb{k}}+\eta\sqb{k}$ \cite{Bhandari:2017,Bhandari:2018}, which proves that the proposed model is backward compatible with the ideal modulo encoder. 

To quantify the separation between the folds, we define the \textit{discrete folding times} $n_p$ as the integer values satisfying ${n_p} = \left\lceil {{\tau _p}/T} \right\rceil $. 
We then obtain the following estimate for the same.

\begin{lem}[Minimum Separation between Folds]
\label{lem:separation}
Let $y_\eta\sqb{k}=\MOh g\rb{kT}+\eta\sqb{k}, h\in[0,2\lambda[$, $g\in\PW{\Omega}$ and $\{n_p\}_{p\in\mathbb{N}}$, such that ${n_p} = \left\lceil {{\tau _p}/T} \right\rceil $. Then the following holds
\begin{equation*}
n_{p+1}-n_p\geq \floor{\frac{h^\ast}{T\Omega g_\infty}}, \quad h^\ast=\min\cb{h,2\lambda-h}.
\end{equation*}
\end{lem}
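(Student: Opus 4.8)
The plan is to first show that the input $g$ must change in amplitude by at least $h^\ast$ between two consecutive folds, and then to turn this amplitude gap into a separation in time by exploiting that $\iPW{g}{\Omega}$. \textbf{Step 1 (amplitude jump between folds).} Since $\MO\rb{x}=0$ precisely when $x\in 2\lambda\mathbb{Z}$, the defining relation $\MO\rb{g\rb{\tau_{p+1}}-g\rb{\tau_p}+h s_p}=0$ is equivalent to the existence of $m\in\mathbb{Z}$ with
\begin{equation*}
g\rb{\tau_{p+1}}-g\rb{\tau_p} = 2\lambda m - h s_p .
\end{equation*}
I would track the continuous map $t\mapsto g\rb{t}-g\rb{\tau_p}+h s_p$, which at $t=\tau_p$ takes the value $h s_p\in\rb{-2\lambda,2\lambda}\setminus\cb{0}$ (for $h>0$), i.e.\ strictly between two adjacent points of the lattice $2\lambda\mathbb{Z}$. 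Because $\tau_{p+1}$ is by definition the \emph{first} instant after $\tau_p$ at which this map meets $2\lambda\mathbb{Z}$, continuity forces it to hit the \emph{nearest} lattice point, namely $0$ or $\pm 2\lambda$. A short case split on $s_p=\pm1$ shows that reaching $0$ requires an amplitude change of $h$, whereas reaching $\pm 2\lambda$ requires $2\lambda-h$, so that
\begin{equation*}
\vb{g\rb{\tau_{p+1}}-g\rb{\tau_p}} \geq \min\cb{h,\,2\lambda-h}=h^\ast
\end{equation*}
(the degenerate case $h=0$ gives $h^\ast=0$, making the bound vacuous).

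\textbf{Step 2 (from amplitude to time).} Since $\iPW{g}{\Omega}$, Bernstein's inequality supplies the pointwise derivative bound $\vb{g'\rb{t}}\leq \Omega g_\infty$. Applying the mean value theorem on $\sqb{\tau_p,\tau_{p+1}}$ and combining with Step~1 yields
\begin{equation*}
h^\ast \leq \vb{g\rb{\tau_{p+1}}-g\rb{\tau_p}} \leq \rb{\tau_{p+1}-\tau_p}\,\Omega g_\infty ,
\end{equation*}
whence $\tau_{p+1}-\tau_p \geq h^\ast/\rb{\Omega g_\infty}$, a lower bound on the separation of the \emph{analog} folding times.

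\textbf{Step 3 (discretization).} It remains to transfer this to the discrete folding times $n_p=\ceil{\tau_p/T}$. I would invoke the elementary inequality $\ceil{a}-\ceil{b}\geq \floor{a-b}$, valid for reals $a\geq b$ because the left-hand side is an integer strictly exceeding $\rb{a-b}-1$. Taking $a=\tau_{p+1}/T$, $b=\tau_p/T$, and using monotonicity of $\floor{\cdot}$ together with Step~2 gives
\begin{equation*}
n_{p+1}-n_p \geq \floor{\frac{\tau_{p+1}-\tau_p}{T}} \geq \floor{\frac{h^\ast}{T\Omega g_\infty}},
\end{equation*}
which is the claim. The step I expect to demand the most care is Step~1: one must argue rigorously that the first crossing of $2\lambda\mathbb{Z}$ lands on an \emph{adjacent} lattice point rather than skipping over $0$ or $\pm 2\lambda$, which is exactly where continuity of $g$ and the minimality built into the definition of $\tau_{p+1}$ are indispensable; the remaining estimates are routine consequences of Bernstein's inequality and properties of $\floor{\cdot}$ and $\ceil{\cdot}$.
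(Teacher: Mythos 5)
Your proposal is correct and follows essentially the same route as the paper: an amplitude gap of at least $h^\ast$ between consecutive folds, converted to a time separation via Bernstein's inequality and the mean value theorem (the paper phrases this as a first-order Taylor expansion), and then the discretization inequality $\ceil{a}-\ceil{b}\geq\floor{a-b}$. Your Step~1 merely spells out, via the lattice/first-crossing argument, the amplitude bound that the paper takes directly from Definition~\ref{def:cont_modulo} without further justification.
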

\begin{proof}
We calculate $g\rb{\tau_{p+1}}$ using the first order Taylor expansion around anchor point $\tau_p$
\begin{equation*}
g\rb{\tau_{p+1}}=g\rb{\tau_{p}}+\rb{\tau_{p+1}-\tau_p}g^{\rb{1}}\rb{\tau_p^*}, \tau_p^*\in\sqb{\tau_p,\tau_{p+1}}. 
\end{equation*}
From Definition \ref{def:cont_modulo} and Bernstein's inequality ${g_\infty^{\left( n \right)} } \leqslant {\Omega}{g_\infty }$,
\begin{equation}
\label{eq:min_dist_analog_folds}
\tau_{p+1}-\tau_p=\vb{\frac{g\rb{\tau_{p+1}}-g\rb{\tau_{p}}}{ g^{\rb{1}}\rb{\tau_p^*}}}\geq\frac{h^\ast}{\Omega g_\infty}.
\end{equation}
Furthermore, a direct calculation yields, 
$\left\lceil {{\tau _{p + 1}}/T} \right\rceil - \left\lceil {{\tau _p}/T} \right\rceil  \geqslant \left\lfloor {\left( {{\tau _{p + 1}}/T} \right) - \left( {{\tau _p}/T} \right)} \right\rfloor  \geqslant \left\lfloor {{h^*}/\left( {T\Omega {g_\infty }} \right)} \right\rfloor$.
\end{proof}

\subsection{Reconstruction via Unlimited Sampling}
\label{sect:hyst_USFrec}
 {When there are no or only negligible folding transients, one can recover $\widetilde{\gamma}\sqb{k}$, which is equal to  ${\gamma}\sqb{k}$ up to an integer multiple of $2\lambda$, from $y_\eta\sqb{k}=\MOh g\rb{kT}+\eta\sqb{k}, h\in[0,2\lambda)$ using a variation of $\mathsf{USAlg}$. }
This can be seen via the following argument: If we sample \eqref{eq:zt} with period $T$ it follows that
\begin{equation}
\label{eq:modulo_samples}
y_\eta\sqb{k}=\gamma_\eta\sqb{k}+\sum\limits_{p\in\Z} s_p \varepsilon_0\rb{kT-\tau_p}.
\end{equation}
Given that $n_p T-\tau_p \in \left[0,T\right)$ and thus $\varepsilon_0\rb{kT-\tau_p}$ may take any value in interval $\sqb{0,2\lambda_h}$, $\mathsf{USAlg}$ is not directly applicable. However, if $\alpha\rightarrow0$, then $\varepsilon_0\rb{kT-\tau_p}\rightarrow2\lambda_h \ind_{\left[\tau_p,\infty\right)}\rb{kT} \in \cb{0,2\lambda_h}$. Along the same lines as for $\mathsf{USAlg}$, the recovery condition is $\vb{\Delta^\const{N} \gamma\sqb{k} + \Delta^\const{N} \eta\sqb{k}}<\lambda_h, \forall k \in \mathbb{Z}$, which is guaranteed by 
\begin{equation}
\label{eq:gamma_bound_hyst}
\rb{T\Omega e}^\const{N} g_\infty + 2^\const{N} \eta_\infty<\lambda_h.
\end{equation}
Then $\widetilde{\gamma}\sqb{k}$ can be computed using
\begin{equation}
\label{eq:USF_modulo_rec_hyst}
\Delta^\const{N} \gamma_\eta \sqb{k}=\mathscr{M}_{\lambda_h} \rb{\Delta^\const{N} y_\eta\sqb{k}}.
\end{equation}
 {This approach exploits that for negligible folding transient it is unlikely that a sample lies directly on the transient.  For that reason, for non-negligible values of  $\alpha$, $\mathsf{USAlg}$ will no longer produce reasonable results.}
More precisely, if $\alpha>0$, then computing $\mathscr{M}_{\lambda^\ast} \rb{\Delta^\const{N} \MOh g\rb{kT}+\eta\sqb{k}}$ does not annihilate the residual $\varepsilon_\gamma\sqb{k}$ for any $\lambda^\ast$, as opposed to the ideal modulo in Section \ref{sect:USF}. In this case, the samples during the transient period introduce significant distortions, which are further amplified when attempting to integrate iteratively $\Delta^\const{N} \gamma_\eta$ as in $\mathsf{USAlg}$  \cite{Bhandari:2017,Bhandari:2020a}, rendering the reconstruction unstable. 

\subsection{Reconstruction via Thresholding}
\label{sect:thresholding}
Here we employ a thresholding approach to recovery. There have been a number of isolated attempts recovering the input of an ideal modulo encoder based on thresholding. However, as explained in Section \ref{sect:USF}, a major drawback that does not allow introducing recovery guarantees is that the folding times of the ideal USF encoder can be arbitrarily close. Given samples $y_\eta\sqb{k}=\gamma_\eta\sqb{k}+\varepsilon_\gamma\sqb{k}$, the main challenge is to recover both $\gamma_\eta\sqb{k}$ and $ \varepsilon_\gamma\sqb{k}$. For an ideal modulo encoder, the USF approach annihilates the residual $\varepsilon_\gamma\sqb{k}$. Here we convolve the data with a filter $\psi_N\sqb{k}\triangleq\Delta^N\sqb{k+1}$, and then identify the residual by thresholding the samples $\psi_N\ast y_\eta\sqb{k}$, which requires identifying the folding times $\tau_p$ and signs $s_p$ \eqref{eq:modulo_samples}.
We first introduce sufficient conditions under which the detection of the folding times is possible via thresholding. Subsequently, we provide theoretical guarantees for computing $\tau_p$ and $s_p$.
The filtered samples $\psi_N \ast y_\eta$ satisfy
\begin{equation}
	\label{eq:psi_y}
	\psi_N \ast y_\eta\sqb{k}=\psi_N\ast\gamma_\eta\sqb{k} - \psi_N\ast\varepsilon_\gamma\sqb{k}.
\end{equation}
In order to identify the modulo folds via thresholding, $\psi_N$ is designed to enhance the effect of $\varepsilon_\gamma$ in \eqref{eq:psi_y}. The following lemma (proof, Section~\ref{sect:Proofs}) shows that
the non-zero values of $\psi_N\ast\varepsilon_\gamma\sqb{k}$ are centered around the discrete folding times $n_p$.
\begin{lem}[Support of Filtered Residual]
\label{lem:supp_psi_eps}
Let $\psi_N\sqb{k}=\Delta^{N}\sqb{k+1}$ and let 
$\Sz=\bigcup\nolimits_{p\in\Z} {\Sz^p}$, where $\Sz^p=\cb{n_p - \rb{N-1}, \dots, n_p+1}$. Then
\begin{equation}
\label{eq:supp_psi_eps}
\mathrm{supp}\rb{\psi_N\ast\varepsilon_\gamma}\subseteq \mathbb{S}_N.
\end{equation}
\end{lem}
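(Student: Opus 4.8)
The plan is to reduce to a single fold and then exploit that $\psi_N$, being a shift of the $N$-th order finite-difference stencil, annihilates any sequence that is eventually constant. First I would use the definition $\varepsilon_\gamma[k]=\sum_p s_p\,\varepsilon_0(kT-\tau_p)$ together with linearity of the convolution to write $\psi_N\ast\varepsilon_\gamma=\sum_p s_p\,(\psi_N\ast e_p)$ with $e_p[k]\triangleq\varepsilon_0(kT-\tau_p)$. Since $\psi_N$ is finitely supported, each $\psi_N\ast e_p$ is a pointwise-finite sum (summability of $e_p$ is irrelevant), so it suffices to show $\mathrm{supp}(\psi_N\ast e_p)\subseteq\Sz^p$ for every $p$ and then take the union over $p$.

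The central step is to pin down the shape of the single-fold sequence $e_p$. Because $n_p=\lceil\tau_p/T\rceil$, we have $kT<\tau_p$ whenever $k\leq n_p-1$, so $e_p[k]=0$ there; and for $k\geq n_p+1$ we have $kT-\tau_p\geq T\geq\alpha$, so $e_p[k]=2\lambda_h$, the plateau value of $\varepsilon_0$. The hypothesis $T\geq\alpha$ is exactly what forces at most one sample, namely $k=n_p$, to fall on the transient ramp, giving a single intermediate value $e_p[n_p]=v_p\in[0,2\lambda_h]$. I would then record this as the algebraic identity $e_p[k]=2\lambda_h\,\ind_{\{k\geq n_p\}}-(2\lambda_h-v_p)\,\ind_{\{k=n_p\}}$, that is, a clean unit step at $n_p$ corrected by a single impulse at $n_p$.

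Next I would convolve each term with $\psi_N$ separately. For the impulse term, $\psi_N\ast\ind_{\{\cdot=n_p\}}$ is just the stencil translated to $n_p$, hence supported on $N+1$ consecutive indices fixed by the taps of $\psi_N$. For the step term, a single forward difference collapses the step into an impulse, $\Delta^1\ind_{\{\cdot\geq n_p\}}=\ind_{\{\cdot=n_p-1\}}$, after which the remaining $N-1$ differences spread it as $\Delta^{N-1}$ of an impulse over $N$ consecutive indices; the zero prefix and the constant plateau are both killed by the difference operator. Matching these two windows against the $+1$ shift encoded in $\psi_N[k]=\Delta^N[k+1]$ then shows that each of them lands inside $\Sz^p=\{n_p-(N-1),\dots,n_p+1\}$.

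The hard part will be the index bookkeeping in this last matching: one has to track simultaneously the exact tap positions of $\psi_N$ (including its unit shift and the chosen convolution convention), the jump index $n_p-1$ produced by $\Delta^1\ind_{\{\cdot\geq n_p\}}$, and the location $n_p$ of the transient correction impulse, and then verify that neither the step contribution nor the impulse contribution spills a single index beyond the $(N+1)$-element window $\{n_p-(N-1),\dots,n_p+1\}$. Everything else---linearity, the eventual constancy of $e_p$, and the annihilation of constants by $\Delta^N$---is routine once the single-transient-sample structure guaranteed by $T\geq\alpha$ is in hand.
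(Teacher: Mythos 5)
Your proposal is correct and follows essentially the same route as the paper's proof: both reduce to the single-fold residual, use $T\geq\alpha$ and $n_p=\lceil\tau_p/T\rceil$ to see that it vanishes for $k\leq n_p-1$, sits at the plateau $2\lambda_h$ for $k\geq n_p+1$, and takes a single intermediate value at $k=n_p$, after which one finite difference collapses this structure to impulses near $n_p$ and the remaining $N-1$ differences spread them over the $(N+1)$-point window $\Sz^p$. Your step-plus-impulse splitting of $e_p$ is just a reparametrization of the paper's affine combination of two unit steps at $n_p$ and $n_p+1$ (the two are related by writing the impulse at $n_p$ as the difference of those steps), and the index bookkeeping you defer does close, yielding exactly the paper's expression of $\psi_N\ast\varepsilon_\gamma$ as a combination of the kernels $d_{n_p}^{N-1}$ and $d_{n_p+1}^{N-1}$ supported in $\Sz^p$.
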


\subsubsection{The Case of a Single Fold}
For simplicity, we first assume a single folding time $\tau_1>0$, such that $\Sz=\Sz^1=\cb{n_1-\rb{N-1},\dots,n_1 +1}$. 
The key idea is to exploit smoothness to obtain that 
\begin{equation}
\label{eq:condIa}
\norm{\psi_N\ast \gamma_\eta}_{\ell^\infty}< \frac{\lambda_h}{2N},
\end{equation}
and to observe that far away from the jump, this condition will also hold for $y_\eta$ in place of $\gamma_\eta$. That is, the set 
\begin{equation}
\mathbb{M}_N\triangleq\cb{k\in\Z\ \vert \ \vb{\psi_N\ast y_\eta\sqb{k}} \geq \frac{\lambda_h}{2N}}\label{eq:MN}
\end{equation}
is localized around the jump and can be used to identify the jump location up to a minimal ambiguity. The following lemma (proved in Section \ref{sect:Proofs}) makes this precise. 
\begin{lem}
\label{lem:k1_k2}
Assume \eqref{eq:condIa} is satisfied, and let 
$\subSz=\cb{k_m, k_M}$,
\begin{equation}\label{eq:k1_k2}
k_m=\min \mathbb{M}_N, \ \  k_M=\max \mathbb{M}_N.
\end{equation}
Then $\subSz\subseteq\Sz$, $k_m\in\cb{n_1-N+1,n_1-N+2}$, $k_M\in\cb{n_1 ,n_1 +1}$, and for $\mathbb{I}_1=\sqb{\tau_1+ \frac{\alpha}{2N},\tau_1+\alpha \rb{1-\frac{1}{2N}}}$,
\begin{align}
n_1 T&\in \mathbb{I}_1 \Rightarrow k_M-k_m=\const{N},	\label{eq:mid_transient}\\
n_1 T&\in \left[\tau_1,\tau_1+\alpha\right) \setminus  \mathbb{I}_1  \Rightarrow k_M-k_m\in\cb{\const{N}-1,\const{N}}\label{eq:start_end_transient},\\ 
n_1 T&\geq \tau_1+\alpha \Rightarrow k_M-k_m=N-1.\label{eq:after_transient}
\end{align}
\end{lem}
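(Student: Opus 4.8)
The plan is to reduce everything to the shape of the sampled residual near the single fold, and then to track how the finite-difference filter smears that shape across $\Sz$. Since there is exactly one fold, $\varepsilon_\gamma\sqb{k}=s_1\varepsilon_0\rb{kT-\tau_1}$; using $n_1=\ceil{\tau_1/T}$ together with the standing assumption $T\geq\alpha$, at most one sample can land on the transient, so $\varepsilon_\gamma$ is, up to the sign $s_1$ and the scale $2\lambda_h$, a discrete step that rises from $0$ to $1$ at $n_1$ with a single intermediate value $\beta\triangleq\min\cb{1,\rb{n_1T-\tau_1}/\alpha}\in\sqb{0,1}$ at $k=n_1$. I would first record the two elementary equivalences $n_1T\geq\tau_1+\alpha\Leftrightarrow\beta=1$ and $n_1T\in\mathbb{I}_1\Leftrightarrow\beta\in\sqb{\tfrac{1}{2N},1-\tfrac{1}{2N}}$, which turn the trichotomy of the statement into a trichotomy in $\beta$. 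The inclusion $\subSz\subseteq\Sz$ is then immediate: by \eqref{eq:psi_y}, off $\Sz$ the filtered residual vanishes (Lemma \ref{lem:supp_psi_eps}), so there $\psi_N\ast y_\eta=\psi_N\ast\gamma_\eta$ has magnitude below $\lambda_h/(2N)$ by \eqref{eq:condIa}; hence $\mathbb{M}_N\subseteq\Sz$ and in particular $k_m,k_M\in\Sz$.

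The technical core is to evaluate $D\sqb{k}\triangleq\psi_N\ast\varepsilon_\gamma\sqb{k}$ at the four boundary indices of $\Sz$. Writing the step as a shifted unit step minus $(1-\beta)$ times a unit sample at $n_1$, and invoking the closed form for partial binomial sums, I expect the magnitudes $\vb{D\sqb{n_1-N+1}}=2\lambda_h\beta$, $\vb{D\sqb{n_1-N+2}}=2\lambda_h\vb{1-N\beta}$, $\vb{D\sqb{n_1}}=2\lambda_h\vb{N\rb{1-\beta}-1}$ and $\vb{D\sqb{n_1+1}}=2\lambda_h\rb{1-\beta}$ (the global sign being irrelevant since only $\vb{\psi_N\ast y_\eta}$ enters $\mathbb{M}_N$). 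I would highlight the left/right symmetry under $\beta\mapsto1-\beta$, which halves the remaining case work.

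Next I translate magnitudes into membership of $\mathbb{M}_N$. By \eqref{eq:condIa} and the reverse triangle inequality, $\vb{D\sqb{k}}\geq\lambda_h/N$ forces $k\in\mathbb{M}_N$, whereas $\vb{D\sqb{k}}=0$ forces $k\notin\mathbb{M}_N$. On the left: if $\beta\geq\tfrac{1}{2N}$ then $\vb{D\sqb{n_1-N+1}}\geq\lambda_h/N$, so the leftmost index of $\Sz$ is detected and $k_m=n_1-N+1$; if instead $\beta<\tfrac{1}{2N}$ then $\vb{D\sqb{n_1-N+2}}=2\lambda_h\rb{1-N\beta}>\lambda_h$, so the second-left index is detected and $k_m\leq n_1-N+2$. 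Either way $k_m\in\cb{n_1-N+1,n_1-N+2}$. The symmetric argument (via $\beta\mapsto1-\beta$) yields $k_M\in\cb{n_1,n_1+1}$, and $\mathbb{M}_N\subseteq\Sz$ prevents either extreme from wandering further.

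Finally I close the three gap cases. If $n_1T\in\mathbb{I}_1$ then $\beta\in\sqb{\tfrac{1}{2N},1-\tfrac{1}{2N}}$, so both outermost indices are detected, forcing $k_m=n_1-N+1$ and $k_M=n_1+1$, hence $k_M-k_m=N$, which is \eqref{eq:mid_transient}. If $\beta=1$ then $\vb{D\sqb{n_1+1}}=0$ while $\vb{D\sqb{n_1}}=2\lambda_h$ and $\vb{D\sqb{n_1-N+1}}=2\lambda_h$, so $k_M=n_1$, $k_m=n_1-N+1$ and $k_M-k_m=N-1$, which is \eqref{eq:after_transient}. The remaining regime $\beta\in\sqb{0,\tfrac{1}{2N}}\cup\rb{1-\tfrac{1}{2N},1}$ is exactly $n_1T\in\left[\tau_1,\tau_1+\alpha\right)\setminus\mathbb{I}_1$; there one outer index may fail detection while the adjacent inner index is guaranteed detected and the opposite outer index is detected, pinning $k_M-k_m\in\cb{N-1,N}$, which is \eqref{eq:start_end_transient}. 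The main obstacle I anticipate is the exact evaluation of the four boundary values of $D$ with their correct signs, and the verification that the detection thresholds fall precisely at $\beta=\tfrac{1}{2N}$ and $\beta=1-\tfrac{1}{2N}$: it is this coincidence that makes the subinterval $\mathbb{I}_1$ — and therefore the trichotomy — emerge, after which the case analysis is routine.
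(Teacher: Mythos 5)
Your proof is correct and takes essentially the same approach as the paper's: both reduce to the single intermediate sample value $\beta_1$ of the residual step, evaluate $\psi_N\ast\varepsilon_\gamma$ at the boundary indices of $\Sz$ (obtaining the magnitudes $2\lambda_h\beta_1$, $2\lambda_h\vb{1-N\beta_1}$, $2\lambda_h\vb{N(1-\beta_1)-1}$, $2\lambda_h(1-\beta_1)$), convert these into membership in $\mathbb{M}_N$ via \eqref{eq:condIa} and the triangle inequality, and run the same trichotomy in $\beta_1$ at the thresholds $\tfrac{1}{2N}$ and $1-\tfrac{1}{2N}$. The only differences are cosmetic: your explicit $\beta\mapsto 1-\beta$ symmetry replaces the paper's ``in direct analogy'' step, and your bracket $\beta\in\sqb{0,\tfrac{1}{2N}}$ for the regime of \eqref{eq:start_end_transient} should be half-open, a slip with no consequence since the overlapping case is consistent.
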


The following theorem (proved in Section \ref{sect:Proofs}) resolves the remaining ambiguity and also gives a means to identify the direction of the fold, i.e., the sign $s_1$ in Definition~\ref{def:cont_modulo}.

\begin{theo}[Estimation of Folding Times]
\label{theo:s1_tau1}
Assume that $\norm{\psi_N\ast \gamma_\eta}_{\ell^\infty}<{\lambda_h}/\rb{2N}$ and let $k_m,\ k_M\in \Z$ defined as in \eqref{eq:k1_k2}. Furthermore, let $\widetilde{n}_1$ and $\widetilde{s}_1$ be the estimations of the discrete folding time and sign defined as
\begin{align}
\label{eq:n_1}
\widetilde{n}_1&=k_M-1,\\
\label{eq:s_1}
\widetilde{s}_1&=-\mathrm{sign} \rb{\psi_N\ast y_\eta\sqb{k_m}}.
\end{align}
\begin{enumerate}[leftmargin = *, label = $\alph*)$]
\item  If $k_M-k_m=\const{N}$, consider the estimated folding time 
\end{enumerate}
\begin{equation}
\label{eq:folding_seq}
\widetilde{\tau}_1\triangleq\widetilde{n}_1 T-\alpha\cdot\frac{\psi_N\ast y_\eta\sqb{\widetilde{n}_1}+2\lambda_h \widetilde{s}_1 \rb{N-1}}{2\lambda_h \widetilde{s}_1 N}.
\end{equation}
Then the following hold
\begin{equation}
\label{eq:tau_p_bound}
\widetilde{n}_1=n_1, \quad \widetilde{s}_1=s_1, \quad\vb{\widetilde{\tau}_1-\tau_1} < \frac{\alpha}{4N^2}.
\end{equation}
\begin{enumerate}[leftmargin = *, label = $\alph*)$,start=2]
\item  If $k_M-k_m=\const{N}-1$, consider the estimated folding time $ \widetilde{\tau}_1=\widetilde{n}_1 T$. Then $\widetilde{s}_1=s_1$, and we have two cases
\begin{enumerate}[leftmargin = *, label = $b_\arabic*)$,start=1]
\item $n_1T \in \left[ \tau_1,\tau_1+ \frac{\alpha}{2N}\right)$
\begin{equation}
\label{eq:bound_start_transient}
\widetilde{n}_1=n_1, \quad
\vb{\widetilde{\tau}_1-\tau_1}<\frac{\alpha}{2N}.
\end{equation}
\item $ n_1T > \tau_1+\alpha \rb{1-\frac{1}{2N}}$
\begin{equation}
\label{eq:bound_end_transient}
\widetilde{n}_1=n_1-1, \quad \vb{\widetilde{\tau}_1-\tau_1}<T-\alpha+\frac{\alpha}{2N}.
\end{equation}
\end{enumerate}
\end{enumerate}
\end{theo}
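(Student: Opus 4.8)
The plan is to reduce everything to one explicit computation of $\psi_N\ast\varepsilon_\gamma$ on the support set $\Sz^1$ and then read off the three conclusions. Since $T\geq\alpha$, at most one sample lands on the transient, so for a single fold the sampled residual $\varepsilon_\gamma$ is a two-level staircase: writing $\beta\triangleq\rb{n_1 T-\tau_1}/\alpha\in\left[0,1\right)$ for the fractional position of that sample, there is a jump of height $2\lambda_h s_1\beta$ at $n_1$ followed by a jump of height $2\lambda_h s_1\rb{1-\beta}$ at $n_1+1$ (degenerating to a single clean jump of height $2\lambda_h s_1$ at $n_1$ when $n_1 T\geq\tau_1+\alpha$). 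First I would convolve this with $\psi_N$ using the elementary identity $\sum\nolimits_{j=r}^{N}\rb{-1}^{N-j}\binom{N}{j}=\rb{-1}^{N-r}\binom{N-1}{r-1}$ for the $N$-th difference of a unit step. This gives closed forms for $\psi_N\ast\varepsilon_\gamma\sqb{k}$ at each $k\in\Sz^1$ as affine functions of $\beta$; the three that matter are the left endpoint $k=n_1-N+1$ (magnitude $2\lambda_h\beta$), the right endpoint $k=n_1+1$ (magnitude $2\lambda_h\rb{1-\beta}$), and the value at $k=n_1$, equal to $2\lambda_h s_1\sqb{\rb{N-1}-N\beta}$ up to the fixed sign dictated by $N$.

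Next I would settle the discrete claims $\widetilde n_1$ and $\widetilde s_1$. Lemma~\ref{lem:k1_k2} already localizes $k_m\in\cb{n_1-N+1,n_1-N+2}$, $k_M\in\cb{n_1,n_1+1}$ and ties $k_M-k_m$ to the transient regime; the remaining task is to resolve which endpoint survives the threshold test \eqref{eq:MN}. Since \eqref{eq:condIa} forces $\vb{\psi_N\ast\gamma_\eta}<\lambda_h/\rb{2N}$ at every index, the reverse triangle inequality applied to \eqref{eq:psi_y} shows the left endpoint is retained once $2\lambda_h\beta-\lambda_h/\rb{2N}\geq\lambda_h/\rb{2N}$, i.e. $\beta\geq1/\rb{2N}$, and the right endpoint once $\beta\leq1-1/\rb{2N}$ — precisely the inequalities defining $\mathbb{I}_1$. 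Hence when the observed gap is $k_M-k_m=N$ (case a) both endpoints are present, forcing $k_M=n_1+1$ and $\widetilde n_1=k_M-1=n_1$; when $k_M-k_m=N-1$, in regime $b_1$ the right endpoint survives so $k_M=n_1+1$ and $\widetilde n_1=n_1$, whereas in regime $b_2$ it does not, giving $k_M=n_1$ and $\widetilde n_1=n_1-1$. The sign is read at $k_m$, where the retained value of $\psi_N\ast\varepsilon_\gamma$ exceeds $\lambda_h/\rb{2N}$ in magnitude and dominates the smooth term, so tracking the sign of the corresponding finite-difference coefficient yields $\widetilde s_1=s_1$ in every case.

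For the continuous estimate in case a) I would substitute $\psi_N\ast y_\eta\sqb{n_1}=\psi_N\ast\gamma_\eta\sqb{n_1}-\psi_N\ast\varepsilon_\gamma\sqb{n_1}$ with the closed form $\psi_N\ast\varepsilon_\gamma\sqb{n_1}=2\lambda_h s_1\sqb{\rb{N-1}-N\beta}$ into \eqref{eq:folding_seq}. The terms carrying $\rb{N-1}$ and the part of the numerator linear in $\beta$ cancel by design, leaving $\widetilde\tau_1=\tau_1-\alpha\,\psi_N\ast\gamma_\eta\sqb{n_1}/\rb{2\lambda_h s_1 N}$; bounding the residual smooth term by \eqref{eq:condIa} gives $\vb{\widetilde\tau_1-\tau_1}<\alpha/\rb{4N^2}$ exactly. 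Case b) is then purely geometric: with $\widetilde\tau_1=\widetilde n_1 T$ one has $\vb{\widetilde\tau_1-\tau_1}=\vb{\widetilde n_1 T-\rb{n_1 T-\alpha\beta}}$, which in $b_1$ equals $\alpha\beta<\alpha/\rb{2N}$ and in $b_2$ equals $T-\alpha\beta<T-\alpha+\alpha/\rb{2N}$ using $\beta>1-1/\rb{2N}$, the after-transient subcase being covered by $\rb{n_1-1}T<\tau_1\leq n_1T-\alpha$.

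I expect the main obstacle to be the careful bookkeeping of signs in the first two steps: the $\rb{-1}^{N+\cdot}$ factors in the partial binomial sums must be tracked so that both the identity $\widetilde s_1=s_1$ and the exact cancellation in \eqref{eq:folding_seq} come out with the correct sign, and one must verify that the adversarial alignment of the $O\rb{\lambda_h/(2N)}$ smooth term does not move an endpoint across the threshold except at the stated boundaries $\beta\in\cb{1/(2N),1-1/(2N)}$, which is exactly what makes the constants in \eqref{eq:mid_transient}--\eqref{eq:after_transient} and in the error bounds tight.
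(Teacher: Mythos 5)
Your proposal is correct and follows essentially the same route as the paper's proof: the same closed form for $\psi_N\ast\varepsilon_\gamma$ as an affine combination of shifted $\Delta^{N-1}$ kernels, the same dominance-at-$k_m$ argument for $\widetilde{s}_1=s_1$, the same cancellation in \eqref{eq:folding_seq} leaving only the smooth term bounded via \eqref{eq:condIa} to get $\alpha/(4N^2)$, and the same geometric bounds in case b). The only difference is presentational: where the paper cites the discrete identifications ($k_m$, $k_M$, $\widetilde{n}_1$) as byproducts of the proof of Lemma~\ref{lem:k1_k2}, you re-derive them via the endpoint-survival analysis, which is the same underlying argument.
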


Theorem \ref{theo:s1_tau1} shows the benefit of the transients for accurate reconstruction of the folding instants. When a sample in the central part of the transient is available, the associated folding location can be identified up to an accuracy of $\tfrac{\alpha}{4 N^2}$, while only an accuracy of order $\tfrac{\alpha}{2 N}$ or even $T-\alpha\tfrac{2N-1}{2N}$ can be achieved when all the sampling locations lie near the edges or outside of the transient. The result always recovers $\tau_1$ with an error strictly smaller than $T$, which is the smallest error that can be guaranteed with $\mathsf{USAlg}$. Accurate folding time estimates then also yield accurate signal reconstruction via the formula

\begin{equation}
\widetilde{\gamma}\sqb{k}=y_\eta\sqb{k}+s_1 \varepsilon_0\rb{kT-\widetilde{\tau}_1}.
\label{eq:input_rec}
\end{equation}

\subsubsection{The Case of Multiple Folds}
Generally there are $P$ folding times $\cb{\tau_p}_{p=1}^P$, where $P$ is unknown. Here, computing $\widetilde{\gamma}\sqb{k}$ can be performed by computing $\widetilde{\tau}_p$ and $\widetilde{s}_p$ with Theorem \ref{theo:s1_tau1} sequentially, if the supports of each two filtered folds are disjoint, guaranteed by $\Sz^{p_1}\cap\Sz^{p_2}=\emptyset, \forall p_1,p_2=[1,P], p_1\neq p_2$. In other words, this requires the separation of the folding times $n_{p+1}-n_p \geq \const{N}+1$, which is guaranteed via Lemma \ref{lem:separation} if
\begin{equation}
\frac{h^\ast}{T\Omega g_\infty} \geq \const{N}+1.
\label{eq:separation} 
\end{equation}
Therefore conditions \eqref{eq:condIa} and \eqref{eq:separation} are sufficient to guarantee the recovery of $s_p,\tau_p, p\in [1,P]$ sequentially, using Theorem \ref{theo:s1_tau1} as follows. Let $\subSz=\bigcup\nolimits_{p \in \left[ {1,P} \right]} {\cb{k_m^p, k_M^p}}$, where the pairs $\cb{k_m^p,k_M^p}$ are computed sequentially as follows. For $p=1$,
\begin{align}
\label{eq:k_m1_k_M1}
\begin{split}
k_m^1&=\min\cb{k\ \vert \ \vb{\psi_N\ast y_\eta\sqb{k}} \geq \frac{\lambda_h}{2N}},\\
k_M^1&=\max\cb{k\leq k_m^1+\const{N} \ \vert \ \vb{\psi_N\ast y_\eta\sqb{k}} \geq \frac{\lambda_h}{2N}}.
\end{split}
\end{align}
For $p=2,\dots,P$
\begin{align}
\label{eq:k_mp_k_Mp}
\begin{split}
k_m^p&=\min\cb{k> k_M^{p-1} \ \vert \ \vb{\psi_N\ast y_\eta\sqb{k}} \geq \frac{\lambda_h}{2N}},\\
k_M^p&=\max\cb{k\leq k_m^p+\const{N} \ \vert \ \vb{\psi_N\ast y_\eta\sqb{k}} \geq \frac{\lambda_h}{2N}}.
\end{split}
\end{align}
Then $\widetilde{s}_p, \widetilde{\tau}_p, p=1,\dots,P$, computed sequentially with Theorem \ref{theo:s1_tau1} from $\subSz$ satisfy
\begin{equation}
\widetilde{s}_p=s_p, \vb{\widetilde{\tau}_p-\tau_p}<\max\cb{\frac{\alpha}{2N}, T-\alpha\frac{2N-1}{2N}}.
\label{eq:final_bound_taup}
\end{equation}
The following lemma provides sufficient recovery conditions for ${s}_p, {\tau}_p, p=1,\dots,P$.
\begin{lem}[Sufficient recovery conditions] 
\label{lem:sufficient_cond}
\begin{enumerate}[leftmargin=0.5cm,label=${\alph*}$)] 
\item The values $\cb{s_p,\tau_p}_{p=1}^P$ can be recovered from $y_\eta\sqb{k}$ with an error bounded by \eqref{eq:final_bound_taup} if
\begin{enumerate}[leftmargin=1.5cm,label=$\mathrm{TH}_{\arabic*}$)]
\item $\rb{T\Omega e}^\const{N} g_\infty + 2^\const{N}\eta_\infty\leq\frac{\lambda_h}{2N}$,
\item $\rb{\const{N}+1} T\Omega g_\infty \leq h^\ast$,\quad $h^\ast=\min\cb{h,2\lambda-h}$
\end{enumerate}
\item Furthermore, $\mathrm{TH}_1)$ and $\mathrm{TH}_2)$ are satisfied if
\begin{enumerate}[leftmargin=35pt,label=$b_{\arabic*}$)]
\item In the noiseless scenario ($\eta_\infty=0$),
\begin{equation}
\label{eq:Nbound1}
N \leq \frac{h^*}{4 e T \Omega g_\infty}-1,
\end{equation}
\item In the noisy scenario ($\eta_\infty\neq 0$), we require $\eta_\infty\leq \frac{\lambda_h}{8}$,
\begin{gather}
\label{eq:Nbound21}
N \leq \min\cb{\frac{h^*}{8 e T \Omega g_\infty}-1,f^{-1}\rb{\frac{\lambda_h}{\eta_\infty}}},
\end{gather}
where $f:\mathbb{R}_+\rightarrow \mathbb{R}_+, f(x)=x\cdot 2^{x+2}$. 
\end{enumerate}
\end{enumerate}
\end{lem}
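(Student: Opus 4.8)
The plan is to treat the two parts separately, since part $a)$ merely repackages the sufficient conditions \eqref{eq:condIa} and \eqref{eq:separation} that were already shown --- via sequential application of Theorem~\ref{theo:s1_tau1} --- to guarantee recovery of $\cb{s_p,\tau_p}$ with the error bound \eqref{eq:final_bound_taup}, whereas part $b)$ is where the real estimation happens. For $a)$, I would verify that $\mathrm{TH}_1)$ implies \eqref{eq:condIa} and that $\mathrm{TH}_2)$ is just a rearrangement of \eqref{eq:separation}. The former follows from $\psi_N\ast\gamma_\eta = \Delta^N\gamma_\eta$ up to a shift, together with the triangle inequality $\norm{\psi_N\ast\gamma_\eta}_{\ell^\infty}\leq \norm{\Delta^N\gamma}_{\ell^\infty}+\norm{\Delta^N\eta}_{\ell^\infty}$, the Bernstein-type finite-difference estimate $\norm{\Delta^N\gamma}_{\ell^\infty}\leq\rb{T\Omega e}^N g_\infty$ (the same bound underlying \eqref{eq:gamma_bound_hyst}), and the fact that $\Delta^N$ has coefficients summing in modulus to $2^N$, so that $\norm{\Delta^N\eta}_{\ell^\infty}\leq 2^N\eta_\infty$. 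The latter, $\rb{N+1}T\Omega g_\infty\leq h^\ast$, is literally $h^\ast/\rb{T\Omega g_\infty}\geq N+1$, i.e. \eqref{eq:separation}.

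For part $b)$, I would first record two elementary facts that let me collapse the awkward $N$-th power $\rb{T\Omega e}^N$ arising in $\mathrm{TH}_1)$. First, since $2\lambda_h = 2\lambda-h$, the definition $h^\ast=\min\cb{h,2\lambda-h}$ reads $h^\ast=\min\cb{h,2\lambda_h}\leq 2\lambda_h$, so $\lambda_h\geq h^\ast/2$; in particular also $h^\ast\leq\lambda$. Second, whenever $P\geq 1$ the signal must reach a threshold in order to fold, so $g_\infty\geq\lambda\geq h^\ast$ (the case $P=0$ being vacuous). Combined with the bound imposed on $T\Omega g_\infty$ in $b_1)$/$b_2)$, the inequality $4e T\Omega g_\infty\rb{N+1}\leq h^\ast\leq g_\infty$ forces $T\Omega e\leq 1$, hence $\rb{T\Omega e}^N\leq T\Omega e$ for all $N\geq 1$. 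These two facts are the crux: they reduce the condition $\mathrm{TH}_1)$, which is nonlinear in $N$, to a linear estimate.

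In the noiseless case $b_1)$, rearranging $N\leq h^\ast/\rb{4eT\Omega g_\infty}-1$ gives $eT\Omega g_\infty\leq h^\ast/\rb{4\rb{N+1}}$. This immediately yields $\mathrm{TH}_2)$ since $\rb{N+1}T\Omega g_\infty\leq h^\ast/\rb{4e}<h^\ast$, and chaining the two facts above gives $\rb{T\Omega e}^N g_\infty\leq eT\Omega g_\infty\leq h^\ast/\rb{4\rb{N+1}}\leq h^\ast/\rb{4N}\leq\lambda_h/\rb{2N}$, which is $\mathrm{TH}_1)$. In the noisy case $b_2)$, I would split the budget $\lambda_h/\rb{2N}$ into a signal half and a noise half. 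The first entry of the $\min$ in \eqref{eq:Nbound21} controls the signal term exactly as in $b_1)$ but with the factor $8e$, producing $\rb{T\Omega e}^N g_\infty\leq\lambda_h/\rb{4N}$ together with $\mathrm{TH}_2)$. The second entry is handled by inverting $f(x)=x\cdot 2^{x+2}$: since $f$ is strictly increasing on $\mathbb{R}_+$, the bound $N\leq f^{-1}\rb{\lambda_h/\eta_\infty}$ is equivalent to $N\cdot 2^{N+2}\eta_\infty\leq\lambda_h$, i.e. $2^N\eta_\infty\leq\lambda_h/\rb{4N}$. Adding the two halves gives $\mathrm{TH}_1)$. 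The auxiliary hypothesis $\eta_\infty\leq\lambda_h/8$ is exactly what guarantees $f^{-1}\rb{\lambda_h/\eta_\infty}\geq f^{-1}(8)=1$, so that the admissible range for $N$ is nonempty.

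The main obstacle is the treatment of the power $\rb{T\Omega e}^N$: everything hinges on establishing $T\Omega e\leq 1$ so that $\rb{T\Omega e}^N\leq T\Omega e$, which in turn relies on the observation $g_\infty\geq\lambda\geq h^\ast$ valid for a nontrivial fold count. A secondary subtlety, specific to the noisy regime, is the correct inversion of the transcendental $f$ and the balanced splitting of the reconstruction budget $\lambda_h/\rb{2N}$ across the signal and noise contributions; the slack factors $4e$ and $8e$ in $b_1)$/$b_2)$ are precisely what absorb the passage from $N+1$ to $N$ and the relaxation $\lambda_h\geq h^\ast/2$.
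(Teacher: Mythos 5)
Your proof is correct, and its skeleton matches the paper's: part $a)$ is handled identically (reduce $\mathrm{TH}_1)$ to \eqref{eq:condIa} via the finite-difference/Bernstein bound and the $\ell^1$-norm $2^N$ of the kernel, and read $\mathrm{TH}_2)$ off \eqref{eq:separation}), and in part $b)$ the even split of the budget $\lambda_h/\rb{2N}$ into $\lambda_h/\rb{4N}$ for signal and $\lambda_h/\rb{4N}$ for noise, together with the inversion of $f$, is exactly what the paper does. Where you genuinely diverge is in how the exponential factor $\rb{T\Omega e}^N$ is tamed. The paper recasts $\mathrm{TH}_1)$ and $\mathrm{TH}_2)$ as upper bounds on $T$, namely $T\leq\min\cb{B_1\rb{N},B_2\rb{N}}$ with $B_1\rb{N}=\rb{\Omega e}^{-1}\rb{2Ng_\infty/\lambda_h}^{-1/N}$ and $B_2\rb{N}=h^\ast/\rb{\rb{N+1}\Omega g_\infty}$, then asserts the comparison $B_1\rb{N}\geq B_2\rb{N}\,\lambda_h/\rb{2h^\ast e}$ (with $4h^\ast e$ in the noisy case), and finally uses $\lambda_h\geq h^\ast/2$ to land on the constants $h^\ast/4e$ and $h^\ast/8e$. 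You instead prove $T\Omega e\leq 1$ outright — via $g_\infty\geq\lambda\geq h^\ast$, valid as soon as one fold occurs — and collapse $\rb{T\Omega e}^N\leq T\Omega e$, after which everything chains linearly. These are dual forms of the same monotonicity fact (the paper's comparison of $B_1$ and $B_2$ amounts to $x^{1/N}\leq x$ for $x\geq 1$ with $x=2Ng_\infty/\lambda_h$), but your route is more elementary and has a real merit: it makes explicit the hypothesis $g_\infty\geq\lambda$ that the paper's ``it can be shown'' step also silently requires (its claimed inequality between $B_1$ and $B_2$ fails when $g_\infty$ is much smaller than $\lambda_h$), and you correctly dispose of that case by noting that for $P=0$ there is nothing to recover.
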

\begin{proof}
\begin{enumerate}[leftmargin=0.5cm,label=$\alph*$)]
\item $\left.{\mathrm{TH}}_1\right)$ is a sufficient condition for \eqref{eq:condIa} via \cite{Bhandari:2020a}. $\left.{\mathrm{TH}}_2\right)$ follows from \eqref{eq:separation}.
\begin{enumerate}[leftmargin=35pt,label=$b_{\arabic*}$)]
\item Here, for $\forall N\in\Z$, $\left.{\mathrm{TH}}_1\right)$ and $\left.{\mathrm{TH}}_2\right)$ are equivalent to
\begin{gather}
T\leq\min\cb{ B_1\rb{N}, B_2\cb{N}},\label{eq:Tbound1}\\
B_1\rb{N}=\frac{1}{\Omega e \sqrt[N]{2N\cdot g_\infty/\lambda_h}}, \ \ B_2\rb{N}= \frac{h^*}{\rb{N+1}\Omega g_\infty},\label{eq:Tbound2}
\end{gather}
It can be shown that $B_1\rb{N}\geq B_2\rb{N}  {\lambda_h}/\rb{2h^* e}, \forall N\geq 1$, and thus $T \leq  {C}/\rb{\rb{N+1}\Omega g_\infty}$, where $C=\min\cb{h^*,\frac{\lambda_h}{2e}}$. The result follows due to $C
\geq {h^*}/{4e}$.

\item The choice of $N$ ensures that $2^N\eta_\infty\leq \frac{\lambda_h}{4N}$. Furthermore,  $\rb{T\Omega e}^N g_\infty \leq\frac{\lambda_h}{4N}$, and thus $\left.{\mathrm{TH}}_1\right)$ and $\left.{\mathrm{TH}}_2\right)$ hold if
\end{enumerate}
\end{enumerate}
\[
T\leq\min\cb{ B_1^\eta\rb{N}, B_2\cb{N}}, \ 
B_1^\eta\rb{N}=\frac{{{{\left( {4N{g_\infty/\lambda_h }} \right)}^{ - \frac{1}{N}}}}}{{\Omega e}}.
\]
Here, it can be shown that $B_1^\eta\rb{N}\geq B_2\rb{N}\cdot \frac{\lambda_h}{4h^* e}$ and thus $T \leq C_\eta/\rb{N+1} \Omega g_\infty, C_\eta=\min\cb{h^*,\frac{\lambda_h}{4e}}$, and the proof follows due to $C_\eta\geq {h^*}/{8e}$.
\end{proof}

In the noiseless scenario $\eta_\infty=0$, $\left.{\mathrm{TH}}_1\right)$ is always satisfied for $\const{N}$ large enough given that the left-hand side decreases exponentially. We note that the USF condition is more relaxed \eqref{eq:gamma_bound_hyst}. However, USF is intrinsically incompatible with the case where $\alpha\neq 0$, which is always true in practice. 
Condition $\left.{\mathrm{TH}}_2\right)$ shows how $h>0$ ensures there are at least $N$ samples in between each two folds. In the following, we show how the choice of $h$ impacts the distancing of the folds in practice. We generate an input $g\rb{t}=\mathrm{sinc}\rb{t}-0.8499$ which allows arbitrarily close folds as in \fig{fig:modulo_reset_times1}. The modulo threshold is $\lambda=0.15$. The output is depicted in \fig{fig:small_h} for $h=0$ (gray) and for $h=5\cdot 10^{-3}$ (blue). Condition $\left.{\mathrm{TH}}_1\right)$ is satisfied for $N=2$, but $h=0$ does not satisfy the separability condition, which requires a minimum of $h=5\cdot 10^{-3}\approx 3\% $ of $\lambda$.
\begin{figure}[!t]
    \centering
    \includegraphics[width=.45\textwidth]{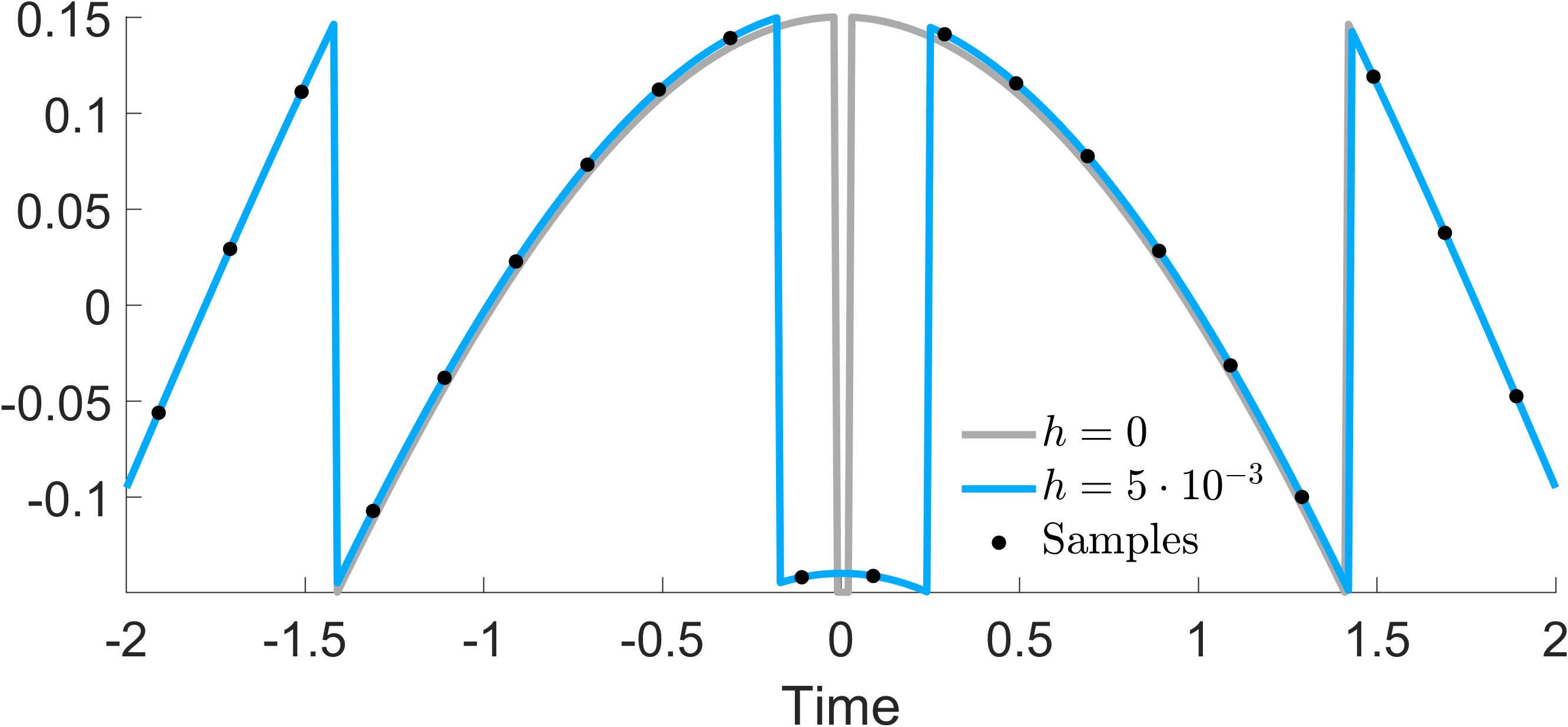}
    \caption{The minimal $h$ allowing at least two samples between each folds for a $\mathrm{sinc}\rb{t}$ input function.}
    \label{fig:small_h}
\end{figure}

\subsubsection{Input Reconstruction}

The values of $\widetilde{s}_p, \widetilde{\tau}_p$ are used to estimate the input samples $\widetilde{\gamma}$ as outlined in Algorithm \ref{alg:1}.
\begin{algorithm}[!t]
\label{alg:1}
\SetAlgoLined
{\bf Data:} $y_\eta\sqb{k},\psi_N\sqb{k},\lambda,h,\alpha,\Omega>0$.\\
\KwResult{ $\widetilde{\gamma}\sqb{k}$. }
 \begin{enumerate}[leftmargin =*, label = $\arabic*)$]
\item Compute the filtered samples $\psi_N\ast y_\eta\sqb{k}$.
\item Compute the numebr of folds $P$ and values $\cb{k_m^p,k_M^p}_{p=1}^P$ with \eqref{eq:k_m1_k_M1}, \eqref{eq:k_mp_k_Mp}.
\item For $p=1,\dots,P$
\begin{enumerate}[leftmargin =*, label = $4.\arabic*)$]
\item If $k_M^p-k_m^p=\const{N}$:\\
Compute $\widetilde{n}_p, \widetilde{s}_p$ with \eqref{eq:n_1},\eqref{eq:s_1} and the folding time estimation $\widetilde{\tau}_p=\widetilde{\tau}^{n_p}_p$ with \eqref{eq:folding_seq}.
\item If $k_M^p-k_m^p=\const{N}-1$:\\ 
Compute $\widetilde{n}_p, \widetilde{s}_p$ with \eqref{eq:n_1}, \eqref{eq:s_1} and the folding time estimation $\widetilde{\tau}_p=\widetilde{n}_p T$.
\end{enumerate}
\item Compute $\widetilde{\varepsilon}_\gamma\sqb{k}=\sum\nolimits _{p=1}^P s_p \varepsilon_0\rb{kT-\widetilde{\tau}_p}$,\\ 
where
$\varepsilon_0\rb{t}=2\lambda_h \sqb{ \rb{1/\alpha} t\cdot \ind_{[0,\alpha[}\rb{t}+ \ind_{[\alpha,\infty[}\rb{t}}$.
\item Compute $\widetilde{\gamma} \sqb{k} = y_\eta \sqb{k}+\widetilde{\varepsilon}_\gamma\sqb{k}$.
\end{enumerate}
\caption{Recovery Algorithm.}
\end{algorithm}
The following result (proved in Section \ref{sect:Proofs}) bounds the input reconstruction error.
\begin{prop}[Error Bound for Input Recovery]
\label{prop:mse}
Assuming that the conditions in Lemma \ref{lem:sufficient_cond} a) or b) are satisfied, the constant offset $2n \lambda$ via \eqref{eq:input_rec} is known, $T\geq \alpha+\rb{{\alpha}/{4N^2}}$, and the data consists of $K$ samples and $P$ folding times, the input reconstruction $\widetilde{\gamma}$ resulted from Algorithm \ref{alg:1} satisfies 
\begin{equation}
\mathsf{MSE}\rb{\widetilde{\gamma},\gamma}\leq \frac{\lambda_h^2 }{N^2}\cdot \frac{P}{K},
\end{equation}
where $n\in\Z$ and $P$ is the number of folds. If $P$ is unknown, by assuming $h^*\leq KT\Omega g_\infty$,
\begin{equation}
\label{eq:mse2}
\mathsf{MSE}\rb{\widetilde{\gamma}, \gamma}\leq \frac{\lambda_h^2 }{N^2}\cdot \frac{2 T \Omega g_\infty}{h^*}.
\end{equation}
\end{prop}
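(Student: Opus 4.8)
The plan is to show that the reconstruction error is concentrated on a single sample per fold and is uniformly small there. Since the hypotheses of Lemma~\ref{lem:sufficient_cond} guarantee (through Theorem~\ref{theo:s1_tau1}) that $\widetilde{s}_p=s_p$ for all $p$ and that $\vb{\widetilde{\tau}_p-\tau_p}$ obeys \eqref{eq:final_bound_taup}, Step 6 of Algorithm~\ref{alg:1} together with \eqref{eq:zt} shows that, once the known constant offset is removed, the error reduces to the residual mismatch
\begin{equation*}
\widetilde{\gamma}\sqb{k}-\gamma\sqb{k}=\widetilde{\varepsilon}_\gamma\sqb{k}-\varepsilon_\gamma\sqb{k}=\sum\nolimits_{p=1}^{P}s_p\sqb{\varepsilon_0\rb{kT-\widetilde{\tau}_p}-\varepsilon_0\rb{kT-\tau_p}}.
\end{equation*}
First I would expand $\mathsf{MSE}\rb{\widetilde{\gamma},\gamma}=\tfrac{1}{K}\sum_{k=1}^{K}\rb{\widetilde{\gamma}\sqb{k}-\gamma\sqb{k}}^2$ and argue that the $p$-th summand contributes to only a single index $k$, so the whole sum splits into $P$ independent fold contributions.

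Next I would establish the localization. Because $\varepsilon_0$ vanishes on $\rb{-\infty,0}$ and equals the constant $2\lambda_h$ on $\sqb{\alpha,\infty}$, the term $\varepsilon_0\rb{t-\widetilde{\tau}_p}-\varepsilon_0\rb{t-\tau_p}$ is supported on $\sqb{\min\cb{\tau_p,\widetilde{\tau}_p},\max\cb{\tau_p,\widetilde{\tau}_p}+\alpha}$, an interval of length $\vb{\widetilde{\tau}_p-\tau_p}+\alpha$. Invoking the hypothesis $T\geq\alpha+\alpha/\rb{4N^2}$ together with the case distinction of Theorem~\ref{theo:s1_tau1}, where case a) has $\vb{\widetilde{\tau}_p-\tau_p}<\alpha/\rb{4N^2}$ so that this length is at most $T$, I would show that at most one sampling instant $kT$ lands in the interval. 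For the cases where $\widetilde{\tau}_p$ is placed exactly at a grid point (cases b)), the same conclusion follows by locating the relevant samples directly, using $T\geq\alpha$ and $n_p=\ceil{\tau_p/T}$. Finally, the separation condition \eqref{eq:separation} (equivalently $\mathrm{TH}_2$) guarantees $\Sz^{p_1}\cap\Sz^{p_2}=\emptyset$ for $p_1\neq p_2$, so distinct folds never share an affected sample and may be accounted for independently.

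Then I would bound the single nonzero term. Since $\varepsilon_0$ is $\rb{2\lambda_h/\alpha}$-Lipschitz, in cases a) and b$_1$) the affected sample satisfies $\vb{\varepsilon_0\rb{kT-\widetilde{\tau}_p}-\varepsilon_0\rb{kT-\tau_p}}\leq\rb{2\lambda_h/\alpha}\vb{\widetilde{\tau}_p-\tau_p}<\lambda_h/N$ via the respective bounds $\alpha/\rb{4N^2}$ and $\alpha/\rb{2N}$. In case b$_2$), where $\vb{\widetilde{\tau}_p-\tau_p}$ may be as large as $T-\alpha+\alpha/\rb{2N}$ and the crude Lipschitz estimate is useless, I would instead evaluate $\varepsilon_0$ directly: the lone affected sample sits at the end of the transient associated with $\tau_p$ while lying beyond the transient of $\widetilde{\tau}_p$, so both values are within $\lambda_h/N$ of $2\lambda_h$ and their difference is again strictly below $\lambda_h/N$. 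Combining, each fold contributes at most $\rb{\lambda_h/N}^2$ to $\sum_k\rb{\widetilde{\gamma}\sqb{k}-\gamma\sqb{k}}^2$, which yields $\mathsf{MSE}\rb{\widetilde{\gamma},\gamma}\leq\rb{\lambda_h^2/N^2}\rb{P/K}$.

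For the case of unknown $P$ I would bound the number of folds via Lemma~\ref{lem:separation}: since $n_{p+1}-n_p\geq\floor{h^\ast/\rb{T\Omega g_\infty}}$, at most $P\leq K/\floor{h^\ast/\rb{T\Omega g_\infty}}$ folds fit into $K$ samples, the assumption $h^\ast\leq KT\Omega g_\infty$ ensuring this is meaningful. Because $\mathrm{TH}_2$ forces $h^\ast/\rb{T\Omega g_\infty}\geq N+1\geq 2$, one has $\floor{h^\ast/\rb{T\Omega g_\infty}}\geq h^\ast/\rb{2T\Omega g_\infty}$, hence $P/K\leq 2T\Omega g_\infty/h^\ast$, and the second bound \eqref{eq:mse2} follows from the first. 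The hard part will be the bookkeeping behind the ``at most one affected sample per fold'' claim, in particular case b$_2$), where the folding-time estimate is off by nearly a full sampling period yet the induced per-sample error must still be controlled by $\lambda_h/N$ rather than by the Lipschitz constant times $\vb{\widetilde{\tau}_p-\tau_p}$.
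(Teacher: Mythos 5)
Your main argument is essentially the paper's own proof: you write the error as $\tfrac{1}{K}\sum_k\bigl(\sum_p s_p E_p\rb{kT}\bigr)^2$ with $E_p\rb{t}=\varepsilon_0\rb{t-\widetilde{\tau}_p}-\varepsilon_0\rb{t-\tau_p}$, use $T\geq\alpha+\alpha/\rb{4N^2}$ and the case analysis of Theorem~\ref{theo:s1_tau1} to show that $n_pT$ is the only sampling instant where $E_p$ can be nonzero, and bound that single value by $\lambda_h/N$ in each of the cases a), b$_1$), b$_2$) (the paper computes a) and b$_1$) directly rather than via the Lipschitz constant, and treats b$_2$) exactly as you do, by noting $\varepsilon_0\rb{n_pT-\widetilde{\tau}_p}=2\lambda_h$ and $\varepsilon_0\rb{n_pT-\tau_p}\geq 2\lambda_h\rb{1-\tfrac{1}{2N}}$). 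This correctly yields the first bound $\lambda_h^2 P/\rb{N^2 K}$.

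The genuine gap is in your final counting step for unknown $P$. The claim that at most $K/\floor{h^\ast/\rb{T\Omega g_\infty}}$ folds fit into $K$ samples is false: $P$ points among $K$ samples with consecutive gaps at least $D$ can number up to $\floor{\rb{K-1}/D}+1$, which exceeds $K/D$ whenever $D\geq 2$ (for instance $K=11$, $D=5$ allows folds at samples $1,6,11$, so $P=3>11/5$). The correct count carries a ``$+1$'', and absorbing that $+1$ is precisely the role of the hypothesis $h^\ast\leq KT\Omega g_\infty$, which you invoke only as ``ensuring this is meaningful'' and never actually use. The paper avoids the floor function altogether by counting with the \emph{analog} separation \eqref{eq:min_dist_analog_folds}: $\tau_{p+1}-\tau_p\geq h^\ast/\rb{\Omega g_\infty}$ gives $P\leq KT\Omega g_\infty/h^\ast+1$, and then $h^\ast\leq KT\Omega g_\infty$ bounds the $+1$ by a second copy of $KT\Omega g_\infty/h^\ast$, so that $P/K\leq 2T\Omega g_\infty/h^\ast$. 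As written, your two estimates (the false $P\leq K/D$ and $\floor{x}\geq x/2$ for $x\geq 2$) do not combine into a valid derivation of the stated constant; the naive repair $P\leq K/D+1$ gives constant $3$ rather than $2$, and recovering the constant $2$ along your discrete route requires a noticeably more careful case analysis. Replacing your counting step with the paper's analog-separation argument closes the gap and completes the proof.
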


{It follows that the bound can be decreased by selecting the largest $N$ admitted by Lemma \ref{lem:sufficient_cond}.} In practice, however, $N$ can be further increased as long as $\left.{\mathrm{TH}}_1\right)$ is satisfied, given that the separation between the folding times, guaranteed by $\left.{\mathrm{TH}}_2\right)$, can be directly inferred from the data, as shown in  Section \ref{sect:numerical_demonstration}. The following theorem derives new bounds on the reconstruction error.

\begin{theo}[Noiseless Case]
Assume that the conditions in Lemma \ref{lem:sufficient_cond} b) are true, where $N$ is the largest integer satisfying \eqref{eq:Nbound1}, the offset $2n\lambda$ is known, $8eT\Omega g_\infty \leq h^*\leq KT\Omega g_\infty$ and $T\geq 2\alpha$. Assuming $\eta_\infty=0$, it follows that
\begin{equation}
\mathsf{MSE}\rb{\widetilde{\gamma},\gamma}\leq C\cdot T^3,
\end{equation}
where $C=2 \rb{16e\lambda_h}^2\cdot\rb{{ \Omega g_\infty}/{h^*}}^3$
\begin{proof}
We substitute \eqref{eq:Nbound1}, which is satisfied for $N = \floor{\frac{h^*}{4 e T \Omega g_\infty}-1}$, in \eqref{eq:mse2}. Given that $\tfrac{h^*}{4 e T \Omega g_\infty}-1\geq \tfrac{h^*}{8 e T \Omega g_\infty}$, we then use that $\floor{x}\geq \frac{x}{2}, \forall x\in\left[1,\infty\right)$.
\end{proof}
\end{theo}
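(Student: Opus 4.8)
The plan is to specialize the noiseless MSE estimate \eqref{eq:mse2} to the particular value of $N$ fixed in the hypotheses and to turn its implicit dependence on $N$ into an explicit power of $T$. Since the conditions of Lemma \ref{lem:sufficient_cond} b) hold and $h^*\leq KT\Omega g_\infty$, the bound $\mathsf{MSE}\rb{\widetilde{\gamma},\gamma}\leq \frac{\lambda_h^2}{N^2}\cdot\frac{2T\Omega g_\infty}{h^*}$ is available. This right-hand side decreases like $1/N^2$, so the smallest bound is obtained at the largest admissible $N$; by hypothesis $N$ is exactly the largest integer satisfying \eqref{eq:Nbound1}, namely $N=\floor{\frac{h^*}{4eT\Omega g_\infty}-1}$. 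First I would convert this choice into a clean lower bound on $N$, which then upper-bounds $\tfrac{1}{N^2}$.

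The key step is to lower-bound the floor. Using the standing assumption $8eT\Omega g_\infty\leq h^*$, I would note that $\frac{h^*}{8eT\Omega g_\infty}\geq 1$, which gives at once both $\frac{h^*}{4eT\Omega g_\infty}-1\geq\frac{h^*}{8eT\Omega g_\infty}\geq 1$ (so the floor argument exceeds $1$) and the intermediate estimate needed below. Since the argument lies in $\left[1,\infty\right)$, the elementary inequality $\floor{x}\geq x/2$ applies and yields $N\geq\tfrac12\rb{\frac{h^*}{4eT\Omega g_\infty}-1}\geq\tfrac12\cdot\frac{h^*}{8eT\Omega g_\infty}=\frac{h^*}{16eT\Omega g_\infty}$.

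Finally I would substitute $\frac{1}{N^2}\leq\rb{\frac{16eT\Omega g_\infty}{h^*}}^2$ into \eqref{eq:mse2} and collect the powers of $T$, giving $\mathsf{MSE}\rb{\widetilde{\gamma},\gamma}\leq\lambda_h^2\rb{\frac{16eT\Omega g_\infty}{h^*}}^2\cdot\frac{2T\Omega g_\infty}{h^*}=2\rb{16e\lambda_h}^2\rb{\frac{\Omega g_\infty}{h^*}}^3 T^3$, which is the claim with the stated constant $C$. The arithmetic is routine; the only step requiring genuine care --- and the one I expect to be the main obstacle --- is the floor bound, i.e., simultaneously certifying that the argument $\frac{h^*}{4eT\Omega g_\infty}-1$ truly exceeds $1$ (legitimizing $\floor{x}\geq x/2$) and extracting the intermediate estimate $\frac{h^*}{4eT\Omega g_\infty}-1\geq\frac{h^*}{8eT\Omega g_\infty}$ that keeps the constant clean. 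The remaining hypotheses $T\geq 2\alpha$ and knowledge of the offset $2n\lambda$ serve only to license the use of \eqref{eq:mse2} and do not otherwise enter the estimate.
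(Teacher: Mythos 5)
Your proposal is correct and follows essentially the same route as the paper's proof: substitute $N=\floor{\frac{h^*}{4eT\Omega g_\infty}-1}$ into \eqref{eq:mse2}, use $8eT\Omega g_\infty\leq h^*$ to get $\frac{h^*}{4eT\Omega g_\infty}-1\geq\frac{h^*}{8eT\Omega g_\infty}\geq 1$, and apply $\floor{x}\geq x/2$ on $\left[1,\infty\right)$ to reach $N\geq\frac{h^*}{16eT\Omega g_\infty}$ and hence $C=2\rb{16e\lambda_h}^2\rb{\Omega g_\infty/h^*}^3$. You have merely written out explicitly the arithmetic the paper leaves compressed.
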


\begin{theo}[Noisy Case]
Assume that the conditions in Lemma \ref{lem:sufficient_cond} b) are true, where $N$ is the largest integer satisfying \eqref{eq:Nbound21}. Furthermore, assume the constant offset $2n\lambda$ is known, $16eT\Omega g_\infty \leq h^*\leq KT\Omega g_\infty$ and $T\geq 2\alpha$. If the data is corrupted by noise sequence $\eta\sqb{k}$ satisfying $\vb{\eta\sqb{k}}\leq\eta_\infty$, where $\eta_\infty\leq\frac{\lambda_h}{8}$, then the error satisfies
\begin{gather}
\mathsf{MSE}\rb{\widetilde{\gamma},\gamma}\leq \max\cb{ C\cdot T^3, C_\eta \cdot T},
\end{gather}
where $$C=2 \rb{32 e\lambda_h}^2\cdot\rb{\frac{ \Omega g_\infty}{h^*}}^3, \quad  C_\eta=\frac{8\lambda_h^2}{\rb{f^{-1}\rb{\frac{\lambda_h}{\eta_\infty}}}^2} \cdot\frac{\Omega g_\infty}{h^*}.$$
\end{theo}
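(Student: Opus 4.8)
The plan is to derive this bound in exactly the same spirit as the Noiseless Case, namely by feeding the admissible filter order $N$ from Lemma~\ref{lem:sufficient_cond}~b) into the unknown-$P$ error estimate \eqref{eq:mse2} of Proposition~\ref{prop:mse}, and then bounding $1/N^2$ from above. First I would check that Proposition~\ref{prop:mse} is actually applicable: the hypotheses $\eta_\infty\leq\lambda_h/8$ and $16eT\Omega g_\infty\leq h^\ast\leq KT\Omega g_\infty$ ensure, via Lemma~\ref{lem:sufficient_cond}~b), that $\mathrm{TH}_1)$ and $\mathrm{TH}_2)$ hold for the selected $N$, and the remaining technical requirement $T\geq\alpha+\alpha/(4N^2)$ follows from $T\geq2\alpha$ once $N\geq1$, since $\alpha+\alpha/(4N^2)\leq5\alpha/4$. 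With \eqref{eq:mse2} in hand, everything reduces to lower-bounding $N=\floor{\min\{\tfrac{h^\ast}{8eT\Omega g_\infty}-1,\,f^{-1}(\lambda_h/\eta_\infty)\}}$.

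Next I would split into the two cases determined by which term attains the minimum, treating them independently. In the \emph{separation-dominated} case, where the minimum is $\tfrac{h^\ast}{8eT\Omega g_\infty}-1$, I write $x\triangleq\tfrac{h^\ast}{8eT\Omega g_\infty}$ and use $16eT\Omega g_\infty\leq h^\ast$ to get $x\geq2$, hence $x-1\geq x/2\geq1$; the floor inequality $\floor{u}\geq u/2$ (valid for $u\geq1$, as already used in the noiseless proof) then gives $N\geq\tfrac12(x-1)\geq\tfrac{x}{4}=\tfrac{h^\ast}{32eT\Omega g_\infty}$. Substituting $1/N^2\leq(32eT\Omega g_\infty/h^\ast)^2$ into \eqref{eq:mse2} and collecting the powers of $T$ and of $\Omega g_\infty/h^\ast$ yields precisely $C\cdot T^3$ with $C=2(32e\lambda_h)^2(\Omega g_\infty/h^\ast)^3$.

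In the \emph{noise-dominated} case, where the minimum is $f^{-1}(\lambda_h/\eta_\infty)$, the crucial observation is that the argument of the floor is at least one: since $f(x)=x\,2^{x+2}$ is increasing with $f(1)=8$, the hypothesis $\eta_\infty\leq\lambda_h/8$ gives $\lambda_h/\eta_\infty\geq8=f(1)$ and therefore $f^{-1}(\lambda_h/\eta_\infty)\geq1$. Applying $\floor{u}\geq u/2$ again yields $N\geq\tfrac12 f^{-1}(\lambda_h/\eta_\infty)$, so $1/N^2\leq4/\bigl(f^{-1}(\lambda_h/\eta_\infty)\bigr)^2$, and plugging this into \eqref{eq:mse2} produces the linear term $C_\eta\cdot T$ with the stated $C_\eta$. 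Since exactly one of the two cases holds but we cannot predict which a priori, bounding the MSE by the larger of the two expressions gives the claimed $\max\{C\,T^3,\,C_\eta\,T\}$.

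The only genuinely non-mechanical point---and the step I expect to be the main obstacle---is verifying that the two floor arguments are truly $\geq1$, since otherwise the inequality $\floor{u}\geq u/2$ fails and the whole reduction breaks down. This is exactly where the two structural hypotheses enter: $16eT\Omega g_\infty\leq h^\ast$ forces $x\geq2$ in the first case, while $\eta_\infty\leq\lambda_h/8$ combined with the explicit value $f(1)=8$ and the monotonicity of $f$ forces $f^{-1}(\lambda_h/\eta_\infty)\geq1$ in the second. Once these are in place, the rest is routine bookkeeping to reconcile the constants $C$ and $C_\eta$.
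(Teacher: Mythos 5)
Your proposal is correct and follows exactly the paper's (very terse) argument: the paper's proof is simply to substitute the admissible $N$ from \eqref{eq:Nbound21} into the bound \eqref{eq:mse2}, which is precisely your reduction. Your case split on which term attains the minimum, the verification that both floor arguments are at least $1$ (via $16eT\Omega g_\infty \leq h^\ast$ and $\lambda_h/\eta_\infty \geq 8 = f(1)$), and the resulting constants $C$ and $C_\eta$ all match the intended derivation, with your write-up supplying details the paper leaves implicit.
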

\begin{proof}
Follows from \eqref{eq:Nbound21} and \eqref{eq:mse2}.
\end{proof}

\section{Sampling at Low Rates}

The most conservative of the two conditions in Lemma \ref{lem:sufficient_cond} assumes there is a contiguous region of $N$ unfolded samples in between each two folds. Here we show that in fact we need just a single such region, which entails that the proposed algorithm can be extended to much lower sampling rates than the ones in Lemma \ref{lem:sufficient_cond}.
A similar assumption is also made in \cite{Ordentlich:2018} for the case of ideal modulo. For now we assume that the unfolded samples are the first $N$; later we generalize to arbitrary locations. 
Outside of this region, we only require that there is  one sample between each two folds $\cb{\tau_p,\tau_{p+1}}$, as will be explained next. 

The overarching idea is that, starting from the filtered data $y_{1,N}\sqb{l}\triangleq \psi_N \ast y_\eta\sqb{l}, l\in\cb{1,\dots,K}$, one can pass through the samples from left to right and compute recursively the sequence $y_{k,N}\sqb{l}, k=2,\dots,K$, where $y_{k,N}\sqb{l}$ contains no contribution of the past folding times $\tau_1,\dots,\tau_p$ satisfying $\tau_p<kT$. There on, given that there is a sample between each two folds, it follows that $y_{k,N}\sqb{k}$ may contain information about at most one folding time, which can be exploited to compute $\tau_p, s_p$. To define sequence $y_{k,N}\sqb{l}$, we note that $\psi_N \ast y_\eta\sqb{k}=\psi_N \ast \gamma\sqb{k} - \sum_{p=1}^P e_{p,N}\sqb{k}$, where, as shown in the proof of Lemma \ref{lem:supp_psi_eps},
	\begin{equation}
		\label{eq:erN}
		\begin{aligned}
			e_{p,N}\sqb{k}\triangleq 2\lambda_h s_p \left({\beta}_p d_{n_p}^{N-1}\sqb{k}\right.
			\left.+\rb{1-{\beta}_p}d_{n_p+1}^{N-1}\sqb{k}\right)
		\end{aligned}
	\end{equation}
where $d_{n_p}^{N-1}\sqb{k}\triangleq \Delta^{N-1}\sqb{k-{n}_p} $. We note that in general, for ${\beta}_p\neq0,1$, the function in \eqref{eq:erN} is an affine linear combination of two kernels of the form $2\lambda_h s_p d_{n_p}^{N-1}\sqb{k}$ centered in neighboring samples. We aim to rewrite $\sum_{p=1}^P e_{p,N}\sqb{k}$ as a sum of shifted kernels, where each kernel is centered in a distinct sample. To ensure that this is possible, we assume $T<\min\cb{\tau_{p+1}-\tau_p}-\alpha$, which guarantees there are no two adjacent transient samples. Then
\begin{equation}
\label{eq:erN2}
    \sum\limits_{p=1}^P e_{p,N}\sqb{k}= 2\lambda_h\sum\limits_{m=1}^K b_m s_{p_m} d_{m}^{N-1}\sqb{k},
\end{equation}
where $b_m\geq 0$ and $p_m$ are defined as follows. If $m=n_p$ for some $p$ then $p_m=p$. Furthermore, $\beta_p=1\Rightarrow b_m=1$, and  $\beta_p\in\rb{0,1} \Rightarrow b_m=\beta_p, b_{m+1}=1-\beta_p$. If $m\neq n_p\ \forall p$ then $b_m=0, p_m=1$. We then have a one-to-one mapping between the unknown pairs $\cb{b_m, s_{p_m}}_{m=1}^K$ and samples $m\in\cb{1,\dots K}$. Similarly to Theorem \ref{theo:s1_tau1}, $b_m$ and $s_m$ can be identified via thresholding sequentially, provided that $\norm{\psi_N\ast \gamma_\eta}_{\ell^\infty}<\tfrac{\lambda_h}{2N}$. 
We define
\begin{equation}
    y_{k,N}\sqb{l}=\psi_N \ast y_\eta\sqb{l} + 2\lambda_h \sum\limits_{m=1}^{k-1} b_m s_{p_m} d_{m}^{N-1} \sqb{l}.
\end{equation}
The values of $b_k, s_{p_k}$ are then computed from $\bar{y}_k\triangleq y_{k,N}\sqb{k}$ via thresholding. Specifically, if $\vb{\bar{y}_k}> 2\lambda_h-\theta$, $\theta=\frac{\lambda_h}{2N}$, then sample $k$ corresponds to a complete fold with $\beta_p=0,1$, and if $\theta<\vb{\bar{y}_k}< 2\lambda_h-\theta$ then $k$ corresponds to the transient of fold $p$, such that $b_k\in\cb{\beta_p,1-\beta_p}$. To distinguish between the two cases, we note that, if $b_k=\beta_p, b_{k+1}=1-\beta_p$ then $\vb{\bar{y}_{k}+\bar{y}_{k+1}-2\lambda_h \mathrm{sign} \rb{\bar{y}_k}}$ is small, which is quantified via tolerance $\theta_\beta$. Otherwise, we conclude that $b_{k-1}=\beta_p, b_{k}=1-\beta_p$. If $b_k=\beta_p$, then we can estimate $\widetilde{\beta}_p=\frac{\vb{\bar{y}_k}}{2\lambda_h}$. The sign $s_{p_k}$ is computed as in Theorem \ref{theo:s1_tau1} as $s_{p_k}=-\mathrm{sign} \rb{\bar{y}_k}$. The values $b_k, s_{p_k}$ are subsequently used to compute sequence $y_{k+1,N}\sqb{l}$, and the process continues recursively (see Algorithm \ref{alg:2}).

In the general case that the region of $N$ unfolded samples does not lie in the beginning, but say in between $\cb{\tau_p,\tau_{p+1}}$, the idea is to follow the approach sketched above both to the left and to the right, starting from an anchor point sample $k_0$ such that $y\sqb{k}=g\rb{kT}+M, k\in\cb{k_0,\dots,k_0+N}$ for some $M\in\mathbb{R}$. Assuming the existence of $k_0$, condition \eqref{eq:separation} can be replaced by a much more relaxed condition $T<\min\cb{\tau_{p+1}-\tau_p}-\alpha$. This new condition ensures, via the hysteresis parameter $h$, that  $\tau_p\leq kT \leq \tau_{p+1}$. As shown numerically in the next section, the existence of $k_0$ takes place naturally in a practical scenario.

\begin{algorithm}[!t]
\label{alg:2}
\SetAlgoLined
{\bf Data:} $y_\eta\sqb{l},\psi_N\sqb{l},\lambda,h,\alpha,\Omega>0, l\in\cb{1,\dots,K}$.\\
\KwResult{ ${\mathrm{\widetilde{\gamma}}}\sqb{l},l\in\cb{1,\dots,K}$. }
 \begin{enumerate}[leftmargin =*, label = $\arabic*)$]
\item Compute ${y_{1,N}}\sqb{l}={\psi_N\ast y_\eta}\sqb{l}, k=1, p=0$.
\item While $k\leq K$ compute $\bar{y}_k={y_{k,N}}\sqb{k}$.
\begin{enumerate}[leftmargin =*, label = $2\alph*)$]
\item IF $k=n_p-N+2$: compute $\th=2\tfrac{\lambda_h}{N}$ \\
ELSE IF $k=n_p-N+3$: compute $\th=\tfrac{\lambda_h}{N}$\\
ELSE: $\th=\tfrac{\lambda_h}{2N}$.
\item IF $\vb{\bar{y}_k}>2\lambda_h-\th$: compute $p=p+1$ and\\
$\widetilde{n}_p=k+N, \widetilde{s}_p=-\mathrm{sign} \rb{\bar{y}_k}, \widetilde{\beta}_p=1$,\\
${y_{k+1,N}}\sqb{l}={y_{k,N}}\sqb{l}+2\lambda_h \widetilde{s}_p {d^{N-1}_{k+N}}\sqb{l}$, $k=k+1$.\\
ELSE IF $\th\leq\vb{\bar{y}_k}\leq2\lambda_h-\th$: compute $p=p+1$, ${y_{k+1,N}}\sqb{l}={y_{k,N}}\sqb{l}-\bar{y}_k {d^{N-1}_{k+N}}\sqb{l}$, and $\bar{y}_{k+1}=y_{k+1,N}\sqb{k+1}$.
\begin{itemize}[leftmargin =*]
\item IF $\vb{\bar{y}_{k}+\bar{y}_{k+1}-2\lambda_h \mathrm{sign} \rb{\bar{y}_k}} > \thbar$: compute\\
${y_{k,N}}\sqb{l}={y_{k-1,N}}\sqb{l}-\bar{y}_{k-1} {d^{N-1}_{k+N-1}}\sqb{l}$, \\
${y_{k+1,N}}\sqb{l}={y_{k,N}}\sqb{l}-\bar{y}_k {d^{N-1}_{k+N}}\sqb{l}$, \\
$\widetilde{n}_p=k+N-1, \widetilde{s}_p=-\mathrm{sign} \rb{\bar{y}_k},$ \\ $ \widetilde{\beta}_p=1-\tfrac{\vb{\bar{y}_k}}{2\lambda_h}$, $k=k+1$.\\
\item ELSE: compute\\
${y_{k+2,N}}\sqb{l}={y_{k+1,N}}\sqb{l}-\bar{y}_{k+1} {d^{N-1}_{k+N+1}}\sqb{l}$,  \\
$\widetilde{n}_p=k+N, \widetilde{s}_p=-\mathrm{sign} \rb{\bar{y}_k},$ $ \widetilde{\beta}_p=\tfrac{\vb{\bar{y}_k}}{2\lambda_h}$, $k=k+2$.\\
\end{itemize}
ELSE: Compute $k=k+1$
\end{enumerate}
\item For $p=1,\dots,P$ compute $\widetilde{\tau}_p=\widetilde{n}_p T-\alpha\cdot\widetilde{\beta}_p$.
\item Compute $\widetilde{\varepsilon}_\gamma\sqb{l}=\sum\nolimits _{p=1}^P s_p \varepsilon_0\rb{lT-\widetilde{\tau}_p}$, where  
$\varepsilon_0\rb{t}=2\lambda_h \sqb{ \rb{1/\alpha} t\cdot \ind_{[0,\alpha[}\rb{t}+ \ind_{[\alpha,\infty[}\rb{t}}$.
\item Compute ${\widetilde{\gamma}}\sqb{l} = {y_\eta}\sqb{l}+{\widetilde{\varepsilon}_\gamma}\sqb{l}$.
\end{enumerate}
\caption{Low Sampling Rate Recovery.}
\end{algorithm}

\begin{figure}[!t]
\centerline{\includegraphics[width=.45\textwidth]{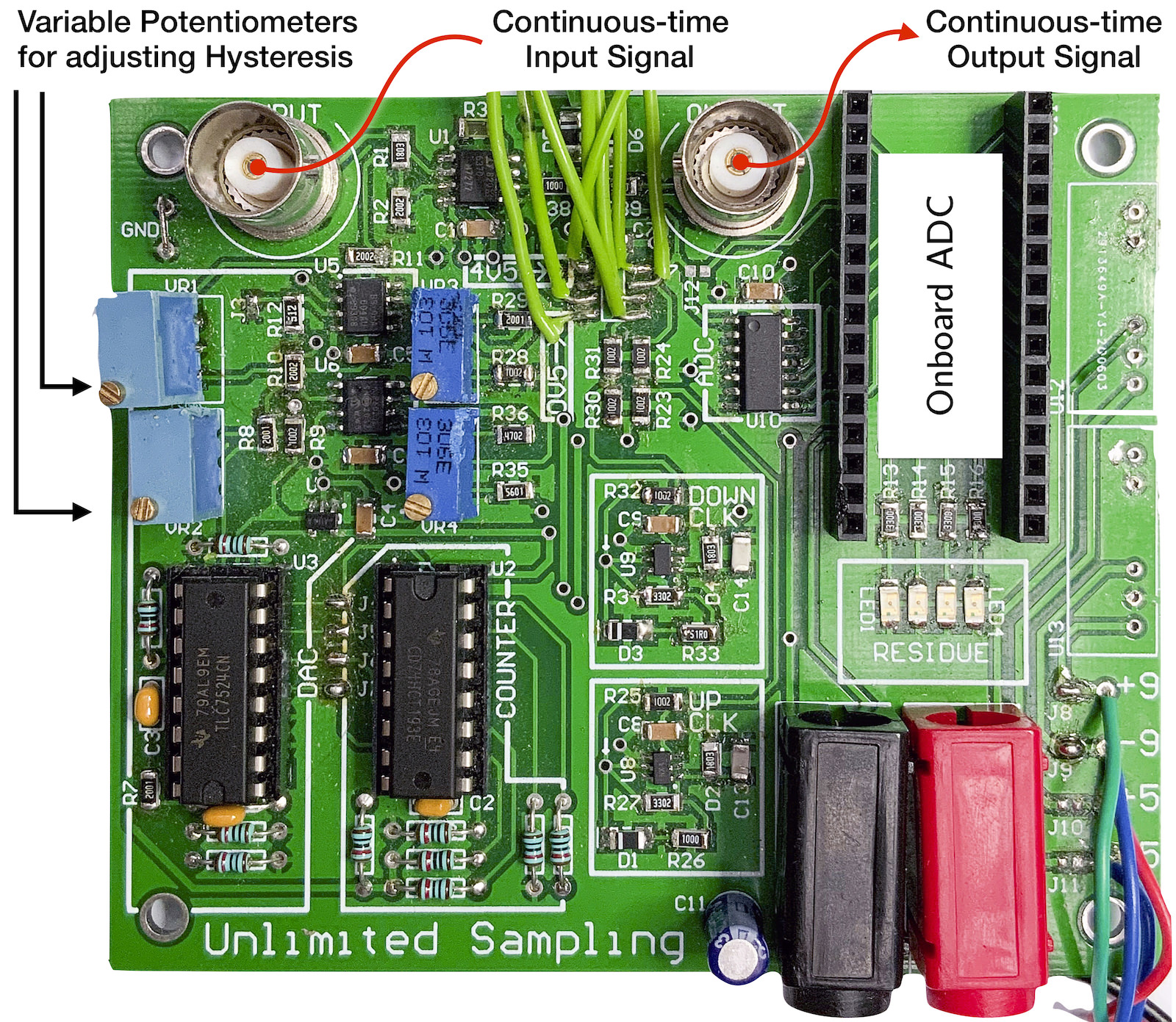}}
\caption{Modulo sampling hardware testbed for implementing hysteresis.}
\label{fig:USADC}
\end{figure}

\section{Numerical and Hardware Experiments}
\label{sect:numerical_demonstration}
{In what follows, we validate our recovery approach both for simulated data and data acquired from our modulo sampling hardware testbed (cf.~\fig{fig:USADC}). 
The experimental protocol for either case is discussed in Section~\ref{subsec:exp}. The hardware testbed is briefly discussed in Section~\ref{subsec:USADC}. We also compare our proposed methodology against $\mathsf{USAlg}$. The recovery results from synthetic data are {depicted} in \ref{sect:experiments_synthetic}, and the ones from experimental data are presented in \ref{sect:experiments_measured}. The algorithmic parameters and results are summarised in Table \ref{tab:1}.}

\subsubsection{Protocol for Experiments}
\label{subsec:exp}

In both cases, we consider a bandlimited function $g\in\PW{\Omega}$, process it with a modulo encoder and then compute uniform samples of the output obtaining $y\sqb{k}=\MOh g\rb{kT}$. We denote by $\widetilde{\gamma}_{\mathsf{USAlg}}=\mathsf{USAlg}_{\widetilde{\lambda}}\rb{y}$ the reconstruction via  \cite{Bhandari:2020a} from samples $y\sqb{k}$ using threshold $\widetilde{\lambda}$. When the data is encoded with an ideal modulo $\MO$ ($h=0, \alpha=0$), it makes sense to choose $\widetilde{\lambda}=\lambda$.  If $h\neq 0 $, then the hysteresis can be accounted for as shown in Subsection \ref{sect:hyst_USFrec} using $\widetilde{\lambda}=\lambda_h$. As we will see, however, in many cases where $\alpha,h\neq 0$ no choice of $\widetilde\lambda$ will lead to a satisfactory performance. To demonstrate this, we perform a line search to identify the effective threshold $\lambda_\mathsf{USAlg}$ for $\mathsf{USAlg}$ recovery 
\begin{equation}
\label{eq:lopt}
\lambda_\mathsf{USAlg}=\mathrm{argmin}_{\widetilde{\lambda}} \ \mathsf{MSE}\rb{\gamma,\mathsf{USAlg}_{\widetilde{\lambda}}\rb{y}}.
\end{equation}
Furthermore, we denote by $\widetilde{\gamma}_\mathsf{TH}$ the reconstruction with the proposed thresholding algorithm outlined in \ref{alg:1}.
We measure the relative recovery error of $\mathsf{USAlg}$ and the proposed thresholding method via the ratio $\mathsf{Err}\rb{\widetilde{\gamma},\gamma}$, defined as
\begin{equation}
	\mathsf{Err}\rb{\widetilde{\gamma},\gamma}=100\cdot\frac{\mathsf{MSE}\rb{\widetilde{\gamma},\gamma}}{\mathsf{MSE}\rb{\gamma,0}} \rb{\%}.
\end{equation}
{We denote the input recovery errors with each method by $\mathsf{Err}_{\mathsf{TH}}=\mathsf{Err}\rb{\widetilde{\gamma}_\mathsf{TH},\gamma}$ and $\mathsf{Err}_{\mathsf{USAlg}}=\mathsf{Err}\rb{\widetilde{\gamma}_\mathsf{USAlg},\gamma}$, respectively.}
When the number of folding times is identified correctly, we compute the root mean square error (RMSE) between the correct and estimated folding times as $\mathsf{RMSE}\rb{\widetilde{\tau},\tau}=\sqrt{\mathsf{MSE}\rb{\widetilde{\tau},\tau}}$.
To compare between the coarse and fine estimations of the folding times for the proposed thresholding method, in this case we also compute the coarse folding times estimation error as $\mathsf{RMSE}\rb{\widetilde{n}T,\tau}$.

 {
\subsubsection{Hardware Testbed}
\label{subsec:USADC}

To perform experiments with real signals, we have developed a modulo sampling hardware that can be implemented with off-the-shelf components. The circuit  is shown in \fig{fig:USADC}. The basic design follows from \cite{Bhandari:2021:J} but with a key difference; the hardwired variable potentiometers (annotated in \fig{fig:USADC}) allow for the adjustment of the hysteresis parameter $h$ in Definition~\ref{def:cont_modulo}. 
This is done by adjusting the multi-turn screw of the trimmer component. To estimate the values of $\lambda, h$ and $\alpha$ from the data, we first recorded the output of the modulo ADC with a high sampling rate and then ran a nonlinear optimization to fit the values of $\lambda$ and $h$. For $\alpha$, we estimated the transient slope directly from the data. For a given continuous-time input signal, continuously adjusting the multi-turn value amounts to changing the $h$-value in the output signal. This is what is shown in the {YouTube} demonstration at \href{https://youtu.be/R4rji5jOjD8}{\texttt{https://youtu.be/R4rji5jOjD8}}. While the details of the hardware design are beyond the scope of this paper, in the spirit of reproducible research,  we plan to make our hardware designs, data and code available via the project website  \href{https://bit.ly/USF-RR}{\texttt{https://bit.ly/USF-RR}} in the future. We stress, however, that our method does not require variable hysteresis to work, but only to avoid $h\approx 0$. We change $h$ with the help of the testbed in the hardware experiments below purely to show the diversity of cases in which the algorithm works.}

\begin{table}[!t]
\caption{Summary of the Parameters and the Performance of $\mathsf{USAlg}$ and the Proposed Thresholding Algorithm.}
\label{tab:1}
\centering
\resizebox{\textwidth}{!}{%
\begin{tabular}{@{}ccccccccccccc@{}}
\toprule
Exp. & $T\rb{\mathrm{ms}}$ & $\Omega\rb{\mathrm{rad/s}}$ & $\lambda$ & $h$    & $\alpha\rb{\mathrm{ms}}$ & $\lambda_\mathsf{USAlg}$ & $\const{N}_\mathsf{USAlg}$ & $\tfrac{\lambda_h}{2\const{N}_\mathsf{TH}}$ & $\const{N}_\mathsf{TH}$ & $\mathsf{Err}_{\mathsf{USAlg}}\rb{\%}$ & $\mathsf{Err}_{\mathsf{TH}}\rb{\%}$ & $\mathsf{RMSE}\rb{\widetilde{\tau}^\const{N}_\mathsf{TH},\tau}$ \\ \midrule
1    & $20$                & $4.4$                       & $1.5$     & $1.5$  & $20$                     & $0.66$                   & $1$                        & $0.125$                                     & $3$                     & $25.6$                                 & $0.008 $                            & $1.2\times10^{-5}\ \mathrm{s}$                                  \\
     &                     &                             &           &        &                          &                          & $2$                        & $0.094$                                     & $4$                     & $8.2\times10^{3}$                      & $4.5\times10^{-4} $                 & $6.5\times10^{-7}\ \mathrm{s}$                                  \\ \midrule
2    & $0.1$               & $188$                       & $2.01$    & $3.23$ & $0.09$                   & $0.13$                   & $1$                        & $0.198$                                     & $1$                     & $2.64 $                                & $0.58 $                             & $0.03\ \mathrm{ms}$                                             \\
     &                     &                             &           &        &                          &                          & $2$                        & $0.098$                                     & $2$                     & $730 $                                 & $0.33 $                             & $0.01\ \mathrm{ms}$                                             \\ \midrule
3    & $0.36$              & $188$                       & $2.05$    & $1$    & $0.07$                   & $0.78$                   & $2$                        & $0.39$                                      & $2$                     & $27 $                                  & $1 $                                & $0.32\ \mathrm{ms}$                                             \\
4    & $0.09$              & $1.5$                       & $1$       & $0.1$  & $0.02$                   & $1.345$                  & $1$                        & $0.24$                                      & $2$                     & $96.8$                                 & $0.71$                              & $0.04\ \mathrm{ms}$                                             \\ \bottomrule
\end{tabular}%
}
\end{table}

\begin{figure*}[!t]
\centering
\subfloat{\includegraphics[width=.32\textwidth]{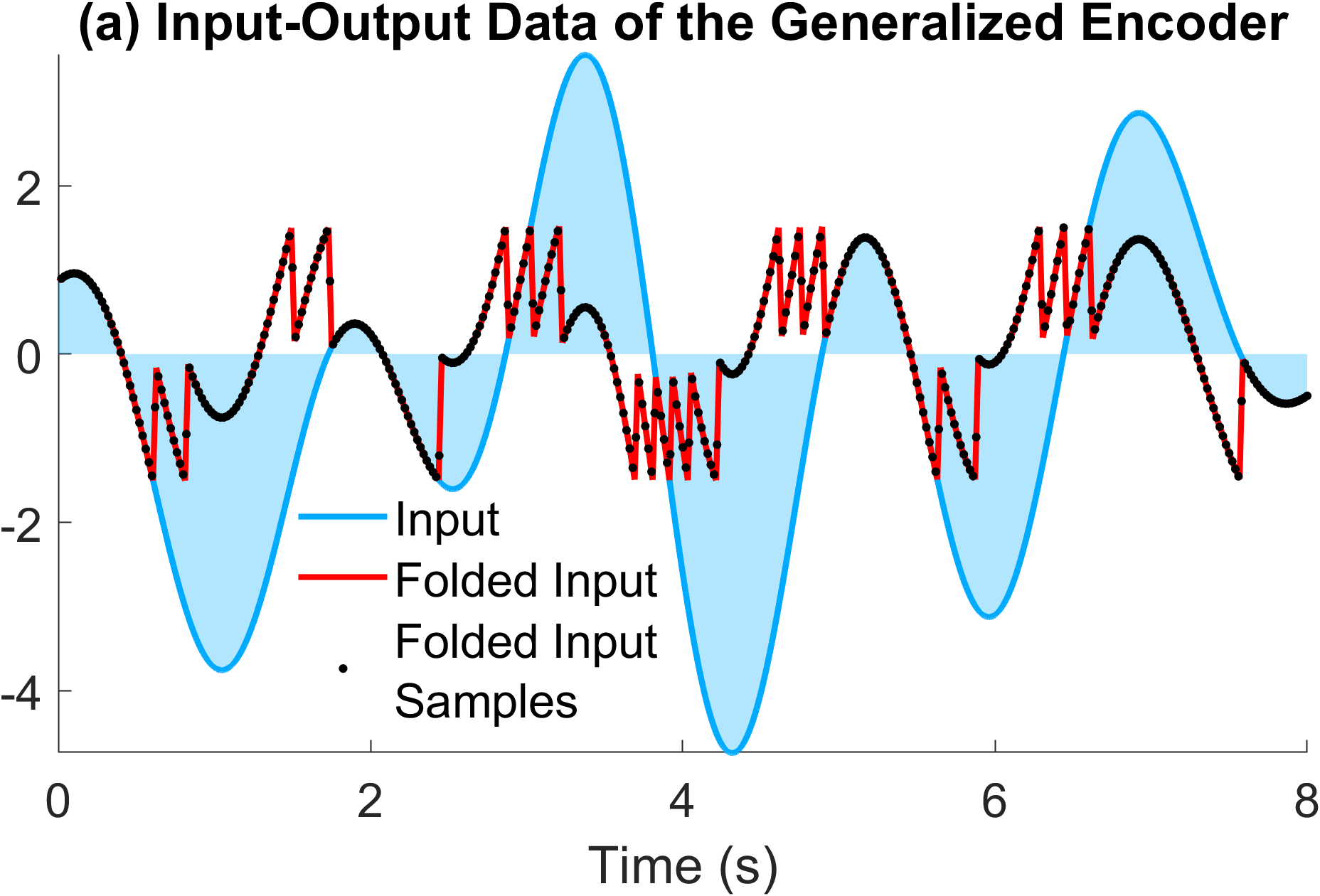}
\label{fig:synth_data}}
\subfloat{\includegraphics[width=.33\textwidth]{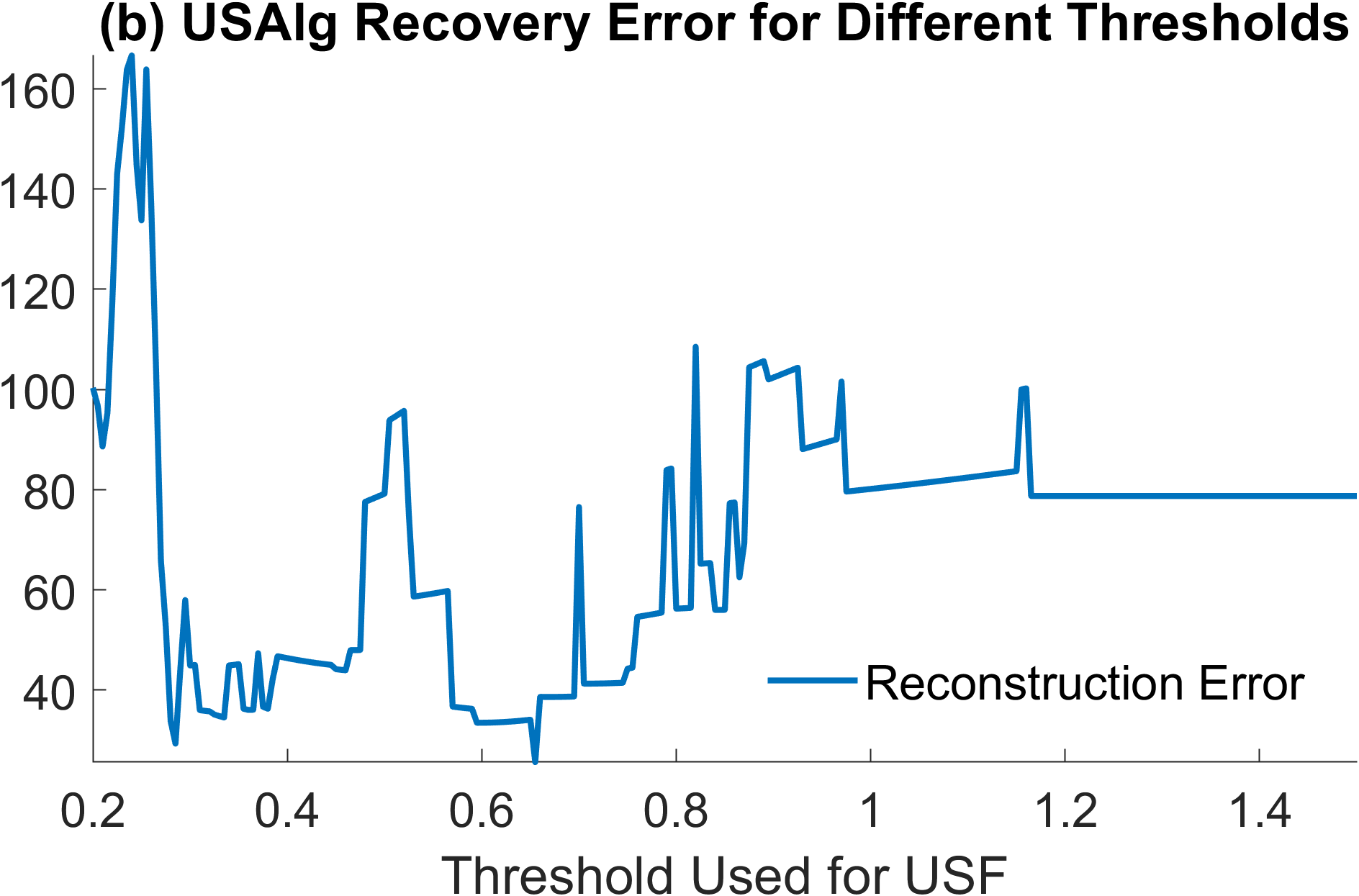}
\label{fig:synth_USF_Lam}}
\subfloat{\includegraphics[width=.33\textwidth]{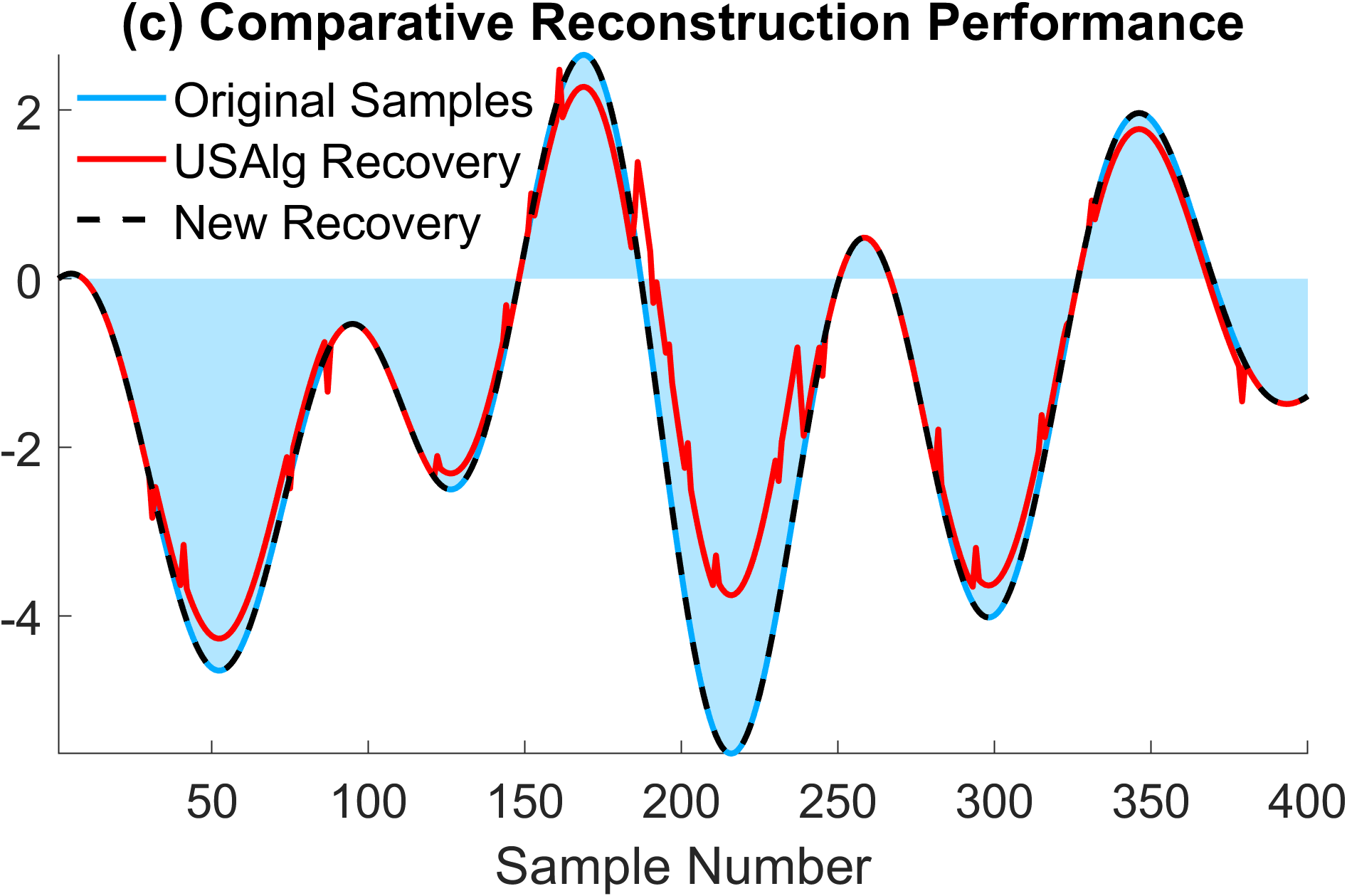}
\label{fig:synth_rec}}
\hfil
\subfloat{\includegraphics[width=.315\textwidth]{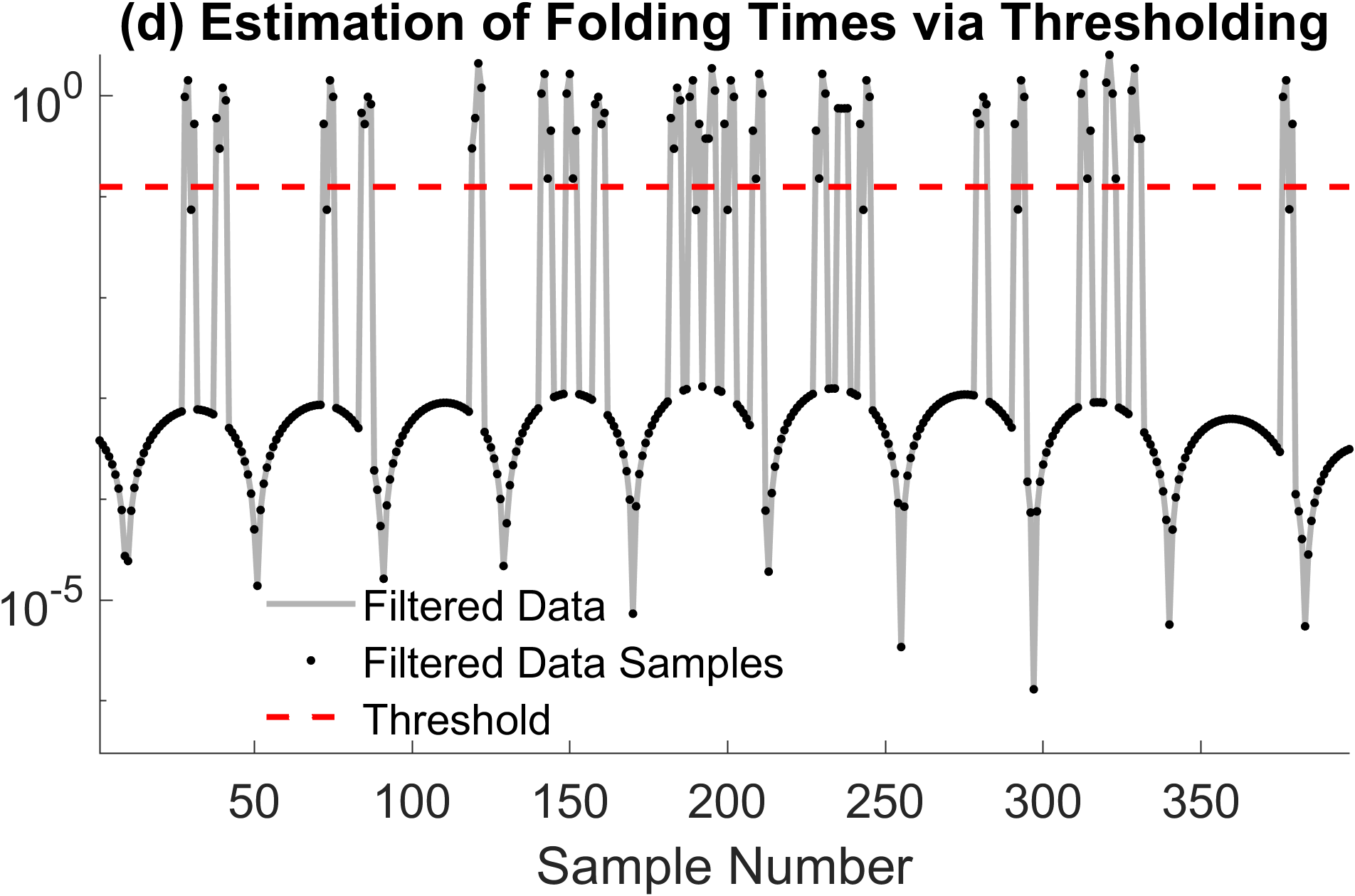}
\label{fig:synth_th}}
\subfloat{\includegraphics[width=.32\textwidth]{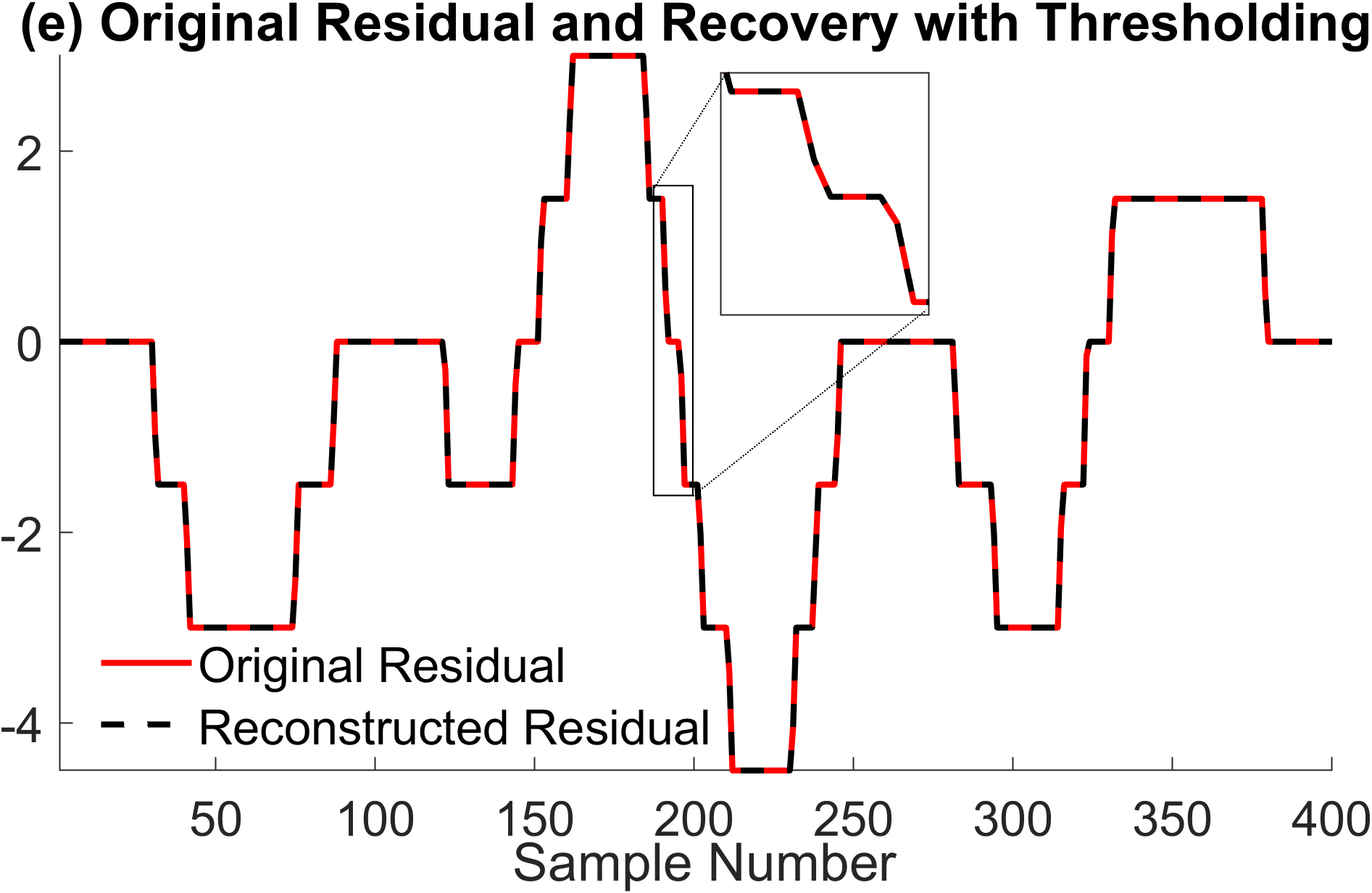}
\label{fig:synth_res}}
\subfloat{\includegraphics[width=.345\textwidth]{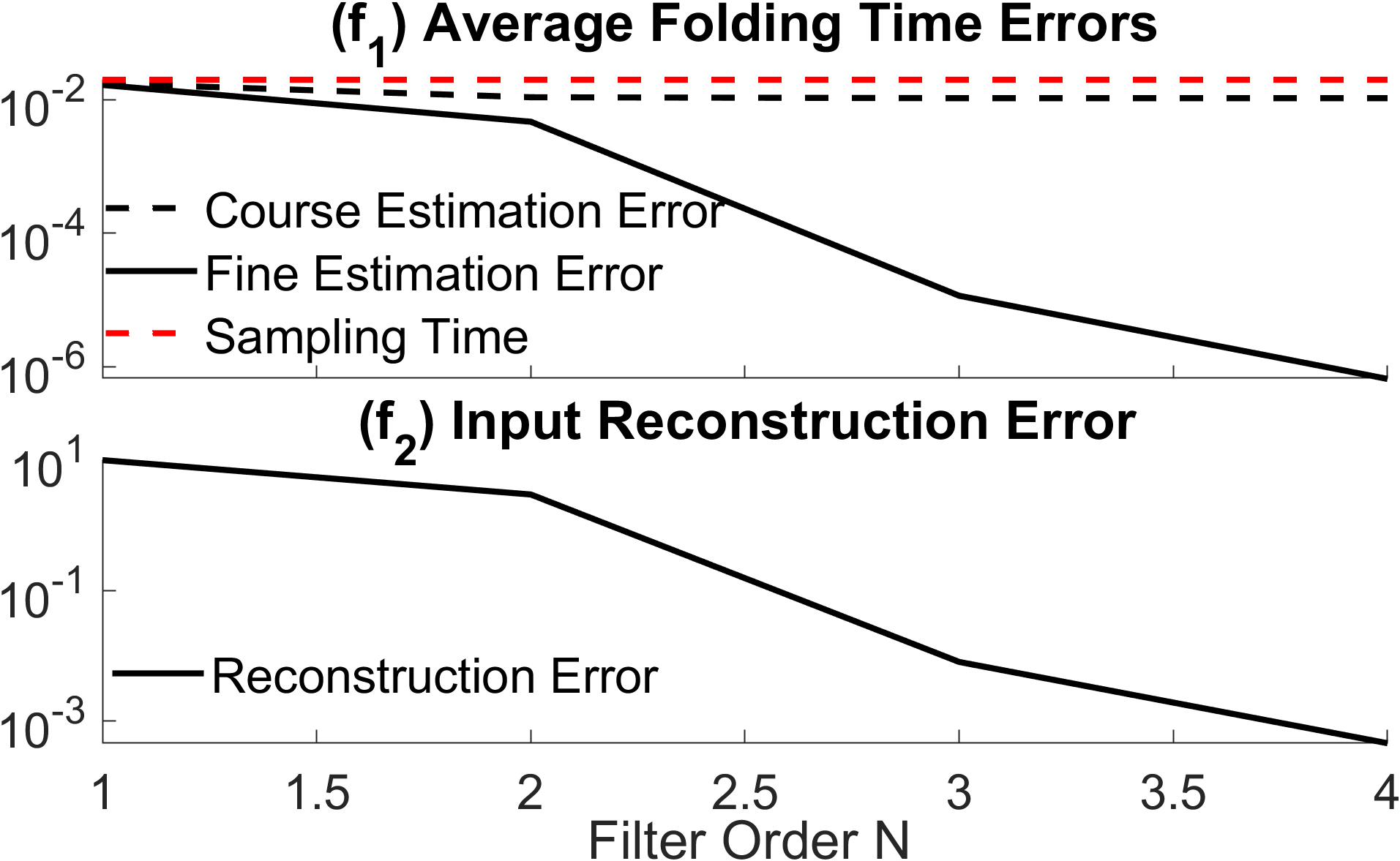}
\label{fig:synth_fold}}
\caption{Simulations with synthetic data for $T=\alpha$. (a) A randomly generated signal and its modulo samples. (b) Reconstruction error via $\mathsf{USAlg}$ for various values of the effective threshold. (c) Reconstruction with $\mathsf{USAlg}$ using the optimal effective threshold in comparison with the reconstruction by thresholding (d) Filtered data
and the resulting estimated folding times via thresholding. 
(e) The residual function recovered with the proposed method; note that the samples inside the transient periods are recovered correctly. (f) Reconstruction error as a function of the filter order $\const{N}$. (f\textsubscript{1}) The folding time errors based on the coarse and fine estimation. 
(f\textsubscript{2}) Resulting reconstruction accuracy calculated over the uniform sample grid.}
\label{fig_sim}
\end{figure*}

\subsection{Reconstruction from Synthetic Data}	
\label{sect:experiments_synthetic}

In this class of experiments, we choose the input $g\rb{t}$ to be a  {non-periodic signal computed as a sum of $\mathrm{sinc}$ functions bandlimited to $\Omega=4.4\ \mathrm{rad/s}$, centered in $10$ points equally spaced by $\pi/\Omega$ in $\sqb{0,8}$. The $\mathrm{sinc}$ coefficients were randomly generated using the uniform distribution on $\sqb{-4\lambda,4\lambda}$, where $\lambda$ is the modulo threshold satisfying $\lambda=1.5$.} The modulo hysteresis parameter is $h=1.5$, and transient length is $\alpha=0.02\ \mathrm{s}$. The modulo output is subsequently sampled with period $T=0.02\ \mathrm{s}$. An example input and the corresponding output of the encoder are depicted in \fig{fig:synth_data}. We attempt to recover the input using $\mathsf{USAlg}$. Let $\widetilde{\gamma}_\mathsf{USAlg}=\mathsf{USAlg}_{\lambda_\mathsf{USAlg}}\rb{y}$. We first choose $\const{N}=1$. 
The value of $\mathsf{Err}_{\mathsf{USAlg}}$ for $200$ equally spaced values $\lambda_\mathsf{USAlg}\in\sqb{0.2,1.5}$ is depicted in \fig{fig:synth_USF_Lam}. The plot confirms that indeed $\lambda_\mathsf{USAlg}$, the optimal choice of the threshold parameter $\widetilde \lambda$ in the reconstruction procedure, satisfies $\lambda_\mathsf{USAlg} \notin \{\lambda_h, \lambda\}$.

The reconstruction with $\lambda_\mathsf{USAlg}\approx 0.655$ is depicted in \fig{fig:synth_rec} superimposed on the original samples. The error is evaluated as $\mathsf{Err}_{\mathsf{USAlg}}=25.6\%$. As it can be seen, $\mathsf{USAlg}$ unfolds the samples incorrectly, which is because $\alpha\neq 0$. Even for the optimal value $\lambda_\mathsf{USAlg}$, increasing $\const{N}$ renders the reconstruction unstable, leading to $\mathsf{Err}_{\mathsf{USAlg}}>1000\%$.  We also note that the estimation is more accurate where the samples are close to the edges of the transient periods in \fig{fig:synth_data}, e.g., the 5\textsuperscript{th} and 6\textsuperscript{th} folding time. While changing $T$ may reduce the number of samples during the transient periods, it will not be possible to know the optimal values of $T$ and $\lambda_\mathsf{USAlg}$, as they depend on the input.  {We note that condition \eqref{eq:gamma_bound_hyst} is satisfied for $N=2$, which would be necessary in the hypothetical scenario where no transients were present ($\alpha=0$).} 

We reconstruct the input samples using the proposed method defined in Section \ref{sect:thresholding}. We choose $\const{N}=3$, and subsequently threshold the filtered data samples $\Delta^3 y$ with ${\lambda_h}/{6}=0.125$. As illustrated in \fig{fig:synth_th}, the samples corresponding to the folding times are clustered in groups of $4$ consecutive samples. As Lemma \ref{lem:k1_k2} demonstrates, each cluster has a minimum of $3$ samples above the threshold, which are enough to compute $\widetilde{n}_p, \widetilde{s}_p, \widetilde{\tau}_p, p=1,\dots,22$ sequentially via Theorem \ref{theo:s1_tau1}. The number of folds is unknown prior to the reconstruction. The reconstruction of the residual function $\widetilde{\varepsilon}_\gamma\sqb{k}$ and of the input samples $\widetilde{\gamma}_\mathsf{TH}\sqb{k}$ are depicted in \fig{fig:synth_res} and \fig{fig:synth_rec}, respectively. We note that the proposed methodology accurately recovers all the samples, including those on the transient regions of $\varepsilon_\gamma\sqb{k}$. The error is evaluated as $\mathsf{Err}_{\mathsf{TH}}=8.1\cdot 10^{-3}\%$. We compute a reconstruction $\widetilde{\gamma}_{\mathsf{TH},\const{N}}$ for each $\const{N}=1,\dots,4$ and observe that even though $\left.{\mathrm{TH}}_2\right)$ is not satisfied for $\const{N}=4$, we an obtain accurate recovery. 
\begin{figure*}[t!]
\centering
\subfloat{\includegraphics[width=.5\textwidth]{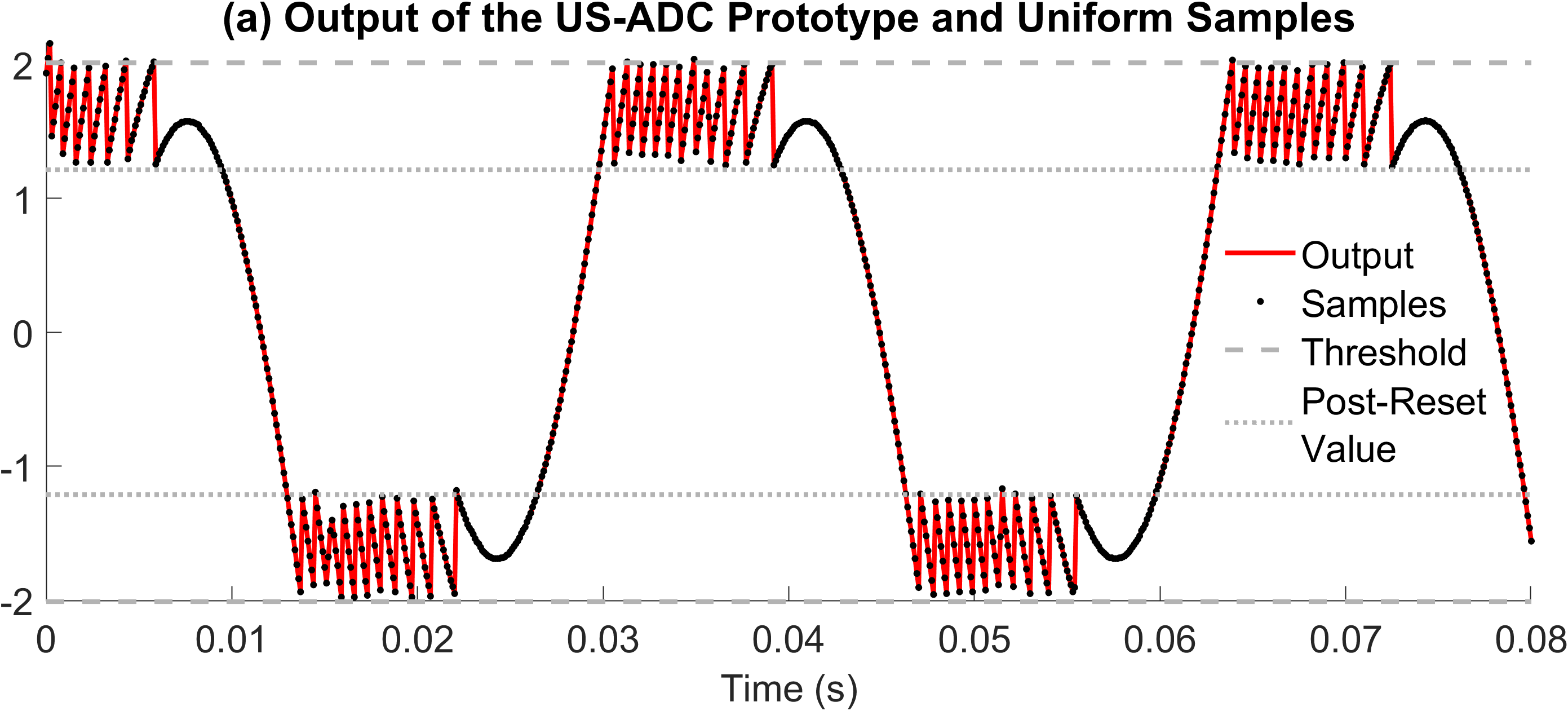}
\label{fig:measured_samples}}
\subfloat{\includegraphics[width=.5\textwidth]{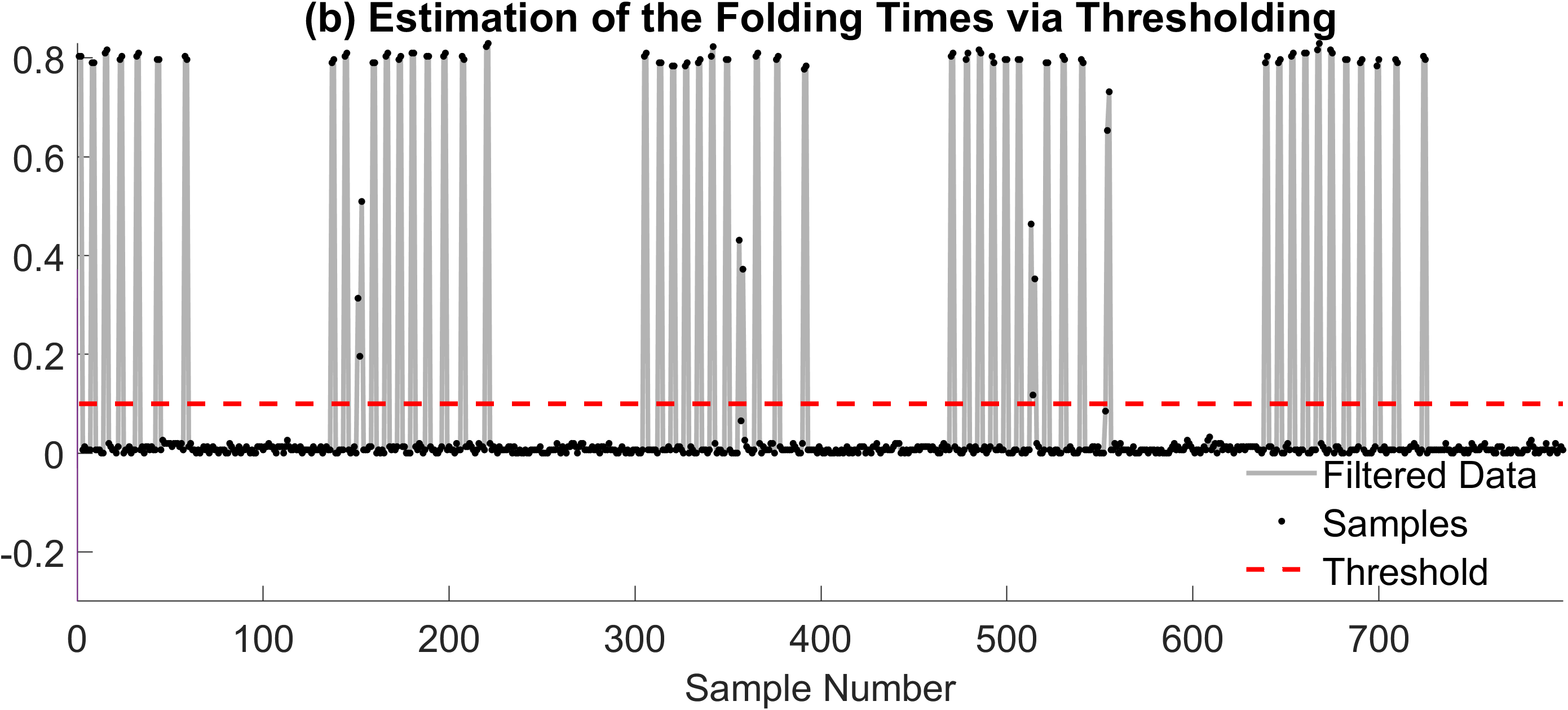}
\label{fig:measured_th}}
\hfil
\subfloat{\includegraphics[width=.5\textwidth]{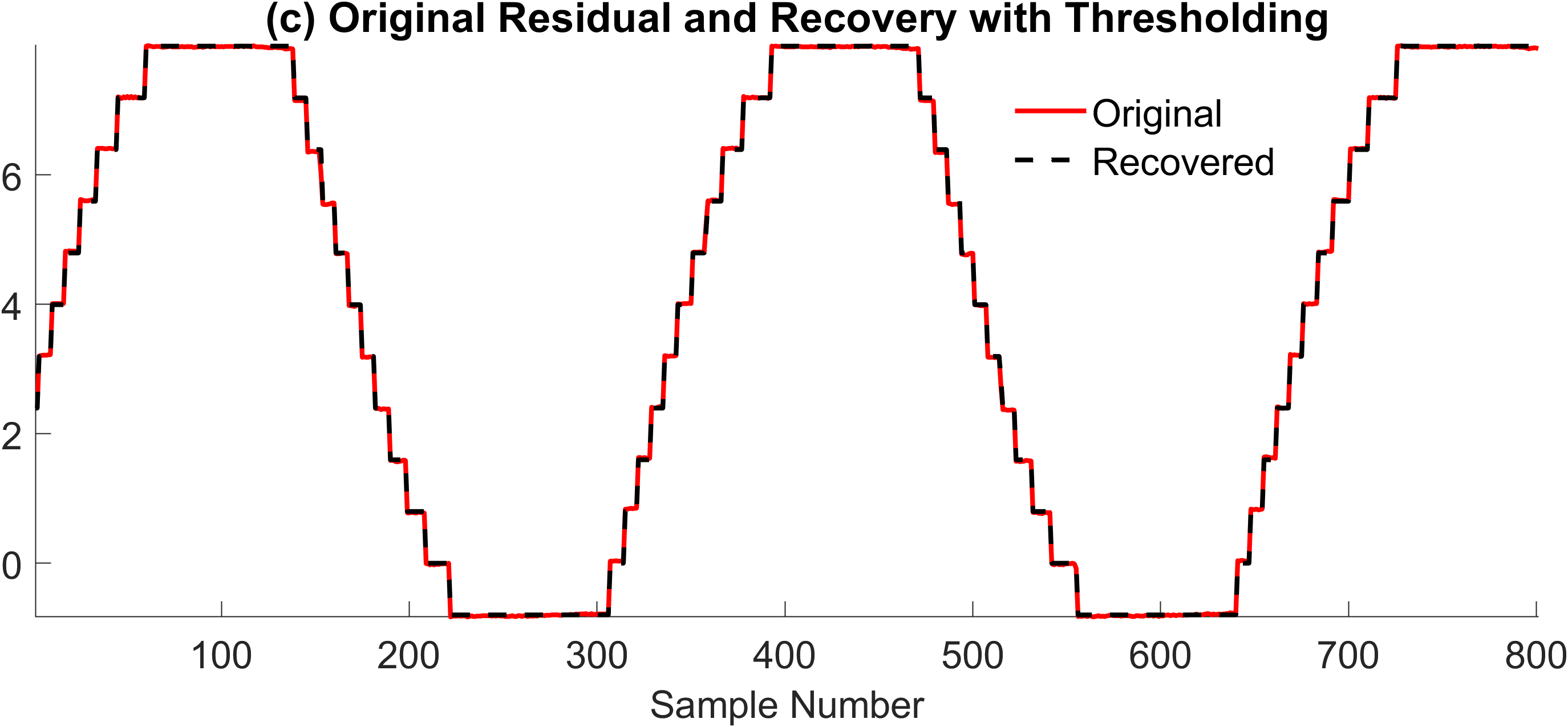}
\label{fig:measured_res}}
\subfloat{\includegraphics[width=.5\textwidth]{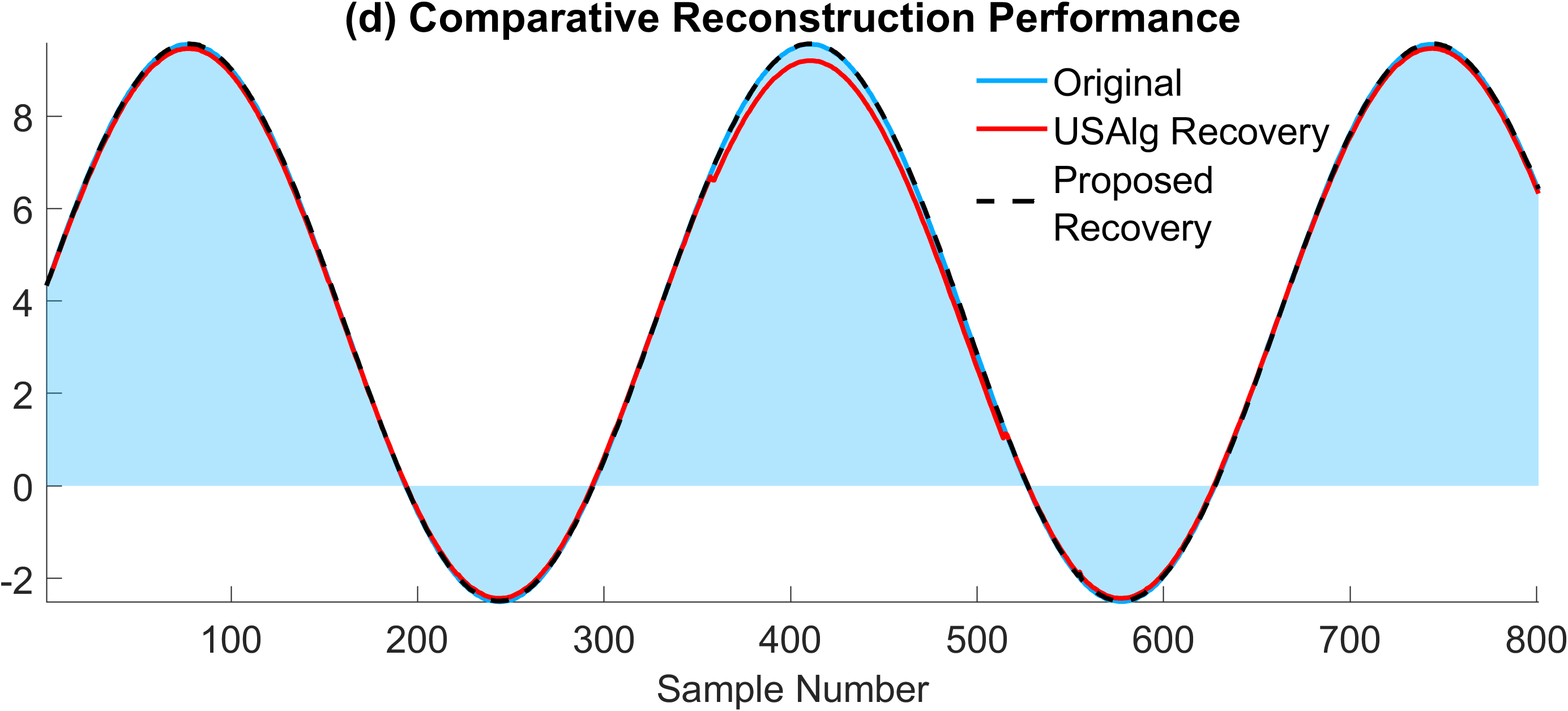}
\label{fig:measured_rec}}
\caption{Hardware experiment (Sinusoid). (a) Modulo circuit output and corresponding uniform samples. (b) Estimating the folding times by thresholding the filtered data. (c) Reconstruction of the residual function. (d) Input samples and reconstruction with $\mathsf{USAlg}$ with effective threshold and the proposed method. }
\label{fig_measured1}
\end{figure*}
An explanation for this observation is that as $\const{N}$ increases, the size of the clusters in \fig{fig:synth_th} increases linearly, which allows relaxing $\left.{\mathrm{TH}}_2\right)$ right until the clusters begin to overlap. This perspective also can serve as a guideline to choose $N$ in practice; we aim to explore details in future work.

For each $\const{N}$, we evaluate the folding time estimation, $\mathsf{RMSE}\rb{\widetilde{\tau}^\const{N},\tau}$. \fig{fig:synth_fold} shows the evolution of $\mathsf{RMSE}\rb{\widetilde{\tau}^\const{N},\tau}$ and $\mathsf{Err}\rb{\widetilde{\gamma}_{\mathsf{TH},\const{N}},\gamma}$ for $\const{N}\in[1,4]$.
The coarse estimations or CEs of the folding times, denoted $\widetilde{n}_p^N T$, are also evaluated for comparison purposes. As expected, CE leads to an error $\mathsf{RMSE}\rb{\widetilde{n}_p^\const{N} T,\tau_p}$ comparable to the sampling time $T$, and does not change significantly with $\const{N}$. However, the fine estimation \eqref{eq:folding_seq} allows for computing $\tau_p$ with an error that is $5$ orders of magnitude smaller than $T$. This is remarkable given that no information is being transmitted at times $\tau_p$.

\subsection{Reconstruction from Experimental Data}
\label{sect:experiments_measured}

In this example, we consider data generated by our prototype ADC with modulo threshold $2.01$ and hysteresis parameter $h=3.23$ applied to an input $g\rb{t}$ given as a sine wave of amplitude $12$ and bandwidth $\Omega=188\ \mathrm{rad/s}$. 
The output of the circuit is depicted in \fig{fig:measured_data}. 
The output $z\rb{t}$ is sampled with period $T=10^{-4}$. 
The samples, depicted in \fig{fig:measured_samples}, show that the experimental data behaves very much in line with our model in Definition \ref{def:cont_modulo}. When the encoder input reaches one of the threshold values $\pm\lambda$, it resets the output to the corresponding post-reset value $\mp\rb{\lambda-h}$.

As before, we reconstruct the input with $\mathsf{USAlg}$ starting with $\const{N}=1$ by selecting the effective threshold using a line search based on $200$ equally spaced values $\lambda_\mathsf{USAlg}\in\sqb{0,2}$. The smallest $\mathsf{Err}_{\mathsf{USAlg}}=2.6\%$ was achieved for $\lambda_\mathsf{USAlg}=0.13$. The corresponding reconstruction is depicted in \fig{fig:measured_rec}. Increasing $N$ further led to an unstable recovery. We note that condition \eqref{eq:gamma_bound_hyst}, which assumes $\alpha=0$, is satisfied for $N=2$.
The reconstruction with the proposed method satisfies the requirements $\left.{\mathrm{TH}}_1\right)$ $\&$ $\left.{\mathrm{TH}}_2\right)$ for $\const{N}=1$. The filtered data was thresholded with $\frac{\lambda_h}{2}=0.198$, ensuring that each folding time corresponds to a minimum of $1$ sample crossing the threshold \fig{fig:measured_th}. The recovered residual is in \fig{fig:measured_res}. The corresponding input reconstruction satisfies $\mathsf{Err}_{\mathsf{TH}}=0.58\%$ and is depicted in \fig{fig:measured_rec}. Even higher accuracy can be achieved with a filter order of $\const{N}=2$ which brings the error to $\mathsf{Err}_{\mathsf{TH}}=0.33\%$.

In a second experiment with our prototype ADC, we generate a random signal with bandwidth $188\ \mathrm{rad/s}$. The sampling rate is $T=0.36\ \mathrm{ms}$, and the circuit parameters are $\lambda=2.05, h=1$. We select $N=2$, which is also the value predicted by $\mathsf{USAlg}$ using the effective threshold $\widetilde{\lambda}=\lambda_h$. A line search on $\sqb{0,2}$ yields $\lambda_\mathsf{USAlg}=0.78$ as the optimal threshold of $\mathsf{USAlg}$, for which the error is given by $\mathsf{Err}_{\mathsf{USAlg}}=27\%$. We note that condition \eqref{eq:gamma_bound_hyst}, which assumes $\alpha=0$ is satisfied for $N=2$. The input and $\mathsf{USAlg}$ reconstruction are depicted in \fig{fig:measured_rec2}. For the proposed method, the reconstruction condition is satisfied for $N=2$, which leads to a recovery threshold of ${\lambda_h}/{4}=0.39$, which leads to error $\mathsf{Err}_{\mathsf{TH}}=1\%$. The proposed reconstruction method is depicted in \fig{fig:measured_th2}, \fig{fig:measured_res2}, and \fig{fig:measured_rec2}.  {The figures illustrate how in an experimental setting the proposed method allows the reconstruction of signals with the dynamic range up to $10$ times the modulo threshold.}

\begin{figure*}[t!]
\centering
\subfloat{\includegraphics[width=.46\textwidth]{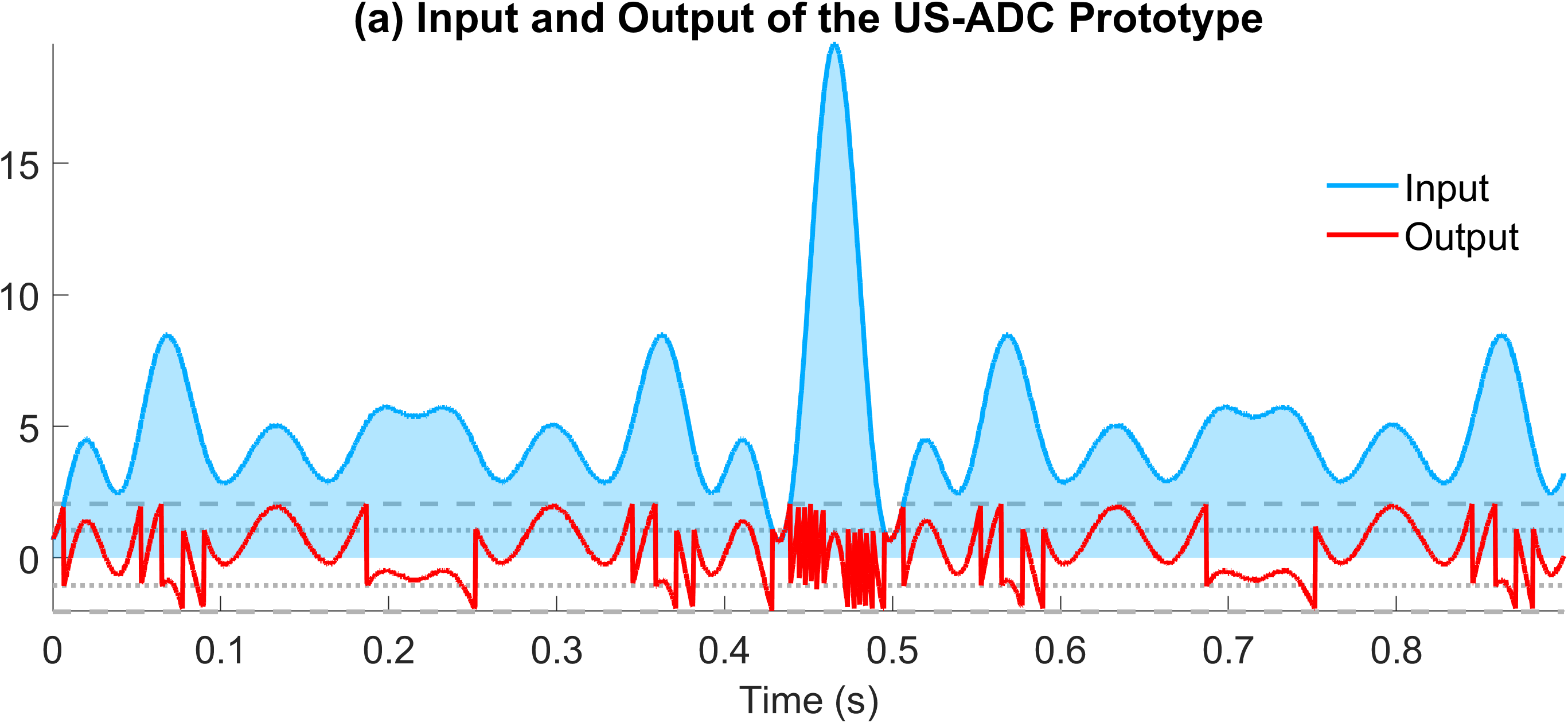}
\label{fig:measured_samples2}}
\subfloat{\includegraphics[width=.46\textwidth]{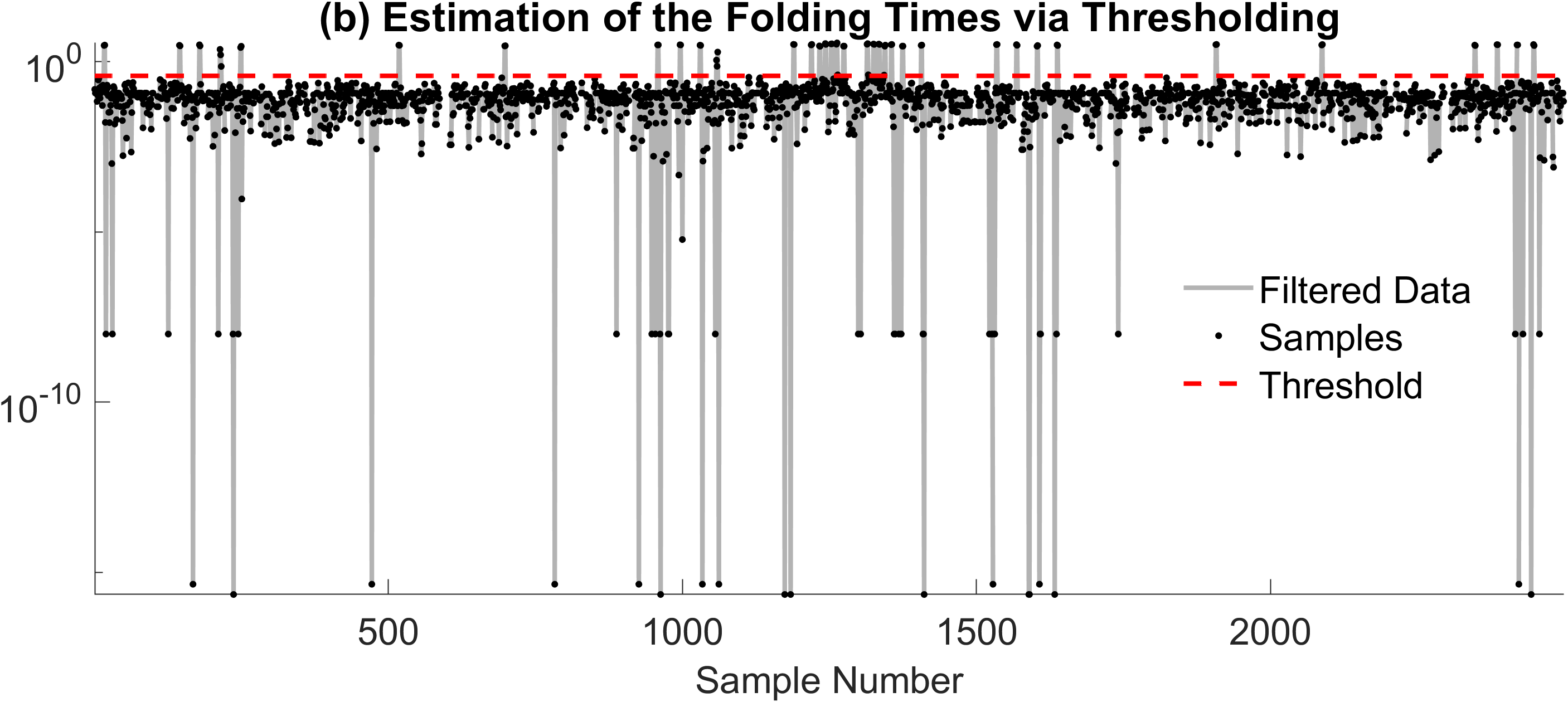}
\label{fig:measured_th2}}
\hfil
\subfloat{\includegraphics[width=.46\textwidth]{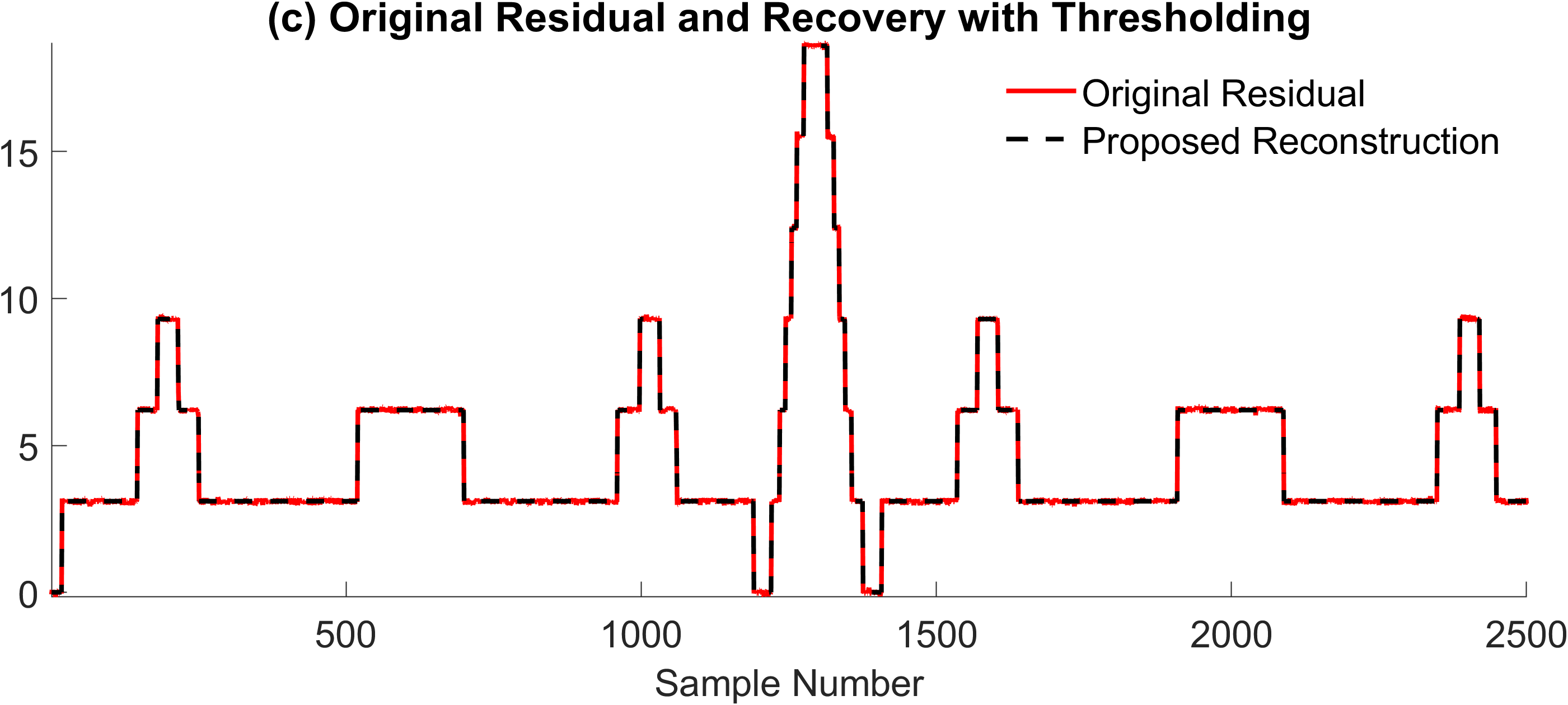}
\label{fig:measured_res2}}
\subfloat{\includegraphics[width=.46\textwidth]{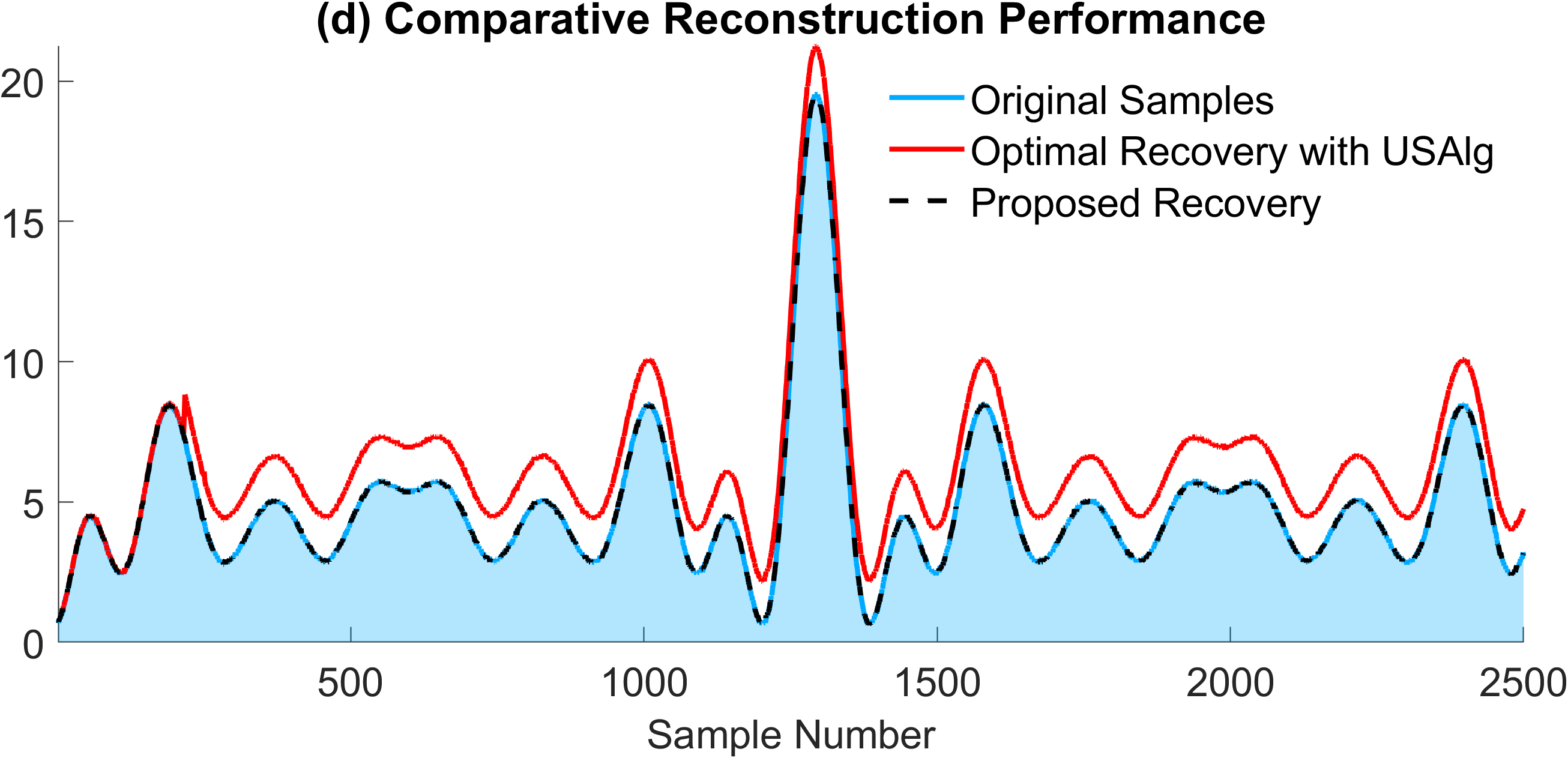}
\label{fig:measured_rec2}}
\caption{Hardware experiment (Bandlimited Signal). (a) Input and output of the proposed encoder. (b) Estimating the folding times by thresholding the filtered data. (c) Reconstruction of the residual function. (d) Input samples and reconstruction with $\mathsf{USAlg}$ with effective threshold and the proposed method. }
\label{fig_measured2}
\end{figure*}

\begin{figure*}[!t]
\centering
\subfloat{\includegraphics[width=.33\textwidth]{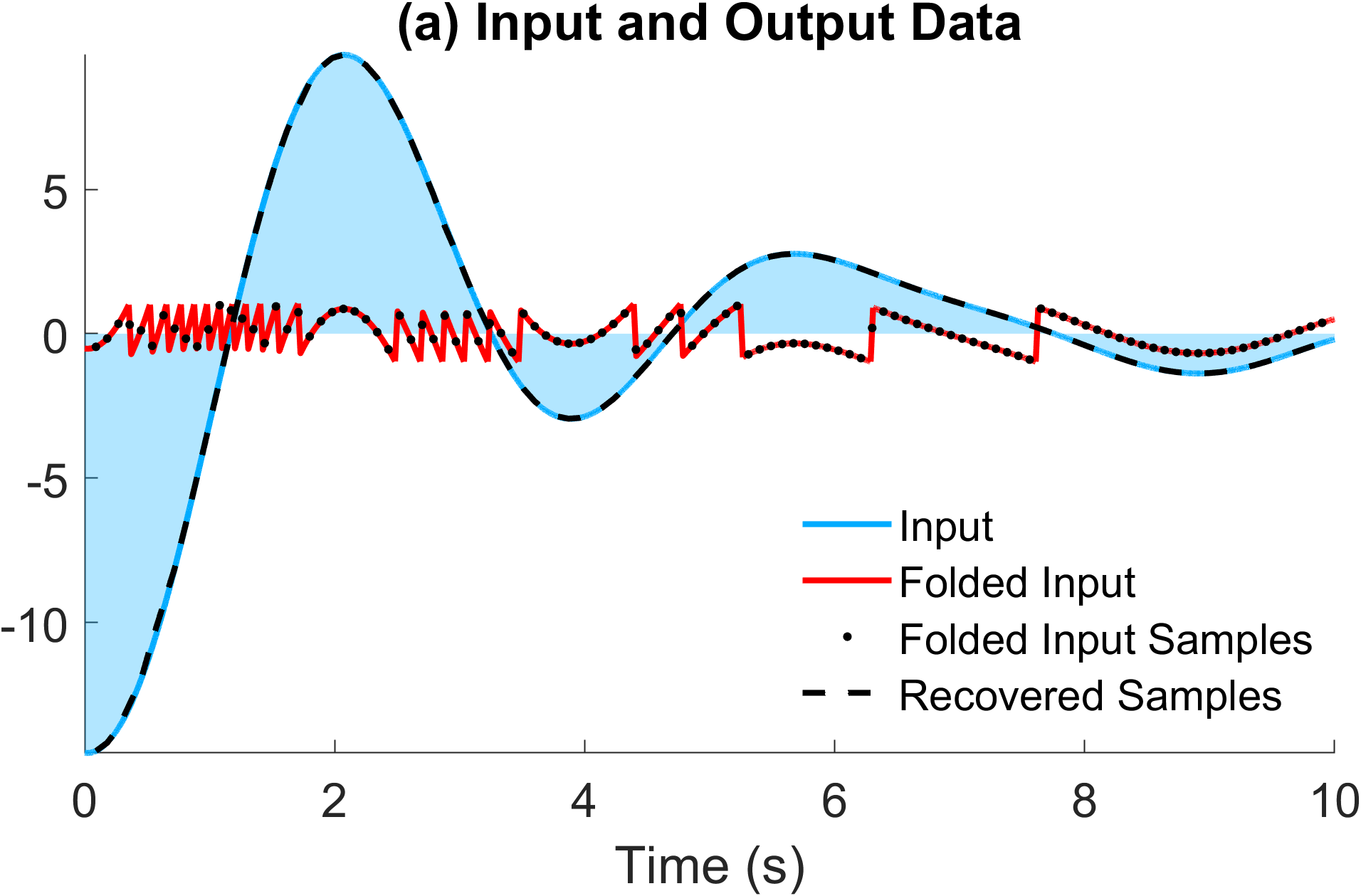}
\label{fig:synth_data}}
\subfloat{\includegraphics[width=.317\textwidth]{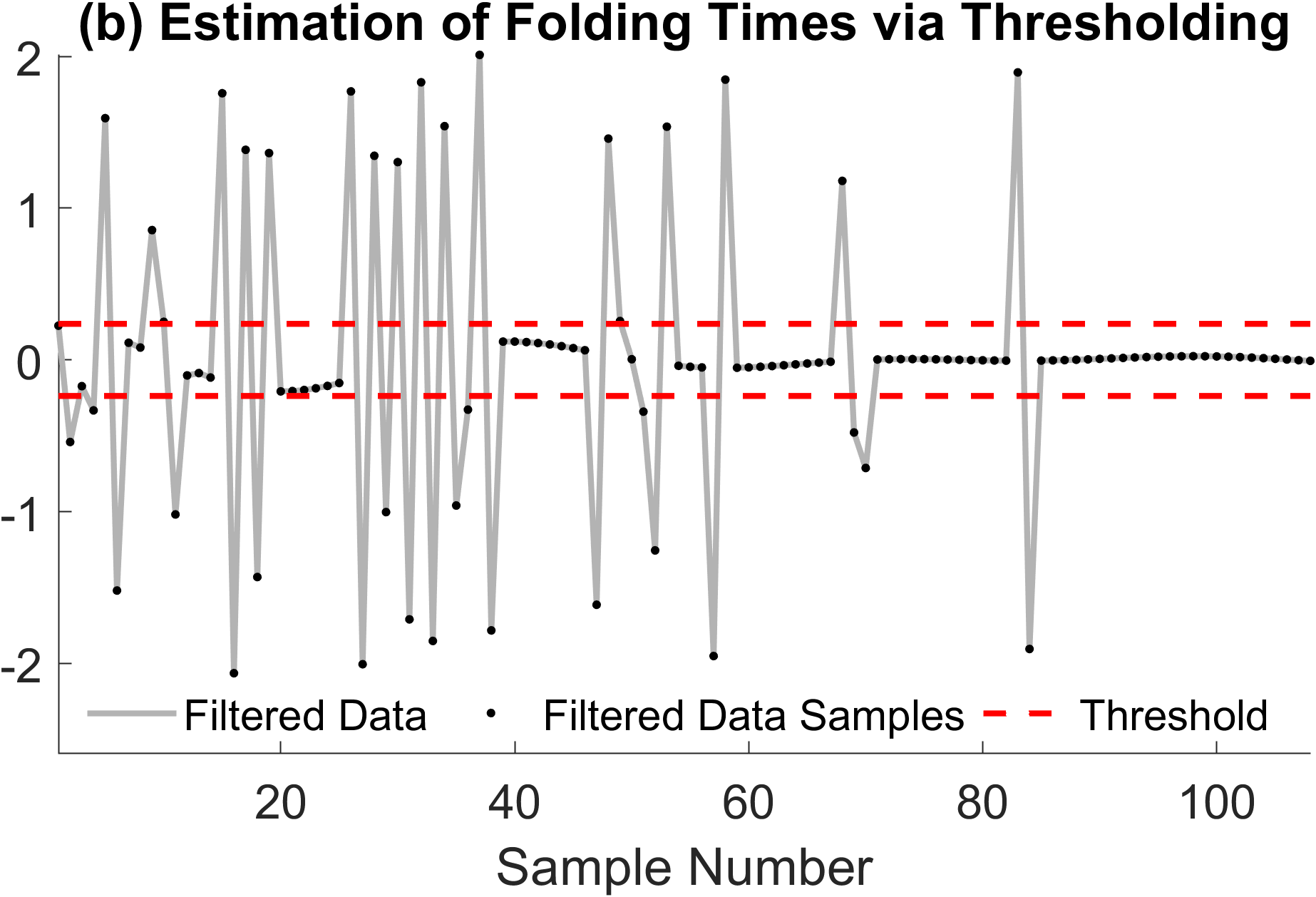}
\label{fig:synth_th}}
\subfloat{\includegraphics[width=.33\textwidth]{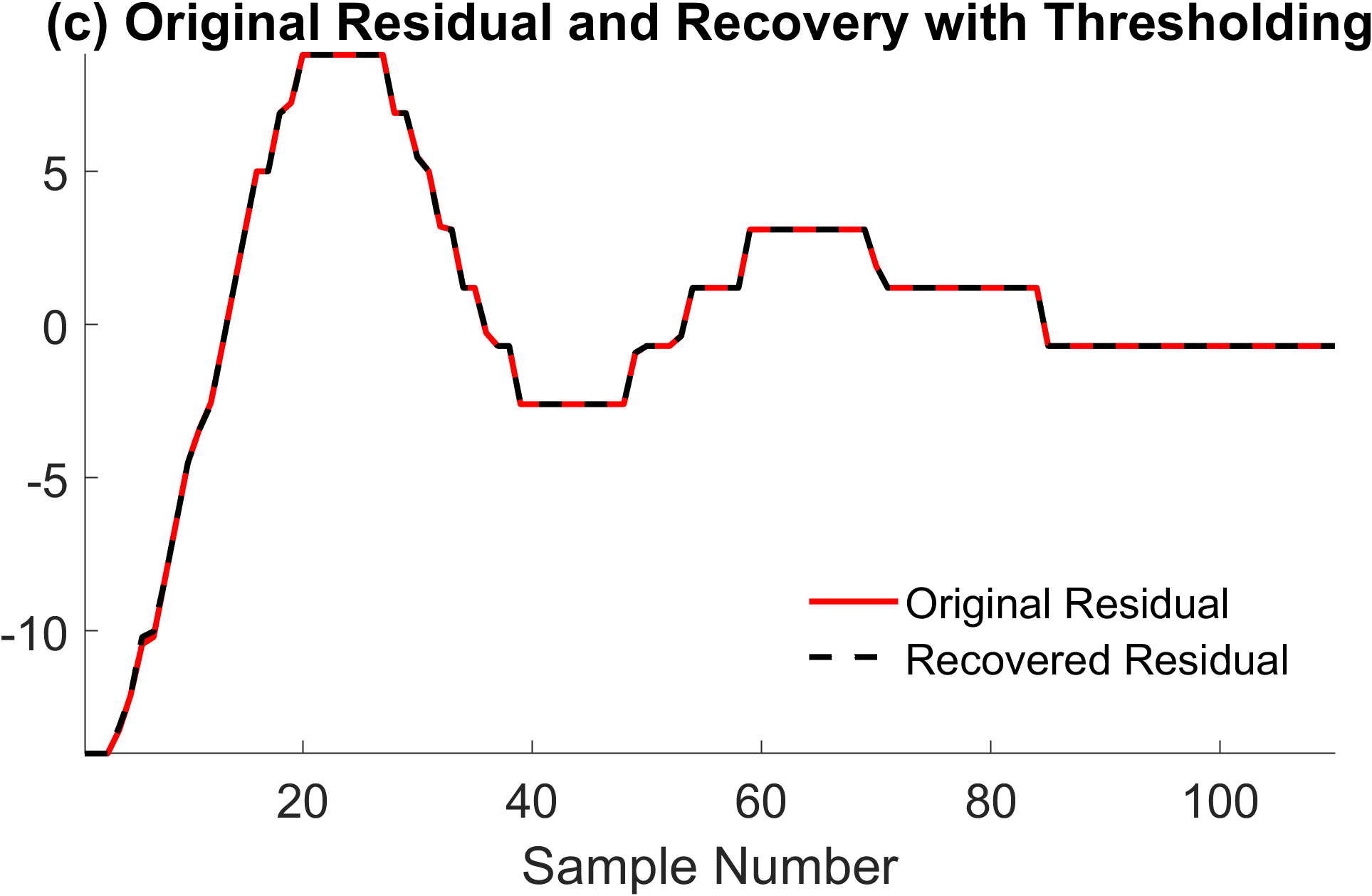}
\label{fig:synth_res}}
\caption{{Simulations with low sampling rates. (a) A randomly generated signal, its modulo samples, and corresponding input reconstruction. (b) Filtered data
and the resulting estimated folding times via thresholding. 
(c) The residual function recovered with the proposed method.}}
\label{fig_sim2}
\end{figure*}

\subsection{Input Recovery at Lower Sampling Rates}

In this experiment, we generated a random non-periodic signal as in Section \ref{sect:experiments_synthetic} with bandwidth $\Omega=2.25\ \mathrm{rad/s}$, centered in $5$ points spaced by $\pi/\Omega$ in $\sqb{0,10}$. The $\mathrm{sinc}$ coefficients were randomly generated using the uniform distribution on $\sqb{-20\lambda,20\lambda}$ with $\lambda=1$. The input is subsequently scaled such that $\norm{g}_\infty=14$. The modulo output, generated with $h=0.1$ and $\alpha=0.02$, is sampled with $T=0.09\ \mathrm{s}$. The input and output are depicted in \fig{fig_sim2}(a). We note that in this experiment condition $\mathrm{TH}_1$ is true, but $\mathrm{TH}_2$ is not satisfied. In fact, the finite difference of the input samples satisfies $\max\vb{\gamma\sqb{k+1}-\gamma\sqb{k}}=1.69\cdot \lambda$. However, recovery with the proposed method is still possible given there is at least one sample between any subsequent folds. 

We first attempt to recover the input using $\mathsf{USAlg}$. For $N=1$, the optimal choice of $\lambda_\mathsf{USAlg}=1.345$ was computed as in Section \ref{sect:experiments_synthetic}, with a corresponding error of $\mathsf{Err}_{\mathsf{USAlg}}=96.8\%$. For $N=2$, the error increased to $\mathsf{Err}_{\mathsf{USAlg}}=300\%$, and subsequently increased exponentially for larger $N$.

We next implement Algorithm \ref{alg:2} for $N=2$ and $\thbar=\lambda_h$, with anchor point $k_0=1$. The anchor point, corresponding to a region of unfolded samples of length $N$, was determined from the thresholded data in \fig{fig_sim2}(b). The reconstructed residual is depicted in \fig{fig_sim2}(c), and subsequently the reconstructed input samples $\widetilde{\gamma}\sqb{k}$ are in \fig{fig_sim2}(a). The recovery error is $\mathsf{Err_{TH}}=0.71\%$. We then increased the dynamic range of the input so that $\norm{g}_\infty=16$, and repeated the same procedure. In this case, we have $\max\vb{\gamma\sqb{k+1}-\gamma\sqb{k}}=1.94\cdot\lambda$, and an error of $\mathsf{Err_{TH}}=1.2\%$.

\section{Proofs}
\label{sect:Proofs}

\bigskip
Here we prove Lemmas 2, 3, Theorem 1 and Proposition 1.

\bigskip
\begin{proof}[Proof of Lemma \ref{lem:supp_psi_eps}]
By the definition of the $n_p$'s, we have that $$\Delta \varepsilon_\gamma =2\lambda_h \sum\nolimits_{p\in\Z} s_p\sqb{\beta_p \delta_{n_p} + \rb{1-\beta_p}\delta_{n_p+1}},$$ where $\beta_p\in\sqb{0,1}$ satisfies
\begin{equation}
\label{eq:beta_p}
\beta_p=\left\lbrace\begin{array}{cc}
\frac{n_p T-\tau_p}{\alpha}, & \quad n_p T<\tau_p+\alpha\\
1, & \quad n_p T\geq\tau_p+\alpha.
\end{array}
\right.
\end{equation}
The result then follows from explicitly calculating finite differences of the Dirac delta function, which amounts to
\begin{equation}
\label{eq:psi_eps}
\psi_N\ast\varepsilon_\gamma\sqb{k}=2\lambda_h s_p \sum\limits_{p\in\Z} \beta_p d^{N-1}_{n_p}+\rb{1-\beta_p}d^{N-1}_{n_p+1},
\end{equation}
where
$d^{N-1}_{n_p}\triangleq\Delta^{N-1}\sqb{k-n_p}$.
\end{proof}

\bigskip
\begin{proof}[Proof of Lemma \ref{lem:k1_k2}]
By \eqref{eq:psi_y} and Lemma \ref{lem:supp_psi_eps}, $\psi_N\ast y_\eta$ and $\psi_N\ast\gamma_\eta$ cannot differ outside of $\mathbb{S}_N$, and hence
\begin{equation}
\label{eq:condI}
\vb{\psi_N\ast y_\eta\sqb{k}}=\vb{\psi_N\ast\gamma_\eta\sqb{k}}<\frac{\lambda_h}{2N}, \forall k \notin \mathbb{S}_N.
\end{equation}
Consequently, given the definition of $\mathbb{M}_N$, we must have $\mathbb{S}_N \supseteq \mathbb{M}_N \supseteq \subSz$. Here, the residual is normalized by defining $e_1\sqb{k}=\frac{\varepsilon_\gamma\sqb{k}}{2\lambda_h s_1}$.
The proof will crucially exploit that, via (\ref{eq:psi_y},\ref{eq:condI}),
\begin{equation}
\label{eq:lower_bound_y}
|\psi_N\ast y_\eta\sqb{k}|
> 2\lambda_h |\rb{\psi_N\ast e_1} [k]| - \frac{\lambda_h}{2N}, \forall k \in \mathbb{Z}.
\end{equation} 
Specifically, we will identify the values $k$ for which $|\rb{\psi_N\ast e_1} [k]|>\frac{1}{2N}$, and deduce from \eqref{eq:lower_bound_y} that these values are all contained in $\mathbb{M}_N$. This yields upper bounds for $k_m$ and lower bounds for $k_M$. Here it makes a noticeable difference if the next sampling time $n_p T$ after $\tau_p$ takes place very close to the edges of the transient or in the middle, or, equivalently, whether the parameter $\beta_1$ is close to the edges of the interval $[0,1]$. More precisely, we consider the following cases.
\begin{enumerate}[leftmargin = *, label = $\alph*)$]
\item   $\beta_1\in \sqb{\frac{1}{2N},1-\frac{1}{2N}} \Leftrightarrow n_1 T \in \mathbb{I}_1$. Then, due to (\ref{eq:psi_eps}),
\end{enumerate}
\begin{equation}
   \label{eq:bound_e12}
\vb{\psi_N\ast e_1\sqb{n_1 +1}}= (1-\beta_1) \vb{\Delta^{N-1}\sqb{0}} \geq\frac{1}{2N}
\end{equation}
and similarly,
\begin{equation}\label{eq:bound_e11} 
\vb{\psi_N\ast e_1\sqb{n_1-\rb{N-1}}}\geq\frac{1}{2N}.
\end{equation}
That is, due to \eqref{eq:lower_bound_y}, $n_1-N+1$ and $n_1+1$ are both contained in $\mathbb{M}_N$. Together with our previous observation that $\subSz \subset \mathbb{S_N}$, we conclude that $k_m=n_1-N+1$, $k_M=n_1+1$, and hence $k_M-k_m=N$, as desired.
\begin{enumerate}[leftmargin = *, label = $\alph*)$, start=2]
\item  $\beta_1\in\sqb{0,1} \setminus\sqb{\frac{1}{2N},1-\frac{1}{2N}} \Leftrightarrow n_1 T \not\in \mathbb{I}_1$.
\end{enumerate}	
\begin{enumerate}[leftmargin = 50pt, label = $b_\arabic*)$]
\item $\beta_1\in\left[0,\frac{1}{2N}\right)$. In this case, \eqref{eq:bound_e12} and the facts that $k_M=n_1+1$ and $k_m\geq n_1-N+1$ can be established as before, but
$\vb{\psi_N\ast e_1\sqb{n_1-\rb{N-1}}}$ can be arbitrarily small. However, the following is satisfied.	
\begin{align}
\nonumber \vb{\psi_N\ast e_1\sqb{n_1-N+2}} & =\vb{\beta_1\Delta^{N-1}\sqb{-N+2}+\rb{1-\beta_1}\Delta^{N-1}\sqb{-\rb{N-1}}}\\
&= \vb{1-\beta_1 N}\geq\frac{1}{2}\geq \frac{1}{2N}.
\label{eq:bound_e13}
\end{align}
Thus $k_m\leq n_1-N+2$, which entails all the desired conditions via \eqref{eq:lower_bound_y}.

\item  $\beta_1\in\left(1-\frac{1}{2N},1\right]$. Here \eqref{eq:bound_e11} and the observations that $k_m=n_1-N+1$ and $k_M\leq n_1+1$ hold as in $a)$, but $\vb{\psi_N\ast e_1\sqb{n_1 +1}}$ cannot be bounded below. In direct analogy to \eqref{eq:bound_e13}, we instead deduce that
\begin{align}
\label{eq:bound_e16}
\vb{\psi_N\ast e_1\sqb{n_1}}&\geq \frac{1}{2} \geq \frac{1}{2N}.
\end{align}
\item $\beta_1=1$. Here we have that $n_p T\geq \tau_p +\alpha$ and due to (\ref{eq:psi_eps}) it follows that $\mathrm{supp} \rb{\psi_N\ast \varepsilon_\gamma}=\cb{n_1-\rb{N-1},\dots,n_1}$, and, due to \eqref{eq:condI}, it follows that $k_m=n_1-\rb{N-1}$ and $k_M=n_1$.
\end{enumerate}
\hskip7pt Thus $k_M\geq n_1$, which again yields all of the desired
 conditions via \eqref{eq:lower_bound_y}.
\end{proof}

\bigskip
\begin{proof}[Proof of Theorem \ref{theo:s1_tau1}]
To show that $\widetilde{s}_1=s_1$, we observe that, 
$$\vb{\psi_N\ast\gamma_\eta\sqb{k_m}}<\frac{\lambda_h}{2N} \quad \mbox{and} \quad \vb{\psi_N\ast y_\eta\sqb{k_m}}>\frac{\lambda_h}{2N},$$ 
which implies $\mathrm{sign} \rb{\psi_N\ast y_\eta \sqb{k_m}}=-\mathrm{sign} \rb{\psi_N\ast \varepsilon_\gamma \sqb{k_m}}$, and due to (\ref{eq:psi_eps})
\begin{equation}
\label{eq:sign}
\begin{split}
\mathrm{sign} \rb{\psi_N\ast y_\eta \sqb{k_m}}&=-s_1 . \mathrm{sign} \left(\beta_1\Delta^{N-1}\sqb{k_m-n_1}\right.+ \left.\rb{1-\beta_1}\Delta^{N-1}\sqb{k_m-n_1-1}\right).
\end{split}
\end{equation}
Thus we can compute $s_1$ by dividing the left hand side by the second factor on the right-hand side of \eqref{eq:sign}, which we derive in the following for each particular case.
\begin{enumerate}[label=\alph*)]
    \item $k_M-k_m=\const{N}$. Here the facts that $k_m=n_1-\rb{N-1}$ and $\widetilde{n}_1=n_1$ are direct byproducts of the proof of Lemma \ref{lem:k1_k2}. 
Consequently, $\Delta^{N-1}\sqb{k_m-n_1-1}=0$ and $s_1=\widetilde{s}_1$ follows from \eqref{eq:sign} given that $\Delta^{N-1}\sqb{-N+1}=1$. 
By (\ref{eq:mid_transient}-\ref{eq:after_transient}) the first case of the definition of $\beta_1$ in \eqref{eq:beta_p} takes effect and one has that $\tau_1= n_1 T+\alpha \beta_1$. Hence,
\begin{equation*}
    |\widetilde{\tau}_1 -\tau_1| = \ \alpha\left|\beta_1-\frac{(-1)^{N-1}\psi_N\ast y_\eta\sqb{n_1}+2\lambda_h s_1}{2\lambda_h s_1}\right|.
\end{equation*}
The expression of $\beta_1$ is derived from (\ref{eq:psi_y},\ref{eq:psi_eps}), which yields 
\begin{equation*}
\vb{\widetilde{\tau}_1-\tau_1} =  \frac{\alpha\vb{\psi_N\ast \gamma_\eta\sqb{n_1}}}{2\lambda_h  N},
\end{equation*}
which according to \eqref{eq:condIa} proves \eqref{eq:tau_p_bound}.
\item $k_M-k_m=\const{N}-1$. Similarly to part a), {Equations} \eqref{eq:bound_start_transient} and \eqref{eq:bound_end_transient} are direct byproducts of the proof of Lemma~\ref{lem:k1_k2}.
To show that $\widetilde{s}_1 =s_1$, we recall from Lemma~\ref{lem:k1_k2} that one has either $k_m=n_1-N+1 \Rightarrow \beta_1\in\sqb{0,1}$ or $k_m=n_1-N+2 \Rightarrow \beta_1\in \left[0,\frac{1}{2N}\right)$. We evaluate \eqref{eq:sign} for each case, where $\Delta^{N-1}\sqb{-N}=0$, $\Delta^{N-1}\sqb{-N+1}=1$, and $\Delta^{N-1}\sqb{-N+2}=-\rb{N-1}$, which yields \eqref{eq:s_1}.
\end{enumerate}
\end{proof}

\bigskip
\begin{proof}[Proof of Proposition \ref{prop:mse}]
Given $K$ samples, the error satisfies
\begin{align}
\mathsf{MSE}\rb{\widetilde{\gamma},\gamma}
&= \frac{1}{K}\sum\limits_{k = 1}^K {{{\left( {{{\widetilde \varepsilon }_\gamma }\left[ k \right] - {\varepsilon _\gamma }\left[ k \right]} \right)}^2}}
 = \frac{1}{K}\sum\limits_{k = 1}^K {{{\left( {\sum\limits_{p = 1}^P {s_p}\ETP{kT} } \right)}^2}}
 \label{eq:exprMSE}
\end{align}
where $\ETP{t}= \varepsilon _0^{{{\widetilde \tau }_p}}\left( {t} \right) - \varepsilon _0^{{\tau _p}}\left( {t} \right)$ and $\varepsilon_0^{\tau^*}\rb{t}=\varepsilon_0\rb{t-{\tau}^*}$, $\forall\tau^*\in\mathbb{R}$, so $E_p(t)=0$ for all $t\notin \mathbb{A}$, where $$\mathbb{A}=\left(\min \cb{\tau_p,\widetilde{\tau}_p},\max \cb{\tau_p+\alpha,\widetilde{\tau}_p+\alpha}\right)$$ is a set including both transient regions.
As Theorem \ref{theo:s1_tau1} yields that the two elements in each pair $\rb{\widetilde{\tau}_p, \tau_p}$ are relatively close, this implies that many of the terms in \eqref{eq:exprMSE} are $0$. 
Specifically, we will show that $\ETP{kT} = 0, \forall k \neq n_p$. 
There are two cases
\begin{enumerate}[leftmargin = *, label = $\alph*)$]
\item $k_M^p-k_m^p=N$. 
Then due to Theorem \ref{theo:s1_tau1}, we have that 
$\widetilde{\tau}_p \in \sqb{\tau_p-\frac{\alpha}{4N^2},\tau_p+\frac{\alpha}{4N^2}}$. Observe that it follows from (\ref{eq:mid_transient},\ref{eq:start_end_transient}) that $n_p T\in [\tau_p, \tau_p+\alpha[ $. Furthermore, $$T\geq \alpha +\frac{\alpha}{4N^2}, \quad (n_p-1)T \leq \tau_p-\frac{\alpha}{4N^2} \quad \mbox{and} \quad (n_p+1)T \geq \tau_p+\alpha+\tfrac{\alpha}{4N^2},$$ so $n_pT$ is the only sampling time in $\mathbb{A}$.
It follows that
\end{enumerate}
\begin{align}
\rb{\ETP{n_pT}}^2 &=\rb{2 \lambda_h \sqb{\frac{n_p T-\widetilde{\tau}_p}{\alpha}-\frac{n_p T-{\tau}_p}{\alpha}}}^2
=4\lambda_h^2 \frac{\rb{\widetilde{\tau}_p-\tau_p}^2}{\alpha^2}\leq  \frac{\lambda_h^2}{4N^4}.
\label{eq:bound_errora}
\end{align}
\begin{enumerate}[leftmargin = *, label = $\alph*)$, start=2]
\item $k_M^p-k_m^p=N-1 \Rightarrow \widetilde{\tau}_p=\widetilde{n}_pT$. We have two cases
\begin{enumerate}[leftmargin = 50pt, label = $b_\arabic*)$]
\label{item:b1}
\item $n_p T \in \left[\tau_p, \tau_p+\frac{\alpha}{2N}\right) \Rightarrow \widetilde{\tau}_p=n_p T$, which directly implies that $kT\notin \mathbb A$ for $k\neq n_1$.
Furthermore, in analogy to \eqref{eq:bound_errora}, we obtain that

\begin{align}
{\rb{\ETP{n_p T}}^2}
& =\rb{2 \lambda_h \sqb{0-\frac{n_p T-{\tau}_p}{\alpha}}}^2 \leq \frac{\lambda_h^2}{N^2}.
\label{eq:bound_errorb}
\end{align}
\item $n_p T \in \left( \tau_p + \alpha -\frac{\alpha}{2N}, \tau_p+T \right)$ $ \Rightarrow \widetilde{\tau}_p=\rb{n_p-1}T\in \left( \tau_p+\alpha - \tfrac{\alpha}{2N}-T, \tau_p\right)$, which again implies that 
$n_pT$ is the only sampling time in $\mathbb{A}$. Noting that the range of $\widetilde\tau_p$ implies that $\varepsilon_0^{\widetilde \tau_p} = 2\lambda_h$, we observe that
\begin{equation}
    \rb{\ETP{n_p T}}^2=\rb{2\lambda_h-\varepsilon_0^{\tau_p}\rb{n_p T}}^2.
\end{equation}
Now $2\lambda_h\geq \varepsilon_0^{\tau_p}\rb{n_p T} \geq \varepsilon_0^{\tau_p}\rb{\tau_p+\alpha-\frac{\alpha}{2N}}=\varepsilon_0\rb{\alpha-\frac{\alpha}{2N}}=2\lambda_h \rb{1-\frac{1}{2N}}$ yields that, $$2\lambda_h-\varepsilon_0^{\tau_p}\rb{n_p T} \in \sqb{0,\frac{\lambda_h}{N}} \Rightarrow \rb{\ETP{n_p T}}^2\leq \frac{\lambda_h^2}{N^2}.$$
\end{enumerate}
\end{enumerate}

Then, the only potentially non-zero terms in \eqref{eq:exprMSE} are for $k=n_{p^*}$ and $p=p^*$ for $p^*=1,\dots,P$, which amounts to
\begin{align*}
\mathsf{MSE}\rb{\widetilde{\gamma},\gamma} &= \frac{1}{K} \sum\limits_{p^*=1}^P \rb{ s_{p^*} \ETPP{p^*}{n_{p^*}T}^2}\\
&\leq \frac{P}{K} \max \ETPP{p^*}{n_{p^*}T}^2\leq \frac{P}{K} \cdot \frac{\lambda_h^2}{N^2}.
\end{align*}
The value of $P$ can be further bounded if we consider that the minimum distance between two analog folding times satisfies \eqref{eq:min_dist_analog_folds}. Therefore, $P$ is bounded by $P\leq \rb{{KT\Omega g_\infty}/{h^*}}+1$ which completes the proof.
\end{proof}

\section{Conclusion}
\label{sect:conclusion}

 {
In this paper, we introduced a \emph{computational sensing} approach for single-shot high dynamic range (HDR) signal reconstruction, which allows to compensate imprecise hardware by sophisticated algorithms that can account for the model mismatch in hardware implementations. 
To this end we introduce an end-to-end pipeline consisting of a generalized modulo encoding model, a theoretically guaranteed reconstruction method and an associated hardware prototype.
The pipeline leverages the concept of hysteresis, caused by the misalignment of the reset threshold and post-reset value, in several unique ways. In modeling, hysteresis allows the representation of real data generated by circuits with reduced calibration. Instead of being detrimental on the recovery front, surprisingly, hysteresis enables introducing recovery guarantees for the generalized model in the noiseless and noisy scenarios. A hardware prototype with off-the-shelf components demonstrates the effect of hysteresis and is used as a testbed for experimental demonstration.
Specifically, in experiments we report reconstruction of signals that are {$28$ times} larger than the ADC threshold. Owing to the co-design approach that involves both hardware and algorithms, our work raises a number of interesting questions for future discussion. 
}
\subsection{Future Work}

\begin{enumerate}
\item A direct next step in our research is to consider wider classes of signals (beyond the bandlimited input model). This motivates the study of optimal filter design and new reconstruction algorithms specifically tailored for particular signal classes. 

\item The methodology is currently assuming that the sampling period is larger than the folding transient time $T\geq \alpha$. Transient times observed experimentally are very small, so this is not a restriction in practice. However, the benefits outlined in this paper motivates the design and analysis of artificially longer transient times to further extend the proposed methodology. Having several samples in the transient period would likely provide more accurate reconstructions.
\item Currently our assumption is that there is a sampling time between any two subsequent folds. However, given that the number of samples between folds is quite variable, this can be extended to the case where the number of folds is at most equal to the number of samples, leading to further reduced recovery conditions.
\end{enumerate}

\ifCLASSOPTIONcaptionsoff
\newpage
\fi

\bibliographystyle{IEEEtran}
\bibliography{IEEEabrv.bib,Ref2020new.bib}

\clearpage

\end{document}